\def\ps@pprintTitle{%
 \let\@oddhead\@empty
 \let\@evenhead\@empty
 \def\@oddfoot{}%
 \let\@evenfoot\@oddfoot}
\newcolumntype{Y}{>{\raggedright\arraybackslash}X}
\newcolumntype{Z}{>{\centering\arraybackslash}X}
\newlength\Colsep
\newtheorem{theorem}{Theorem}
\newtheorem{definition}[theorem]{Definition}
\newtheorem{proposition}[theorem]{Proposition}
\newtheorem{lemma}[theorem]{Lemma}
\newtheorem{corollary}[theorem]{Corollary}
\newenvironment{proof}{\textit{Proof.}}{\hfill $\ \square \ $ \\}
\tikzstyle{line}=[draw]
\definecolor{colornodo}{RGB}{102,194,165}
\definecolor{colornodotexto}{RGB}{0,0,0}
\definecolor{coloredge}{RGB}{41,41,255}
\definecolor{colorinter1}{RGB}{255,0,0}
\definecolor{colorinter2}{RGB}{0,200,0}
\definecolor{colorinter3}{RGB}{0,0,255}
\definecolor{cyan}{HTML}{007079}
\definecolor{darkcyan}{HTML}{333333}
\definecolor{orange}{rgb}{0.8,0,0}
\definecolor{darkyellow}{HTML}{333333}
\definecolor{colornododiag}{rgb}{0.8,0,0}
\definecolor{colornododiag2}{RGB}{0,0,255}
\definecolor{colornododiag3}{HTML}{666666}
\definecolor{colornododiag4}{RGB}{255,0,255}
\tikzset{ n1/.style={circle,scale=1.0},
n2/.style={circle,fill=colornododiag,scale=0.5},
n3/.style={circle,draw,fill=colornodo,draw=colornodo,text=colornodotexto,scale=0.9},
e1/.style={line width=0.3mm}, e3/.style={draw=coloredge,line
width=0.7mm}, inter/.style={line width=0.85mm},
e2/.style={draw=black,line width=0.5mm}, c1/.style={line
width=0.3mm}, c2/.style={line width=0.2mm} }
\newcommand{\vertex}[4][black]{
    \draw[#1, fill=#1, inner sep=0pt] (#2, #3) circle (0.12) node(#4){};
}
\newcommand{\wvertex}[3]{
    \draw[black, fill=white, inner sep=0pt] (#1, #2) circle (0.12) node(#3){};
}
\newcommand{\vertexLabel}[3][above]{
    \path (#2) node[#1]{#3};
}
\newcommand{\edge}[3][]{
    \draw[#1] (#2) -- (#3);
}
 \DeclareMathOperator{\pw}{pw}
 \DeclareMathOperator{\bw}{bw}
 \DeclareMathOperator{\thin}{thin}
\DeclareMathOperator{\pthin}{pthin}
\DeclareMathOperator{\indthin}{thin_{ind}}
\DeclareMathOperator{\indpthin}{pthin_{ind}}
 \DeclareMathOperator{\diam}{diam}
\newcommand{\Lr}{\textsf{L}}
\newcommand{\Lxy}{\textsf{\raisebox{\depth}{\scalebox{-1}[-1]{L}}}}
\date{}
\begin{document}

\begin{frontmatter}
  \title{Intersection models and forbidden pattern characterizations for 2-thin and proper 2-thin
  graphs}

\author[UBA,ICC]{Flavia Bonomo-Braberman}
\ead{fbonomo@dc.uba.ar}

\author[UBA]{Gast\'on Abel Brito}

\address[UBA]{Universidad de Buenos Aires. Facultad de Ciencias Exactas y Naturales. Departamento de Computaci\'on. Buenos Aires,
Argentina.}
\address[ICC]{CONICET-Universidad de Buenos Aires. Instituto de
Investigaci\'on en Ciencias de la Computaci\'on (ICC). Buenos
Aires, Argentina.}

\begin{abstract}
 The \emph{thinness} of a graph is a width parameter that generalizes some
properties of interval graphs, which are exactly the graphs of
thinness one. Graphs with thinness at most two include, for
example, bipartite convex graphs. Many NP-complete problems can be
solved in polynomial time for graphs with bounded thinness, given
a suitable representation of the graph. \emph{Proper thinness} is
defined analogously, generalizing proper interval graphs, and a
larger family of NP-complete problems are known to be polynomially
solvable for graphs with bounded proper thinness.

The complexity of recognizing 2-thin and proper 2-thin graphs is
still open. In this work, we present characterizations of 2-thin
and proper 2-thin graphs as intersection graphs of rectangles in
the plane, as vertex intersection graphs of paths on a grid (VPG
graphs), and by forbidden ordered patterns. We also prove that
independent 2-thin graphs are exactly the interval bigraphs, and
that proper independent 2-thin graphs are exactly the bipartite
permutation graphs.

Finally, we take a step towards placing the thinness and its
variations in the landscape of width parameters, by upper bounding
the proper thinness in terms of the bandwidth.
\end{abstract}
\begin{keyword}
bipartite permutation graphs, forbidden patterns, intersection
graphs of rectangles, interval bigraphs, thinness, VPG graphs
\end{keyword}
\end{frontmatter}

\section{Introduction}\label{sec:intro}

\nocite{BBOSSS-thin-coloring-arxiv, Mixed-thin-wg,
Raf-rec-int-big}

A large family of graph width parameters have been studied since
the introduction of treewidth in the 80's~\cite{R-S-minors3-pltw},
some of them very recently defined, like
twin-width~\cite{BKTW20twinw} in 2020. For each width parameter, a
growing family of NP-complete problems are known to be
polynomial-time solvable on graph classes of bounded width. Thus,
it is interesting to find structural properties of a graph class
that ensure bounded width.

These structural properties can be described, for instance, by
forbidden induced subgraphs, or by the existence of vertex
orderings avoiding a family of patterns, or by the existence of
certain intersection models.

It is also useful to know whether a width parameter $\rho$ can be
bounded by a function of another width parameter $\rho'$. In that
case, every class of graphs $\mathcal{C}$ that has bounded
$\rho'$, has also bounded $\rho$. As a trade-off, every problem
that admits an efficient algorithm parameterized by $\rho$, admits
also an efficient algorithm parameterized by $\rho'$.

\subsection{Our focus}

In this work we will focus on a width parameter called the
\emph{thinness} of a graph. Intuitively speaking, the thinness of
a graph $G$ is a measure of how close $G$ is to an interval graph,
with the interval graphs being the class of graphs with
thinness~1. Similarly, proper thinness measures how close $G$ is
to a proper interval graph, with the proper interval graphs being
the class of graphs with proper thinness~1.

This similarity allows to generalize techniques for (proper)
interval graphs to (proper) $k$-thin graphs. It is worth
mentioning that thinness and mim-width~\cite{VatshelleThesis} are
two of the few width parameters that are bounded on interval
graphs and allow such algorithmic generalizations. In the case of
thinness, a large family of combinatorial optimization problems
become polynomial-time solvable for graphs of bounded thinness
(given a suitable
representation)~\cite{B-D-thinness,B-M-O-thin-tcs,M-O-R-C-thinness}.
They can be described shortly as optimization versions of list
matrix partition problems with the possibility of adding some
cardinality constraints. This family generalizes, for instance,
maximum weighted clique (whose unweighted version is NP-complete
for graphs of mim-width at most~6, even when the representation is
given~\cite{VatshelleThesis}), maximum weighted independent set
(whose unweighted version is NP-complete on boxicity 2 graphs,
even when the rectangle model is
given~\cite{F-P-T-2-box-indep,I-A-2-box-indep}), and list
$t$-coloring with constant $t$. For graphs of bounded proper
thinness (given a suitable representation), some domination-like
constraints can be added to the problem
formulation~\cite{B-D-thinness}.  However, classes of bounded
thinness are rich enough that the coloring problem (number of
colors being part of the input) is NP-complete even for proper
2-thin graphs~\cite{BBOSSS-thin-coloring-arxiv}.

%..... recognition

\subsection{Thinness, proper thinness, and (proper) independent thinness}

A graph $G=(V,E)$ is \emph{$k$-thin} if there exist an ordering
$v_1, \dots , v_n$ of $V$ and a partition of $V$ into $k$ classes
$(V^1,\dots,V^k)$ such that, for each triple $(r,s,t)$ with
$r<s<t$, if $v_r$, $v_s$ belong to the same class and $v_t v_r \in
E$, then $v_t v_s \in E$. Such an ordering and partition are
called \emph{consistent}. The minimum $k$ such that $G$ is
$k$-thin is called the \emph{thinness} of $G$ and is denoted by
$\thin(G)$. The thinness is unbounded on the class of all graphs,
and graphs with bounded thinness were introduced by Mannino,
Oriolo, Ricci and Chandran in~\cite{M-O-R-C-thinness} as a
generalization of \emph{interval graphs} (intersection graphs of
intervals of the real line), which are exactly the $1$-thin
graphs~\cite{Ola-interval}. Graphs of thinness at most two
include, for example, convex bipartite
graphs~\cite{B-B-M-P-convex-wads}.

In~\cite{B-D-thinness}, the concept of \emph{proper thinness} is
defined in order to obtain an analogous generalization of
\emph{proper interval graphs} (intersection graphs of intervals of
the real line such that no interval properly contains another). A
graph $G=(V,E)$ is \emph{proper $k$-thin} if there exist an
ordering and a partition of $V$ into $k$ classes such that both
the ordering and its reverse are consistent with the partition.
Such an ordering and partition are called \emph{strongly
consistent}.  The minimum $k$ such that $G$ is proper $k$-thin is
called the \emph{proper thinness} of $G$ and is denoted by
$\pthin(G)$. Proper interval graphs are exactly the proper
$1$-thin graphs~\cite{Rob-uig}, and in~\cite{B-D-thinness} it is
proved that the proper thinness is unbounded on the class of
interval graphs. Examples of thin representations of graphs are
shown in Figure~\ref{fig:thinness-ex}.

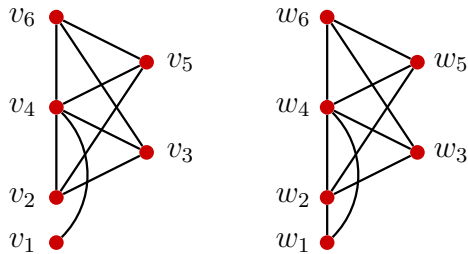
\begin{figure}
\begin{center}
\begin{tikzpicture}[scale=0.6]

\node(v1) at (1, 1)  [n2] {}; \node(v2) at (1, 2)  [n2] {};
\node(v3) at (3, 3)  [n2] {}; \node(v4) at (1, 4)  [n2] {};
\node(v5) at (3, 5)  [n2] {}; \node(v6) at (1, 6)  [n2] {};

\node(w1) at (7, 1)  [n2] {}; \node(w2) at (7, 2)  [n2] {};
\node(w3) at (9, 3)  [n2] {}; \node(w4) at (7, 4)  [n2] {};
\node(w5) at (9, 5)  [n2] {}; \node(w6) at (7, 6)  [n2] {};

\node(lv1) at (0.25, 1) {$v_1$}; \node(lv2) at (0.25, 2) {$v_2$};
\node(lv3) at (3.75, 3) {$v_3$}; \node(lv4) at (0.25, 4) {$v_4$};
\node(lv5) at (3.75, 5) {$v_5$}; \node(lv6) at (0.25, 6) {$v_6$};

\node(lw1) at (6.25, 1)  {$w_1$}; \node(lw2) at (6.25, 2)
{$w_2$}; \node(lw3) at (9.75, 3)  {$w_3$}; \node(lw4) at (6.25, 4)
{$w_4$}; \node(lw5) at (9.75, 5)  {$w_5$}; \node(lw6) at (6.25, 6)
{$w_6$};

\path (v1) edge [e1,bend right=45] (v4); \path (v4) edge [e1]
(v2); \path (v4) edge [e1] (v3); \path (v4) edge [e1] (v5); \path
(v4) edge [e1] (v6); \path (v2) edge [e1] (v3); \path (v2) edge
[e1] (v5); \path (v3) edge [e1] (v6); \path (v5) edge [e1] (v6);

\path (w1) edge [e1,bend right=45] (w4); \path (w4) edge [e1]
(w2); \path (w4) edge [e1] (w3); \path (w4) edge [e1] (w5); \path
(w4) edge [e1] (w6); \path (w2) edge [e1] (w3); \path (w2) edge
[e1] (w5); \path (w3) edge [e1] (w6); \path (w5) edge [e1] (w6);
\path (w1) edge [e1] (w2);

\end{tikzpicture}
\end{center}

\caption{A 2-thin graph (whose proper thinness is 3) and a proper
2-thin graph. The vertices are ordered increasingly by their
$y$-coordinate, and the classes correspond to the vertical lines,
i.e., $\{v_1,v_2,v_4,v_6\}$ and $\{v_3,v_5\}$ are the classes of
the first graph, $\{w_1,w_2,w_4,w_6\}$ and $\{w_3,w_5\}$ are the
classes of the second graph. }\label{fig:thinness-ex}
\end{figure}

In~\cite{BGOSS-thin-oper}, the concept of \emph{(proper)
independent thinness} was introduced in order to bound the
(proper) thinness of the lexicographical and direct products of
two graphs. In this case it is required, additionally, the classes
of the partition to be independent sets. These concepts are
denoted by $\indthin(G)$ and $\indpthin(G)$, respectively.

\subsection{Algorithmic aspects of thinness}

The recognition problem for $k$-thin graphs is
NP-complete~\cite{Shitov-thin}, and the recognition problem for
proper $k$-thin graphs is open. Both problems remain open when $k
\geq 2$ is fixed. Some related algorithmic problems were also
studied: partition into a minimum number of classes (strongly)
consistent with a given vertex
ordering~\cite{B-D-thinness,B-M-O-thin-tcs}, existence of a vertex
ordering (strongly) consistent with a given vertex
partition~\cite{B-D-thinness}.

In this work, in order to prove the intersection model characterizations, we will deal with the problem of the existence of a vertex ordering (strongly) consistent with a given vertex partition and that extends a partial order of the vertices that is a total order when restricted to each of the parts. We will call this problem \textsc{(Strongly) Consistent Extending Order}, or \textsc{(S)CEO}, and we will solve it in Section~\ref{sec:algo}. \\

\noindent {\textsc{(Strongly) Consistent Extending Order -- (S)CEO}}\\
\textbf{Instance:} A graph $G$, a partition $\Pi =
\{V^1,\dots,V^k\}$ and a partial order ${<}$ of $V(G)$ that is
total and (strongly) consistent restricted to each $V^j$, $1 \leq j \leq k$. \\
\textbf{Question:} Does there exist a total ordering of $V(G)$ extending $<$ and (strongly) consistent with $\Pi$?\\

Also, as a corollary of characterization theorems for independent
2-thin graphs and proper independent 2-thin graphs in
Section~\ref{sec:pat}, it follows that both classes can be
recognized in polynomial time.

We summarize in Table~\ref{tab:algo} the computational complexity
of the different algorithmic problems related to (proper)
(independent) thinness.

\begin{table}[tp]
\small
    \centering
\renewcommand{\tabularxcolumn}[1]{>{\arraybackslash}m{#1}}
    \begin{tabularx}{\textwidth}{|Y|Z|Z|Z|}
    \hline
        \textbf{Question}  & \textbf{Consistency} & \textbf{Strong Consistency} & \textbf{References} \\ \hline
    Existence of $k$-partition and order & NP-c & open & \cite{Shitov-thin} \\ \hline
    Existence of $k$-partition and order, fixed $k$ & open for $k \geq 2$, \linebreak P for $k=1$  & open for $k \geq 2$,  \linebreak P for $k=1$  & \cite{B-L-PQtrees,F-M-M-prop-int} \\ \hline
        Minimum partition, order given & P & P & \cite{B-M-O-thin-tcs,B-D-thinness} \\ \hline
        Existence of order, $k$-partition given & NP-c & NP-c & \cite{B-D-thinness} \\ \hline
        Existence of order, $k$-partition given, fixed $k$ & open for $k \geq 2$, \linebreak P for $k=1$ & open for $k \geq 2$, \linebreak P for $k=1$ & \cite{B-L-PQtrees,F-M-M-prop-int} \\ \hline
        \textsc{(S)CEO} & P & P & Corollary~\ref{cor:sceo} \\ \hline
            Existence of independent $k$-partition and order &  open for input $k$ \linebreak or fixed $k \geq 3$,  \linebreak P for $k=2$, \linebreak trivial for $k=1$ & open for input $k$ \linebreak or fixed $k \geq 3$,  \linebreak P for $k=2$, \linebreak trivial for $k=1$ & Cors~\ref{cor:indep-2-thin-rec} and~\ref{cor:indep-2-pthin-rec} \\ \hline
                Minimum independent partition, order given  &  P & P & Corollary~\ref{cor:ind-cocomp} \\ \hline
    \end{tabularx}
    \caption{Survey of the computational complexity of algorithmic questions related to thinness, proper thinness, independent thinness, and proper independent thinness.}
    \label{tab:algo}
\end{table}

%completar con lo de independent...

\subsection{Graph intersection models}

Several graph classes are defined by means of a geometrical
intersection model, being the most prominent of such classes the
class of interval graphs, introduced by Haj\'os in
1957~\cite{Haj-int}. Moreover, many of these classes are
generalizations of interval graphs, like circular-arc
graphs~\cite{Gav-circ-arc}, vertex and edge intersection graphs of
paths on a grid~\cite{asinowski,Golumbic-epg}, and graphs with
bounded boxicity. The \emph{boxicity} of a graph, introduced by
Roberts in 1969~\cite{Rob-box}, is the minimum dimension in which
a given graph can be represented as an intersection graph of
axis-parallel boxes. Chandran, Mannino, and Oriolo
in~\cite{C-M-O-thinness-man} proved that boxicity 2 graphs have
unbounded thinness, but $k$-thin graphs have boxicity at most $k$.
In particular, 2-thin graphs are a subclass of boxicity 2 graphs,
i.e., intersection graphs of axis-parallel rectangles in the
plane. Their proof is constructive, and in their boxicity 2 model
for 2-thin graphs, the upper-right corners of the rectangles lie
in two diagonals, according to the class the corresponding vertex
belongs to. We call this a 2-diagonal model, and we show in
Propositions~\ref{prop:rep} and~\ref{prop:th3} that there are
graphs having a 2-diagonal model which are not 2-thin. Thus, we do
in Section~\ref{sec:box} a slight modification to the model
in~\cite{M-O-R-C-thinness} for 2-thin graphs, needed to obtain,
together with the 2-diagonal property, a characterization. Namely,
we modify it to satisfy a further property that we call
\emph{blocking}, and we prove in Theorem~\ref{thm:char-2-thin}
that a graph is 2-thin if and only if it has a blocking 2-diagonal
model. We obtain in Theorem~\ref{thm:char-2-pthin} a similar
characterization for proper 2-thin graphs.

Notice that when restricting the upper-right corners of the
rectangles to lie in one diagonal, we obtain the class of interval
graphs, i.e., 1-thin graphs. The definition of \emph{p-box
graphs}~\cite{S-T-p-box} ``looks'' similar, since they are the
intersection graphs of rectangles whose lower-right corners lie in
a diagonal, but the classes of interval graphs and p-box graphs
are very different. The models with this kind of restrictions,
like endpoints, corners or sides of the geometrical objects lying
on a line, are known in the literature as \emph{grounded models}
(see Figure~\ref{fig:classes}).

\begin{figure}
\begin{center}
\includegraphics[width=.9\textwidth]{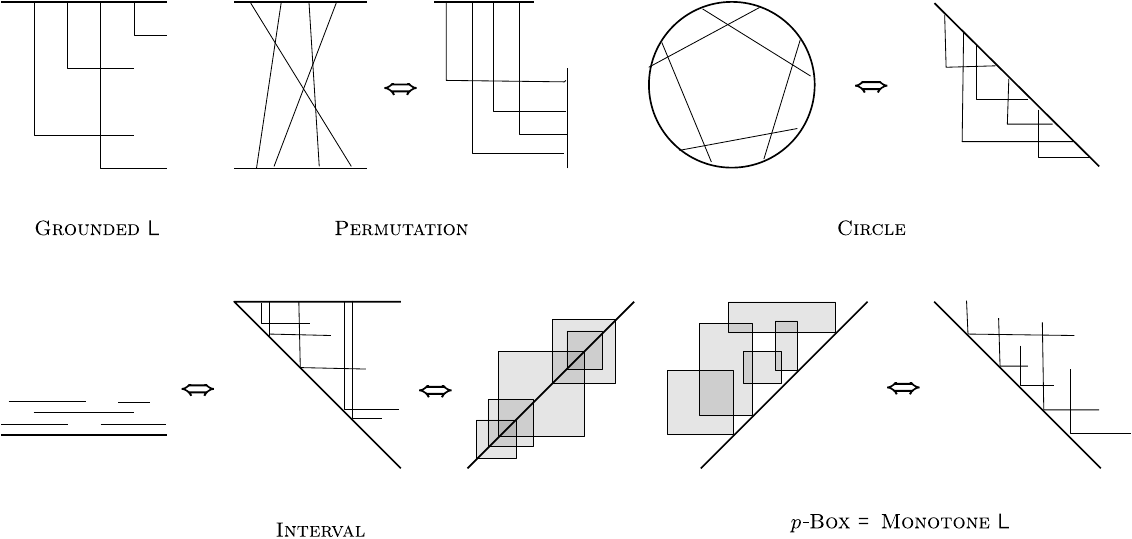}
\caption{Other graph classes defined or characterized by grounded
box models or grounded \textsf{L}-models.}\label{fig:classes}
\end{center}
\end{figure}

A graph is \emph{$B_k$-VPG} (resp. \emph{$B_k$-EPG}) if it is the
vertex (resp. edge) intersection graph of paths with at most
\emph{$k$ bends} in a grid~\cite{asinowski,Golumbic-epg}. VPG
graphs, without bounds in the number of bends, are also known as
\emph{string graphs}. A subclass of $B_1$-VPG graphs is the class
of \Lr-graphs, in which all the paths have the shape \Lr. Many
classes can be characterized by different grounded \Lr-models. For
instance, circle graphs are exactly the doubly grounded \Lr-graphs
where both endpoints of the paths belong to an inverted
diagonal~\cite{asinowski}, and p-box graphs are also characterized
as \emph{monotone} \Lr-graphs~\cite{monot-L} (\Lr-graphs such that
the bends of the \Lr \ shapes lie on an inverted diagonal). A very
nice survey on this kind of models can be found
in~\cite{J-T--L-graphs}. We present in Section~\ref{sec:vpg}
another grounded rectangle model for 2-thin graphs that gives rise
to a grounded \Lr-model for them and a grounded $B_0$-VPG model
for independent 2-thin graphs. Based on these models, we can also
obtain a $B_3$-VPG representation for 3-thin graphs and a
$B_1$-VPG representation for independent 3-thin graphs. We
furthermore show that $B_0$-VPG graphs have unbounded thinness,
and that not every 4-thin graph is a VPG graph.

The \Lr-model can be modified to prove that 2-thin graphs are
monotone \Lr-graphs. On the one hand, the inclusion is proper,
since the class of monotone \Lr-graphs, equivalent to p-box
graphs, contains all trees~\cite{S-T-p-box}, which have unbounded
thinness~\cite{B-D-thinness}. On the other hand, the result is
tight, in the sense that the octahedron $\overline{3K_2}$ is an
example of a graph of thinness~3 which is not p-box since it has
boxicity~3~\cite{Rob-box}.

The bend number of 2-thin graphs as \emph{edge} intersection
graphs of paths on a grid (EPG graphs) is unbounded, since already
proper independent 2-thin graphs contain the class of complete
bipartite graphs, that has unbounded EPG bend
number~\cite{Suk-epg}.

The results relating thinness to VPG and EPG models are summarized
in Table~\ref{tab:vpg}. Other width parameters are analyzed for
VPG and EPG graphs in~\cite{G-M-width-vpg}.

%doubly grounded string

\begin{table}[t]
\small
    \centering
    \begin{tabular}{|rcll|}
    \hline
   independent 2-thin & $\subseteq$ & $B_0$-VPG & Prop~\ref{prop:b1-b0}  \\ \hline
    2-thin & $\subseteq$ & \Lr-graphs $\ \subseteq \ $ $B_1$-VPG & Prop~\ref{prop:b1-b0}  \\
    2-thin & $=$ & blocking monotone \Lr-graphs & Thm~\ref{thm:forb-pat-2-thin}  \\
    2-thin & $\not \subseteq$ & $B_0$-VPG & Prop~\ref{prop:not-b0}  \\ \hline
   independent 3-thin & $\subseteq$ & $B_1$-VPG & Prop~\ref{prop:b3-b1}  \\
   independent 3-thin & $\not \subseteq$ & $B_0$-VPG & Prop~\ref{prop:not-b0}  \\
   independent 3-thin & $\not \subseteq$ & monotone \Lr-graphs & Prop~\ref{prop:not-L} \\ \hline
   3-thin & $\subseteq$ & $B_3$-VPG & Prop~\ref{prop:b3-b1} \\ \hline
   4-thin & $\not \subseteq$ & VPG & Prop~\ref{prop:4t-not-vpg}  \\ \hline
   $B_0$-VPG & $ \not \subseteq$ & $k$-thin, $\forall k$ & Prop~\ref{prop:b0-unbound}  \\ \hline
   monotone \Lr-graphs & $ \not \subseteq$ & $k$-thin, $\forall k$ & \cite{B-D-thinness,S-T-p-box}  \\ \hline
   proper independent 2-thin & $ \not \subseteq$ & $B_k$-EPG, $\forall k$ & Prop~\ref{prop:epg}  \\ \hline
    \end{tabular}
    \caption{Summary of the results in Section~\ref{sec:vpg}. }
    \label{tab:vpg}
\end{table}

\subsection{Forbidden patterns}

Many classic graph classes, such as interval, proper interval,
chordal, comparability, co-comparability, and bipartite graphs,
can be characterized by the existence of an ordering of the
vertices avoiding some ordered subgraphs, called \emph{patterns}.
Very recently, all the classes corresponding to patterns on three
vertices (including the ones mentioned above) have been listed,
and proved to be efficiently recognizable~\cite{Habib-patterns}.
Less is known about patterns on four vertices. One of the few
graph classes characterized by a pattern on four vertices is the
class of monotone \Lr-graphs~\cite{Chap-MPT}. A recent paper
studies systematically forbidden pattern characterizations of
graph classes defined by grounded intersection
models~\cite{Habib-patterns-ground-arx}.

In the literature, there are two additional ways of defining
patterns for subclasses of bipartite graphs. In both cases, an
explicit bipartition is given. In the case of \emph{bicolored
patterns}~\cite{H-H-bigraphs}, a total order of the vertices is
defined, while in the case of \emph{bipartite
patterns}~\cite{H-M-R-orders}, each part of the bipartition is
linearly ordered.

In Section~\ref{sec:pat}, we present characterizations for 2-thin
graphs, independent 2-thin graphs and proper independent 2-thin
graphs in terms of forbidden patterns. We also characterize
independent 2-thin graphs and proper independent 2-thin graphs in
terms of forbidden bicolored patterns and forbidden bipartite
patterns. These latter characterizations lead also to equivalences
with two well known subclasses of bipartite graphs. Namely, we
proved that independent 2-thin graphs are equivalent to interval
bigraphs, and proper independent 2-thin graphs are equivalent to
bipartite permutation graphs.

\subsection{Other width parameters}

The relation between thinness and other well known width
parameters is surveyed
in~\cite{Mixed-thin-wg,B-D-thinness,B-B-M-P-convex-arxiv}.

%más ejemplos, mim-width, twin-width, etc.

The \emph{pathwidth} (resp. \emph{bandwidth}) of a graph $G$ can
be defined as one less than the maximum clique size of an interval
(resp. \emph{proper} interval) supergraph of $G$, chosen to
minimize its maximum clique size~\cite{K-S-bw,Moh-bw}.
In~\cite{M-O-R-C-thinness} it is proved that the thinness of a
graph is at most its pathwidth plus one but, indeed, the proof
shows that the \emph{independent} thinness of a graph is at most
its pathwidth plus one. Combining the ideas behind that proof and
a characterization of the bandwidth of a graph as a \emph{proper
pathwidth}, due to Kaplan and Shamir~\cite{K-S-bw}, it can be
proved that the proper independent thinness of a graph is at most
its bandwidth plus one. We will furthermore prove that the proper
thinness of a graph is at most its bandwidth (unless the graph is
edgeless; in that case, the bandwidth is zero but the thinness is
one).

\subsection{Outline}

In the remaining of this section we will introduce some basic
definitions. In Section~\ref{sec:algo}, we present algorithmic
results related to the recognition of (proper) (independent)
$k$-thin graphs. In Section~\ref{sec:box}, we characterize some
models for 2-thin graphs as intersection graphs of rectangles in
the plane. In Section~\ref{sec:vpg}, we relate the thinness and
the number of bends per path in representations as vertex
intersection graphs of paths on a grid (VPG graphs). In
Section~\ref{sec:pat}, we obtain forbidden pattern
characterizations for 2-thin graphs, independent 2-thin graphs,
and proper independent 2-thin graphs, and relate these classes to
well known subclasses of bipartite graphs. Finally, in
Section~\ref{sec:width}, we bound the proper thinness and proper
independent thinness of a graph is terms of its bandwidth, and the
independent thinness in terms of its pathwidth.

An extended abstract of this work was presented at LAGOS~2021 and
appears in~\cite{B-B-lagos21}.

\subsection{Basic definitions}\label{sec:def}

All graphs in this work are finite, have no loops or multiple
edges, and are undirected unless we say explicitly
\emph{digraphs}. For a graph $G$, denote by $V(G)$ its vertex set
and by $E(G)$ its edge set. For a subset $A$ of $V(G)$, denote by
$G[A]$ the subgraph of $G$ induced by $A$.

A \emph{digraph} is a graph $D = (V,A)$ such that $A$ consists of
ordered pairs of $V$, called \emph{arcs}. A \emph{directed cycle}
of a digraph $D$ is a sequence $v_1,v_2, \ldots,v_n$ of vertices
of $V(D)$ such that $v_1 = v_{n}$ and, for all $1 \leq j<n$,
$v_jv_{j+1} \in A(D)$. We may omit the word ``directed'' if it is
clear from context that the cycle is directed and not just a cycle
of the underlying graph. A digraph is \emph{acyclic} if it has no
directed cycles. A \emph{topological ordering} of a digraph $D$ is
an ordering~$<$ of its vertices such that for each arc $vw \in
A(D)$, $v < w$. A digraph admits a topological ordering if and
only if it is acyclic, and such an ordering can be computed in
$O(|V|+|A|)$ time~\cite{Knuth1}.

%Denote the size of a set $S$ by $|S|$.
A \emph{clique} or \emph{complete set} (resp.\ \emph{independent
set}) is a set of pairwise adjacent (resp.\ nonadjacent) vertices.
The \emph{clique number} of a graph is the size of a maximum
clique. Let $X, Y \subseteq V(G)$. We say that $X$ is
\textit{complete to} $Y$ if every vertex in $X$ is adjacent to
every vertex in $Y$, and that $X$ is \textit{anticomplete to} $Y$
if no vertex of $X$ is adjacent to a vertex of $Y$. A graph is
\emph{complete} if its vertex set is a complete set. A graph is
\emph{bipartite} if its vertex set can be partitioned into two
independent sets, and \emph{complete bipartite} if those sets are
complete to each other.

Given a graph $G$ and two disjoint subsets $A$, $B$ of $V(G)$, the
bipartite graph $G[A,B]$ is defined as the subgraph of $G$ formed
by the vertices $A \cup B$ and the edges of $G$ that have one
endpoint in $A$ and one in $B$. Notice that $G[A,B]$ is not
necessarily an induced subgraph of $G$.

A \emph{$t$-coloring} of a graph is a partition of its vertices
into $t$ independent sets. The smallest $t$ such that a graph $G$
admits a $t$-coloring is called the \emph{chromatic number} of
$G$. A graph is \emph{perfect} if for every induced subgraph of
it, the chromatic number equals the clique number.

A graph $G(V,E)$ is a \emph{comparability graph} if there exists
an ordering $v_1, \dots , v_n$ of $V$ such that, for each triple
$(r,s,t)$ with $r<s<t$, if $v_r v_s$ and $v_s v_t$ are edges of
$G$, then so is $v_r v_t$.\ Such an ordering is a
\emph{comparability ordering}. A graph is a \emph{co-comparability
graph} if its complement is a comparability graph. A graph is a
\emph{permutation graph} if it is both a comparability and a
co-comparability graph, and a \emph{bipartite permutation graph}
if it is, moreover, bipartite.

A bipartite graph with bipartition $(X,Y)$ is an \emph{interval
bigraph} if every vertex can be assigned an interval on the real
line such that for all $x \in X$ and $y \in Y$, $x$ is adjacent to
$y$ if and only if the corresponding intervals intersect. A
\emph{proper interval bigraph} is an interval bigraph admitting a
representation in which the family of intervals of each of $X$,
$Y$ is inclusion-free. Proper interval bigraphs are equivalent to
bipartite permutation graphs~\cite{H-H-circ2}.

\section{Algorithmic aspects of thinness}\label{sec:algo}

The recognition problem for $k$-thin graphs is
NP-complete~\cite{Shitov-thin}, and the recognition problem for
proper $k$-thin graphs is open. Both problems remain open when $k
\geq 2$ is fixed. Some related algorithmic problems were also
studied: partition into a minimum number of classes (strongly)
consistent with a given vertex
ordering~\cite{B-D-thinness,B-M-O-thin-tcs}, existence of a vertex
ordering (strongly) consistent with a given vertex
partition~\cite{B-D-thinness}.
%, existence of a vertex ordering (strongly) consistent with a given vertex partition requiring additionally that the vertices on each class are consecutive in the order~\cite{BOSS-thin-prec-dam-lagos,Moy18}.
 In this work, in
order to prove the intersection model characterizations, we will
deal with the problem of the existence of a vertex ordering
(strongly) consistent with a given vertex partition and that
extends a partial order of the vertices that is a total order when
restricted to each of the parts.

The problem of finding a partition into a minimum number of
classes (strongly) consistent with a given vertex ordering can be
solved by coloring a conflict graph, that is shown to belong to a
class in which the coloring problem is polynomial-time solvable.
Namely, let $G$ be a graph and ${<}$ an ordering of its vertices.
In~\cite{B-M-O-thin-tcs}, it was defined the graph $G_{<}$ having
$V(G)$ as vertex set, and $E(G_{<})$ is such that for $v < w$, $vw
\in E(G_{<})$ if and only if there is a vertex $z$ in $G$ such
that $w < z$, $zv \in E(G)$ and $zw \not \in E(G)$. Similarly,
in~\cite{B-D-thinness}, it was introduced the graph
$\tilde{G}_{<}$, which has $V(G)$ as vertex set, and
$E(\tilde{G}_{<})$ is such that for $v < w$, $vw \in
E(\tilde{G}_{<})$ if and only if either $vw \in E(G_{<})$ or there
is a vertex $x$ in $G$ such that $x < v$, $xw \in E(G)$ and $xv
\not \in E(G)$. An edge of $G_{<}$ (respectively $\tilde{G}_{<}$)
represents that its endpoints cannot belong to the same class in a
vertex partition that is consistent (respectively strongly
consistent) with the ordering ${<}$, and, as it was observed in
the respective works, such a partition is a coloring of the
corresponding graph.

In those works it was proved that $G_{<}$ and $\tilde{G}_{<}$ are
co-comparability graphs, thus perfect~\cite{Meyn-co-comp}. This
has two main implications. The first one is that the optimum
coloring can be computed in polynomial time~\cite{Go-comp}, and
thus the problem of finding a partition into a minimum number of
classes (strongly) consistent with a given vertex ordering can be
solved in polynomial time. The other one is that the chromatic
number equals the clique number, and the following corollary was
used to prove upper and lower bounds for the thinness and proper
thinness of a graph.

\begin{corollary}\emph{\cite{BGOSS-thin-oper}}\label{cor:thin-comp-order} Let $G$ be a graph, and $k$ a positive integer.
Then $\thin(G) \geq k$ (resp. $\pthin(G) \geq k$) if and only if,
for every ordering ${<}$ of $V(G)$, the graph $G_{<}$ (resp.
$\tilde{G}_{<}$) has a clique of size $k$.
\end{corollary}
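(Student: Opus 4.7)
The plan is to deduce the corollary directly from the two facts already emphasized in the preceding paragraph: (i) the partitions of $V(G)$ that are consistent (resp.\ strongly consistent) with a fixed ordering ${<}$ are exactly the proper colorings of $G_{<}$ (resp.\ $\tilde{G}_{<}$), and (ii) $G_{<}$ and $\tilde{G}_{<}$ are co-comparability, hence perfect.

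First I would translate the definitions of $\thin(G)$ and $\pthin(G)$ into chromatic-number form. By definition, $\thin(G)$ is the minimum, over all vertex orderings ${<}$, of the smallest size of a partition of $V(G)$ into classes consistent with ${<}$; by (i) the latter equals $\chi(G_{<})$. Thus
\[
\thin(G)=\min_{{<}}\chi(G_{<}), \qquad \pthin(G)=\min_{{<}}\chi(\tilde{G}_{<}),
\]
where the second identity is obtained the same way using strongly consistent partitions and $\tilde{G}_{<}$.

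Next I would prove each direction of the equivalence for $\thin$; the argument for $\pthin$ is word-for-word the same with $G_{<}$ replaced by $\tilde{G}_{<}$. For the ``if'' direction, assume that for every ordering ${<}$ the graph $G_{<}$ contains a clique of size $k$. Then $\chi(G_{<})\geq \omega(G_{<})\geq k$ for every ${<}$ (this inequality is trivial and does not need perfection), so $\thin(G)=\min_{<}\chi(G_{<})\geq k$. For the ``only if'' direction, suppose $\thin(G)\geq k$ and pick any ordering ${<}$. Since $G_{<}$ is perfect, $\chi(G_{<})=\omega(G_{<})$; combined with $\chi(G_{<})\geq\thin(G)\geq k$ this yields $\omega(G_{<})\geq k$, so $G_{<}$ has a $k$-clique.

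There is really no main obstacle: the only nontrivial ingredient is the equality $\chi=\omega$ on $G_{<}$ and $\tilde{G}_{<}$, which follows from their perfection as recalled in the excerpt, and the identification of consistent (resp.\ strongly consistent) partitions with proper colorings of $G_{<}$ (resp.\ $\tilde{G}_{<}$), which is exactly how those auxiliary graphs were defined in \cite{B-M-O-thin-tcs} and \cite{B-D-thinness}. Everything else is bookkeeping.
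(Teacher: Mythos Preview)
Your proposal is correct and is precisely the argument the paper has in mind: the corollary is stated without a detailed proof (it is cited from~\cite{BGOSS-thin-oper-arxiv}), but the preceding paragraph singles out exactly the two ingredients you use, namely that (strongly) consistent partitions are colorings of $G_{<}$ (resp.\ $\tilde{G}_{<}$) and that these graphs are perfect so $\chi=\omega$. Your write-up just makes the min-over-orderings bookkeeping explicit.
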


In~\cite{BGOSS-thin-oper}, it was observed that also the problem
of finding a partition into a minimum number of independent
classes (strongly) consistent with a given vertex ordering $<$ can
be solved by coloring a conflict graph. Precisely, the graph
$G_{<}^{ind}$ (resp. $\tilde{G}_{<}^{ind}$) whose vertex set is
$V(G)$ and whose edge set is $E(G) \cup E(G_{<})$ (resp. $E(G)
\cup E(\tilde{G}_{<})$). We will prove next that these graphs are
co-comparability graphs, as well.

\begin{theorem}\label{thm:ind-cocomp}
 Given a graph $G$ and a vertex ordering $<$, the conflict graphs $G_{<}^{ind}$ (resp. $\tilde{G}_{<}^{ind}$) are co-comparability graphs.
\end{theorem}

\begin{proof}
Let us see first that $<$ is a co-comparability order for
$G_{<}^{ind}$. Let $x < y < z$ in $V(G)$ and suppose that $xz \in
E(G_{<}^{ind})$. We need to prove that at least one of $xy$, $yz$
is an edge of $G_{<}^{ind}$. If $xz \in E(G)$, then either $yz \in
E(G)$ or $xy$ in $E(G_{<})$. If $xz \in E(G_{<})$, then at least
one of $xy$, $yz$ is an edge of $E(G_{<})$, because $G_{<}$ is a
co-comparability graph. In all the cases, at least one of $xy$,
$yz$ is an edge of $G_{<}^{ind}$. The proof holds exactly the same
way for $\tilde{G}_{<}^{ind}$, replacing $G_{<}$ by
$\tilde{G}_{<}$.
\end{proof}

In particular, we have the following.

\begin{corollary}\label{cor:ind-cocomp}
The problem of finding a partition into a minimum number of
independent classes (strongly) consistent with a given vertex
ordering is polynomial-time solvable.
\end{corollary}

The problem about the existence of a vertex ordering (strongly)
consistent with a given vertex partition was shown to be
NP-complete~\cite{B-D-thinness}, but the complexity remains open
when the number of parts is fixed.

%The problem about the existence
%of a vertex ordering consistent with a given vertex partition
%requiring additionally that the vertices on each class are
%consecutive in the order was shown to be polynomial-time
%solvable~\cite{BOSS-thin-prec-dam-lagos}. The same problem for
%strong consistency was shown NP-complete when the number of parts
%is arbitrary~\cite{BOSS-thin-prec-dam-lagos}, and polynomial-time
%solvable when the number of parts is
%fixed~\cite{BOSS-thin-prec-dam-lagos,Moy18}. Under the same
%additional constraints, the problem of finding a minimum size
%partition for a given ordering is polynomial-time solvable, and
%the complexity of the problem of finding an ordering that
%minimizes the size of a minimum size partition is unknown.

Let us solve now the following problem.\\

\noindent {\textsc{(Strongly) Consistent Extending Order -- (S)CEO}}\\
\textbf{Instance:} A graph $G$, a partition $\Pi =
\{V^1,\dots,V^k\}$ and a partial order ${<}$ of $V(G)$ that is
total and (strongly) consistent restricted to each $V^j$, $1 \leq j \leq k$. \\
\textbf{Question:} Does there exist a total ordering of $V(G)$ extending $<$ and (strongly) consistent with $\Pi$?\\

Given the input of the (S)CEO problem, we define a digraph
$D(G,\Pi,<)$ (resp. $\tilde{D}(G,\Pi,<)$) having $V(G)$ as vertex
set and such that an ordering of $V(G)$ is a solution to (S)CEO if
and only if it is a topological ordering of $D(G,\Pi,<)$ (resp.
$\tilde{D}(G,\Pi,<)$). The problem then reduces to the existence
of a topological order of a digraph, which is polynomial-time
solvable~\cite{Knuth1}. Given two vertices $v_i \in V^i$, $v_j \in
V^j$, $i \neq j$, we create the arc $v_iv_j$ in $D(G,\Pi,<)$ if
and only if $v_iv_j \not \in E(G)$ and there exists $v_j' \in V^j$
with $v_j' < v_j$ and $v_iv_j' \in E(G)$, and in
$\tilde{D}(G,\Pi,<)$ if and only if $v_iv_j \not \in E(G)$ and
either there exists $v_j' \in V^j$ with $v_j' < v_j$ and $v_iv_j'
\in E(G)$, or there exists $v_i' \in V^i$ with $v_i' > v_i$ and
$v_i'v_j \in E(G)$. Additionally, in order to ensure that a
topological ordering of the digraph extends $<$, we create in both
cases the arc $vv'$ for every pair of vertices $v < v'$.

\begin{lemma}\label{lem:D-thin} Let $G$ be a graph, $\Pi = \{V^1,\dots,V^k\}$ a partition and ${<}$ a partial order of $V(G)$ that is total and consistent restricted to each $V^j$, $1 \leq j \leq k$.
An ordering of $V(G)$ is consistent with the partition $\Pi$ and
extends the partial order $<$ if and only if it is a topological
ordering of $D(G,\Pi,<)$.
\end{lemma}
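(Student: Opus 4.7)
The plan is to prove the equivalence by verifying both directions directly from the definition of $D(G,\Pi,<)$. The digraph carries two kinds of arcs: the \emph{extension arcs} $vv'$ added for every pair with $v<v'$, and the \emph{class arcs} $v_iv_j$ for $v_i\in V^i$, $v_j\in V^j$ with $i\neq j$, $v_iv_j\notin E(G)$, and some witness $v_j'\in V^j$ satisfying $v_j'<v_j$ and $v_iv_j'\in E(G)$. The first family is tailored to force extension of $<$, while the second is tailored to encode the consistency requirement. Throughout, I will denote the candidate total ordering by $\prec$ to keep it notationally distinct from the given partial order $<$.

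For the $(\Rightarrow)$ direction, assume $\prec$ is a total order consistent with $\Pi$ and extending $<$. The extension arcs are respected automatically. For a class arc $v_iv_j$ with witness $v_j'$, I argue $v_i\prec v_j$ by contradiction: if instead $v_j\prec v_i$, then $v_j'<v_j$ gives $v_j'\prec v_j$, producing a triple $v_j'\prec v_j\prec v_i$ with $v_j',v_j\in V^j$ and $v_iv_j'\in E(G)$; consistency with $\Pi$ would then force $v_iv_j\in E(G)$, contradicting the very hypothesis behind the class arc.

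For the $(\Leftarrow)$ direction, let $\prec$ be a topological ordering of $D(G,\Pi,<)$. That $\prec$ extends $<$ is immediate from the extension arcs. To verify consistency, take any triple $v_r\prec v_s\prec v_t$ with $v_r,v_s\in V^j$ and $v_tv_r\in E(G)$. Since $<$ is total on $V^j$ and $\prec$ extends $<$, the two orders agree on $V^j$, and hence $v_r<v_s$. If $v_t\in V^j$ as well, the desired edge $v_tv_s$ follows from the within-class consistency of $<$ on $V^j$, which is part of the hypotheses. Otherwise $v_t\notin V^j$, and if $v_tv_s$ were not an edge of $G$, then $v_r$ would be a valid witness for a class arc from $v_t$ to $v_s$, so the topological order would force $v_t\prec v_s$, contradicting $v_s\prec v_t$.

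The delicate point I expect to take most care over is reconciling the fact that the arcs of $D(G,\Pi,<)$ are defined using the given partial order $<$, whereas the consistency property we must verify speaks of the unknown extension $\prec$. The bridge is provided by the hypothesis that $<$ is total on each $V^j$, which forces $\prec$ and $<$ to coincide inside each class, together with the within-class consistency of $<$, which handles precisely those consistency triples entirely contained in one part — a situation not captured by any class arc of $D(G,\Pi,<)$.
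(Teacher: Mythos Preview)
Your proof is correct and follows essentially the same approach as the paper's own proof: both directions are handled by the same case split (extension arcs versus class arcs in the forward direction; $v_t$ inside the same class versus a different class in the backward direction), with the same contradiction argument via the witness vertex. Your explicit remark that the hypothesis ``$<$ is total and consistent on each $V^j$'' is precisely what bridges the within-class case is a nice touch, but otherwise the arguments coincide.
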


\begin{proof}
Suppose first $\prec$ is a total ordering of $V(G)$ consistent
with the partition $\Pi$ and that extends the partial order $<$.
Let $vv'$ be an arc such that $v < v'$. Then $v \prec v'$, since
$\prec$ extends $<$. Let $v_iv_j$ be an arc with $v_i \in V^i$,
$v_j \in V^j$, $i \neq j$, such that $v_iv_j \not \in E(G)$ and
there exists $v_j' \in V^j$ with $v_j' < v_j$ and $v_iv_j' \in
E(G)$. Suppose that $v_j \prec v_i$. Since $\prec$ extends $<$,
$v_j' \prec v_j \prec v_i$, and $v_iv_j' \in E(G)$ but $v_iv_j
\not \in E(G)$, a contradiction because $\prec$ is consistent with
the partition $\Pi$. Thus $v_i \prec v_j$ for every such arc, and
$\prec$ is a topological ordering for $D(G,\Pi,<)$.

Conversely, suppose $\prec$ is a topological ordering for
$D(G,\Pi,<)$. Let $v_j' \prec v_j \prec v_i$ such that $v_i \in
V^i$, $v_j, v_j' \in V^j$, $v_j'v_i \in E(G)$. If $i = j$, then
$v_j' < v_j < v_i$ because $<$ is total restricted to $V^i$ and
$\prec$ extends $<$, so $v_jv_i \in E(G)$ because $<$ is
consistent restricted to $V^i$. If $i \neq j$, then $v_j' < v_j$
because $<$ is total restricted to $V^j$ and $\prec$ extends $<$.
Since $\prec$ is a topological ordering for $D(G,\Pi,<)$, $v_iv_j$
is not an arc of $D(G,\Pi,<)$, thus $v_iv_j \in E(G)$.
\end{proof}

\begin{lemma}\label{lem:D-pthin} Let $G$ be a graph, $\Pi = \{V^1,\dots,V^k\}$ a partition and ${<}$ a partial order of $V(G)$ that is total and strongly consistent restricted to each $V^j$, $1 \leq j \leq k$.
An ordering of $V(G)$ is strongly consistent with the partition
$\Pi$ and extends the partial order $<$ if and only if it is a
topological ordering of $\tilde{D}(G,\Pi,<)$.
\end{lemma}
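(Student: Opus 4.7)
The plan is to mirror the proof of Lemma~\ref{lem:D-thin}, exploiting the symmetry built into the definition of $\tilde{D}(G,\Pi,<)$. Recall that $\tilde{D}$ adds, on top of the "forward" arcs of $D$, a second family of arcs $v_iv_j$ coming from a "reverse witness" $v_i' > v_i$ with $v_i'v_j \in E(G)$; this extra family is exactly what encodes consistency of the reverse order.

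For the forward direction, I would take a total ordering $\prec$ that is strongly consistent with $\Pi$ and extends $<$, and verify that every arc of $\tilde{D}(G,\Pi,<)$ points forward in $\prec$. The arcs $vv'$ with $v < v'$ are handled by the fact that $\prec$ extends $<$. For an arc $v_iv_j$ produced by the first witness type (some $v_j' < v_j$ with $v_iv_j' \in E(G)$), I would repeat the contradiction argument from Lemma~\ref{lem:D-thin}: if $v_j \prec v_i$, then $v_j' \prec v_j \prec v_i$, $v_iv_j' \in E(G)$ and $v_iv_j \notin E(G)$ violate consistency of $\prec$. For an arc produced by the second witness type (some $v_i' > v_i$ with $v_i'v_j \in E(G)$), I would run the analogous argument using the reverse ordering $\prec^{-1}$: since $v_i' \succ v_i$, assuming $v_j \prec v_i$ would give $v_j \prec v_i \prec v_i'$ with $v_jv_i' \in E(G)$ and $v_jv_i \notin E(G)$, contradicting consistency of $\prec^{-1}$ (that is, strong consistency of $\prec$).

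For the converse, let $\prec$ be a topological ordering of $\tilde{D}(G,\Pi,<)$. Extension of $<$ is immediate from the arcs $vv'$ for $v<v'$. To check strong consistency, I would take an arbitrary triple $v_r, v_s, v_t$ with $v_r \prec v_s \prec v_t$ and verify both: (i) if $v_r,v_s$ belong to the same class and $v_rv_t \in E(G)$ then $v_sv_t \in E(G)$ (consistency), and (ii) if $v_s,v_t$ belong to the same class and $v_rv_t \in E(G)$ then $v_rv_s \in E(G)$ (reverse consistency). In each case, if the two same-class vertices lie in the same $V^j$, then the order between them is forced by $<$ and the conclusion follows from strong consistency of $<$ restricted to $V^j$. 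Otherwise, the triple mixes classes and the relevant non-edge would force an arc of $\tilde{D}(G,\Pi,<)$ going against $\prec$: (i) uses the first witness clause of the arc definition, while (ii) uses the second witness clause, yielding the required contradictions.

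The main obstacle is simply keeping the case analysis in the converse direction clean, since there are two symmetric obligations (forward and backward consistency) and each one can be witnessed either within a single class $V^j$ (where we fall back to the hypothesis on $<$) or across classes (where we invoke the appropriate clause in the definition of $\tilde{D}$). Once the case split is set up to pair each obligation with the matching arc family, the argument reduces to the same topological-ordering reasoning as in Lemma~\ref{lem:D-thin}.
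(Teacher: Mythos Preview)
Your proposal is correct and follows essentially the same route as the paper's proof: both directions are handled by the same arc-by-arc verification in the forward direction and the same three-way case split (all three vertices in one class versus the two mixed-class configurations matched to the two witness clauses) in the converse. The only quibble is phrasing: your sentence ``if the two same-class vertices lie in the same $V^j$'' is tautological as written; what you mean (and what the paper does) is to split on whether the \emph{third} vertex of the triple also lies in $V^j$, and you should say so explicitly when you write this up.
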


\begin{proof}
Suppose first $\prec$ is a total ordering of $V(G)$ strongly
consistent with the partition $\Pi$ and that extends the partial
order $<$. Let $vv'$ be an arc such that $v < v'$. Then $v \prec
v'$, since $\prec$ extends $<$. Let $v_iv_j$ be an arc with $v_i
\in V^i$, $v_j \in V^j$, $i \neq j$, such that $v_iv_j \not \in
E(G)$ and there exists $v_j' \in V^j$ with $v_j' < v_j$ and
$v_iv_j' \in E(G)$. Suppose that $v_j \prec v_i$. Since $\prec$
extends $<$, $v_j' \prec v_j \prec v_i$, and $v_iv_j' \in E(G)$
but $v_iv_j \not \in E(G)$, a contradiction because $\prec$ is
strongly consistent with the partition $\Pi$. Thus $v_i \prec v_j$
for every such arc. Analogously, let $v_iv_j$ be an arc with $v_i
\in V^i$, $v_j \in V^j$, $i \neq j$, such that $v_iv_j \not \in
E(G)$ and there exists $v_i' \in V^i$ with $v_i' > v_i$ and
$v_i'v_j \in E(G)$. Suppose that $v_j \prec v_i$. Since $\prec$
extends $<$, $v_j \prec v_i \prec v_i'$, and $v_i'v_j \in E(G)$
but $v_iv_j \not \in E(G)$, a contradiction because $\prec$ is
strongly consistent with the partition $\Pi$. Thus $v_i \prec v_j$
for every such arc. Therefore, $\prec$ is a topological ordering
for $\tilde{D}(G,\Pi,<)$.

Conversely, suppose $\prec$ is a topological ordering for
$\tilde{D}(G,\Pi,<)$. Let $v_i \prec v_i' \prec v_i''$ in $V^i$,
such that $v_iv_i'' \in E(G)$. Then $v_i < v_i' < v_i''$ because
$<$ is total restricted to $V^i$ and $\prec$ extends $<$, so
$v_iv_i', v_i'v_i'' \in E(G)$ because $<$ is strongly consistent
restricted to $V^i$. Let $v_j' \prec v_j \prec v_i$ such that $v_i
\in V^i$, $v_j, v_j' \in V^j$, $i \neq j$, and $v_j'v_i \in E(G)$.
Then $v_j' < v_j$ because $<$ is total restricted to $V^j$ and
$\prec$ extends $<$. Since $\prec$ is a topological ordering for
$\tilde{D}(G,\Pi,<)$, $v_iv_j$ is not an arc of
$\tilde{D}(G,\Pi,<)$, thus $v_iv_j \in E(G)$. Analogously, let
$v_j \prec v_i \prec v_i'$ such that $v_i,v_i' \in V^i$, $v_j \in
V^j$, $i \neq j$, and $v_jv_i' \in E(G)$. Then $v_i < v_i'$
because $<$ is total restricted to $V^i$ and $\prec$ extends $<$.
Since $\prec$ is a topological ordering for $\tilde{D}(G,\Pi,<)$,
$v_iv_j$ is not an arc of $\tilde{D}(G,\Pi,<)$, thus $v_iv_j \in
E(G)$.
\end{proof}

\begin{corollary}\label{cor:sceo}
\textsc{Consistent Extending Order} and \textsc{Strongly
Consistent Extending Order} are polynomial-time solvable.
\end{corollary}

%Lemmas d-thin and d-pthin imply that recognizing mixed thin and the proper and inversion-free versions is poly if the order is given on each set, as well as a characterization of the obstructions. citar trabajo

\section{Rectangle intersection models for 2-thin and proper 2-thin
graphs}\label{sec:box}

We present in this section a rectangle intersection model for
2-thin graphs which is a slight modification of the model
in~\cite{C-M-O-thinness-man}. The common property is that the
upper-right corners of the boxes lie in two diagonals within the
second and fourth quadrants of the Cartesian plane. Indeed, the
upper-right corners of the boxes are the same as in their model.
Roughly speaking, the difference is that in their model, the boxes
in the upper diagonal ``go down'' enough to intersect the boxes
corresponding to all the neighbors in the lower diagonal, and the
boxes in the lower diagonal ``go left'' enough to intersect the
boxes corresponding to all the neighbors in the upper diagonal. In
our model, the boxes in the upper diagonal ``go down'' and stop
just before intersecting a box corresponding to a non-neighbor in
the lower diagonal, and the boxes in the lower diagonal ``go
left'' and stop just before intersecting a box corresponding to a
non-neighbor in the upper diagonal. This difference produces a
model that satisfies a property that we call \emph{blocking}.

One of the main results of this section is that, while some graphs
with thinness~3 admit a rectangle intersection model such that the
upper-right corners of the boxes lie in two diagonals within the
second and fourth quadrants of the Cartesian plane, when we add
the blocking property as a requirement, then every graph admitting
such a model is 2-thin.

\begin{figure}[t]
\noindent\begin{minipage}{\textwidth}
\begin{minipage}[t]{\dimexpr0.25\textwidth-0.5\Colsep\relax}
\begin{center}
\begin{tikzpicture}[scale=0.28]
%prolongations

\draw[fill=gray,draw=none,fill opacity=0.3] (-20,5) rectangle
(-10,7);

\draw[fill=gray,draw=none,fill opacity=0.3] (-16,-1) rectangle
(-12,9);

\draw[fill=white] (-16,5) [c1] rectangle (-12,7);

\node() at (-12, 7)  [n2] {};

\node[label=above:{\footnotesize $b$}] at (-14,4.6) {};
\node[label=below:{\footnotesize $P_Y(b)$}] at (-14,5) {};
  \node[label=right:{\footnotesize $P_X(b)$}] at (-20.5,6.1) {};

\end{tikzpicture}
    \end{center}
\end{minipage}\hfill
\begin{minipage}[t]{\dimexpr0.25\textwidth-0.5\Colsep\relax}
    \begin{center}
\begin{tikzpicture}[scale=0.28]

%weakly 2-diag

\draw (14,2) [c1] rectangle  (16,3); \draw (15,1) [c1] rectangle
(17,4); \draw (16,3.5) [c1] rectangle (18,5); \draw (16.5,3.5)
[c1] rectangle (19,6); \draw (14.5,6) [c1] rectangle (20,7); \draw
(18.8,2.5) [c1] rectangle (21,8);

\draw (16.5,0.5) [c1] rectangle (20,2); \draw (19.5,1.2) [c1]
rectangle (21,3); \draw (21.3,3.3) [c1] rectangle (22,4);

\node() at (16, 3)  [n2] {}; \node() at (17, 4)  [n2] {}; \node()
at (18, 5)  [n2] {}; \node() at (19, 6)  [n2] {}; \node() at (20,
7)  [n2] {}; \node() at (21, 8)  [n2] {};

\node() at (20, 2)  [n2] {}; \node() at (21, 3)  [n2] {}; \node()
at (22, 4)  [n2] {};

\end{tikzpicture}
    \end{center}
\end{minipage}\hfill
\begin{minipage}[t]{\dimexpr0.25\textwidth-0.5\Colsep\relax}
    \begin{center}
\begin{tikzpicture}[scale=0.28]

%blocking

\draw[color=gray] (0,4) -- (10,4); \draw[color=gray] (5,-1) --
(5,9);

\draw (0,0.3) [c1] rectangle  (1,5); \draw (0.5,0) [c1] rectangle
(2,6); \draw (2.5,3.2) [c1] rectangle (3,7); \draw (0.3,1.8) [c1]
rectangle (4,8);

\draw (2.5,0) [c1] rectangle (6,1); \draw (2.2,0.5) [c1] rectangle
(7,2); \draw (0,1.2) [c1] rectangle (8,3);

\node() at (1, 5)  [n2] {}; \node() at (2, 6)  [n2] {}; \node() at
(3, 7)  [n2] {}; \node() at (4, 8)  [n2] {};

\node() at (6, 1)  [n2] {}; \node() at (7, 2)  [n2] {}; \node() at
(8, 3)  [n2] {};

\end{tikzpicture}
    \end{center}
\end{minipage}\hfill
\begin{minipage}[t]{\dimexpr0.25\textwidth-0.5\Colsep\relax}
    \begin{center}
\begin{tikzpicture}[scale=0.28]

%not blocking

\draw[color=gray] (12,4) -- (22,4); \draw[color=gray] (17,-1) --
(17,9);

\draw (12,0.3) [c1] rectangle  (13,5); \draw (12.5,0) [c1]
rectangle  (14,6); \draw (14.5,3.2) [c1,fill=gray, fill
opacity=0.5] rectangle (15,7); \draw (12.3,1.8) [c1] rectangle
(16,8);

\draw (15.5,0) [c1,fill=gray, fill opacity=0.5] rectangle (18,1);
\draw (14.2,0.5) [c1] rectangle (19,2); \draw (12,1.2) [c1]
rectangle (20,3);

\node() at (13, 5)  [n2] {}; \node() at (14, 6) [n2] {}; \node()
at (15, 7) [n2] {}; \node() at (16, 8) [n2] {};

\node() at (18, 1)  [n2] {}; \node() at (19, 2) [n2] {}; \node()
at (20, 3) [n2] {};

\end{tikzpicture}
\end{center}
\end{minipage}%
\end{minipage}
\caption{Upper-right corner, vertical and horizontal prolongations
(first), a weakly 2-diagonal model (second), a blocking 2-diagonal
model (third), a 2-diagonal model that is not blocking, where the
gray boxes do not satisfy the required property
(fourth).}\label{fig:blocking}
\end{figure}
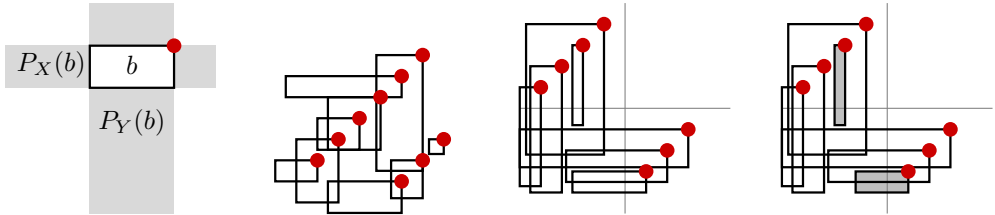

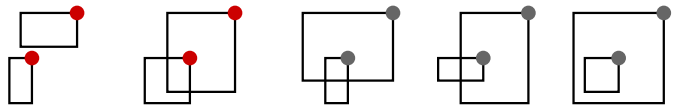
\begin{figure}[t]
\begin{center}
\begin{tikzpicture}[scale=0.3]

\draw (-7,0) [c1] rectangle  (-6,2); \draw (-6.5,2.5) [c1]
rectangle (-4,4);

\node() at (-6, 2)  [n2] {}; \node() at (-4, 4)  [n2] {};

\draw (-1,0) [c1] rectangle  (1,2); \draw (0,0.5) [c1] rectangle
(3,4);

\node() at (1, 2)  [n2] {}; \node() at (3, 4)  [n2] {};

\draw (7,0) [c1] rectangle (8,2); \draw (6,1) [c1] rectangle
(10,4);

\node() at (8, 2)  [n2,fill=colornododiag3] {}; \node() at (10, 4)
[n2,fill=colornododiag3] {};

\draw (12,1) [c1] rectangle (14,2); \draw (13,0) [c1] rectangle
(16,4);

\node() at (14, 2)  [n2,fill=colornododiag3] {}; \node() at (16,
4)  [n2,fill=colornododiag3] {};

\draw (18.5,0.5) [c1] rectangle (20,2); \draw (18,0) [c1]
rectangle (22,4);

\node() at (20, 2)  [n2,fill=colornododiag3] {}; \node() at (22,
4)  [n2,fill=colornododiag3] {};

\end{tikzpicture}
\end{center}

\caption{The first two situations are bi-semi-proper, the last
three are not.}\label{fig:semi-prop}
\end{figure}

\subsection{Formal definitions of (weakly) 2-diagonal, blocking, and bi-semi-proper}

We call \emph{box} a rectangle that is aligned with the Cartesian
axes in $\mathbb{R}^2$, i.e., the Cartesian product of two
segments $[x_1,x_2] \times [y_1,y_2]$. We say that the box $b$
\emph{is defined by} $x_1,x_2,y_1,y_2$, which will be denoted by
$X_1(b), X_2(b), Y_1(b)$, and $Y_2(b)$, respectively. The
\emph{upper-right corner} of $b$ is the point $(x_2,y_2)$. The
\emph{vertical} (resp. \emph{horizontal}) \emph{prolongation} is
the Cartesian product $P_Y(b) = [x_1,x_2] \times \mathbb{R}$
(resp. $P_X(b) = \mathbb{R} \times [y_1,y_2]$).

A set of boxes is \emph{2-diagonal} if their upper-right corners
are pairwise distinct and each of them lies, for some constant
values $d_1 < 0 < d_2$, either in the intersection of the diagonal
$y = x+d_1$ and the 4th quadrant of the Cartesian plane, or in the
intersection of the diagonal $y= x+d_2$ and the 2nd quadrant of
the Cartesian plane.

A set of boxes is \emph{weakly 2-diagonal} if their upper-right
corners are pairwise distinct and each of them lies, for some
constant values $d_1 < d_2$, either in the diagonal $y = x+d_1$ or
in the diagonal $y= x+d_2$. We will call $y = x+d_1$ the
\emph{lower diagonal} and $y= x+d_2$ the \emph{upper diagonal}.

A 2-diagonal model is \emph{blocking} if for every two
non-intersecting boxes $b_1$, $b_2$ in the upper and lower
diagonal, resp., either the vertical prolongation of $b_1$
intersects $b_2$ or the horizontal prolongation of $b_2$
intersects $b_1$ (see Figure~\ref{fig:blocking}).
%
%\begin{lemma}\label{lem:alin} A 2-diagonal model is blocking 2-diagonal if and only if, for every pair of boxes
%$b = x_1x_2y_1y_2$, $b' = x_1'x_2'y_1'y_2'$ such that $y_2 - x_2 <
%y_2' - x_2'$, either $x_1 < x_2'$ or $y_1' < y_2$. (If $b$ and
%$b'$ intersect, then both inequalities hold.)
%\end{lemma}
%

A weakly 2-diagonal model is \emph{bi-semi-proper} if for any two
boxes $b$, $b'$, defined by $x_1,x_2,y_1,y_2$ and
$x_1',x_2',y_1',y_2'$ and such that $y_2 - x_2 = y_2' - x_2'$ and
$x_2 < x_2'$, it holds $x_1 \leq x_1'$ and $y_1 \leq y_1'$ (see
Figure~\ref{fig:semi-prop}).

\subsection{Definition and properties of the model $\mathcal{M}_1(G)$}

Let $G=(V,E)$ be a 2-thin graph, with partition $V^1, V^2$
consistent with an order $<$. Let $V^1 = v_1 < \dots < v_{n_1}$,
$V^2 = w_1 < \dots < w_{n_2}$.
Let $U(1,v_i) = i$ if $v_i$ has no neighbors smaller than $v_i$ in
$V^1$, or $\min \{j: v_j < v_i, v_jv_i \in E(G)\}$, otherwise. Let
$U(2,w_i) = i$ if $w_i$ has no neighbors smaller than $w_i$ in
$V^2$, or $\min \{j: w_j < w_i, w_jw_i \in E(G)\}$, otherwise.
Let $U(2,v_i) = 0$ if $v_i$ is adjacent to all the vertices of
$V^2$ which are smaller than $v_i$, or $\max \{j: w_j < v_i,
w_jv_i \not \in E(G)\}$, otherwise. Let $U(1,w_i) = 0$ if $w_i$ is
adjacent to all the vertices of $V^1$ which are smaller than
$w_i$, or $\max \{j: v_j < w_i, v_jw_i \not \in E(G)\}$,
otherwise.

We define the following model of $G$ as intersection of boxes in
the plane, which is a 2-diagonal model centered at $(n_2,n_1)$,
and that we will denote by $\mathcal{M}_1(G)$: the upper-right
corner of $v_i$ is $(i+n_2,i)$, for $1 \leq i \leq n_1$, and the
upper-right corner of $w_i$ is $(i,i+n_1)$, for $1 \leq i \leq
n_2$; the lower-left corner of $v_i$ is
$(U(2,v_i)+0.5,U(1,v_i)-0.5)$, for $1 \leq i \leq n_1$, and the
lower-left corner of $w_i$ is $(U(2,w_i)-0.5,U(1,w_i)+0.5)$, for
$1 \leq i \leq n_2$.
Intuitively, the boxes having the upper right corner in the higher
(resp. lower) diagonal ``go down'' (resp. left) and stop just to
avoid the greatest non-neighbor smaller than themselves in the
other class (if any), and ``go left'' (resp. down) enough to catch
all the neighbors smaller than themselves in their own class (if
any, and without intersecting a non-neighbor). Examples are
depicted in Figures~\ref{fig:M1} and~\ref{fig:M1-prop}.

\begin{figure}
\begin{center}
\begin{tikzpicture}[scale=0.35]

%%graph

\node(v1) at (-14, 1.4)  [n2] {}; \node(v2) at (-14, 2.8)  [n2]
{}; \node(v3) at (-14, 5.6)  [n2] {}; \node(v4) at (-14, 7)  [n2]
{}; \node(v5) at (-14, 8.4)  [n2] {}; \node(v6) at (-14, 9.8)
[n2] {}; \node(v7) at (-14, 11.2)  [n2] {}; \node(v8) at (-14,
12.6)  [n2] {}; \node(v9) at (-14, 14)  [n2] {};

\node(w1) at (-7, 3.5)  [n2] {}; \node(w2) at (-7, 4.9)  [n2] {};

\node(w3) at (-7, 9.1)  [n2] {}; \node(w4) at (-7, 10.5)  [n2] {};
\node(w5) at (-7, 11.9)  [n2] {}; \node(w6) at (-7, 13.3)  [n2]
{};

\path (v4) edge [e1,bend right=0] (v3); \path (v4) edge [e1,bend
right=45] (v2); \path (v4) edge [e1,bend right=45] (v1);

\path (v8) edge [e1,bend right=0] (v7); \path (v8) edge [e1,bend
right=45] (v6); \path (v8) edge [e1,bend right=45] (v5);

\path (w6) edge [e1,bend left=0] (w5); \path (w6) edge [e1,bend
left=45] (w4); \path (w6) edge [e1,bend left=45] (w3);

\path (w1) edge [e1,bend left=0] (v2); \path (w1) edge [e1,bend
left=0] (v3); \path (w2) edge [e1,bend left=0] (v2); \path (w2)
edge [e1,bend left=0] (v3);

\path (v9) edge [e1,bend left=0] (w3); \path (v9) edge [e1,bend
left=0] (w4); \path (v9) edge [e1,bend left=0] (w5); \path (v9)
edge [e1,bend left=0] (w6);

\path (v9) edge [e1,bend left=0] (v8); \path (v9) edge [e1,bend
left=45] (v7); \path (v9) edge [e1,bend left=45] (v6); \path (v9)
edge [e1,bend left=45] (v5);

\path (v9) edge [e1,bend left=45] (v4); \path (v9) edge [e1,bend
left=45] (v3); \path (v9) edge [e1,bend left=45] (v2); \path (v9)
edge [e1,bend left=45] (v1);

%%model

\draw[color=gray] (0,9.5) -- (16,9.5);

\draw[color=gray] (6.5,0) -- (6.5,16);

\foreach \i / \x / \y in
{1/1/0,2/2/0.1,3/3/0.1,4/1.1/2,5/5/2.1,6/6/3,7/7/4,8/5.1/5,9/1.2/2.2}
    \draw (\y+0.5,\x-0.5) [c2] rectangle  (\i+6, \i);

\foreach \i / \x / \y  in {1/0.9/1, 2/0.9/2, 3/5/3, 4/6/4, 5/7/5,
6/8/3.1}
    \draw (\y-0.5,\x+0.5) [c2] rectangle  (\i, \i+9);

\foreach \i in {1,...,9}
    \node() at (\i+6, \i)  [n2] {};

\foreach \i in {1,...,6}
    \node() at (\i,\i+9) [n2] {};

\end{tikzpicture}
\end{center}

\caption{The model $\mathcal{M}_1$ for the 2-thin graph on the
left, whose representation is not proper (indeed, its proper
thinness is~3). In the graph, the vertices are ordered
increasingly by their $y$-coordinate, and the classes correspond
to the vertical lines.}\label{fig:M1}
\end{figure}
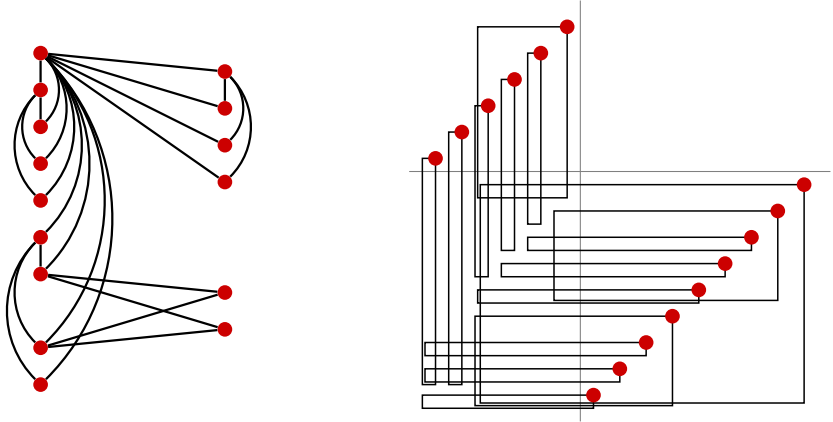

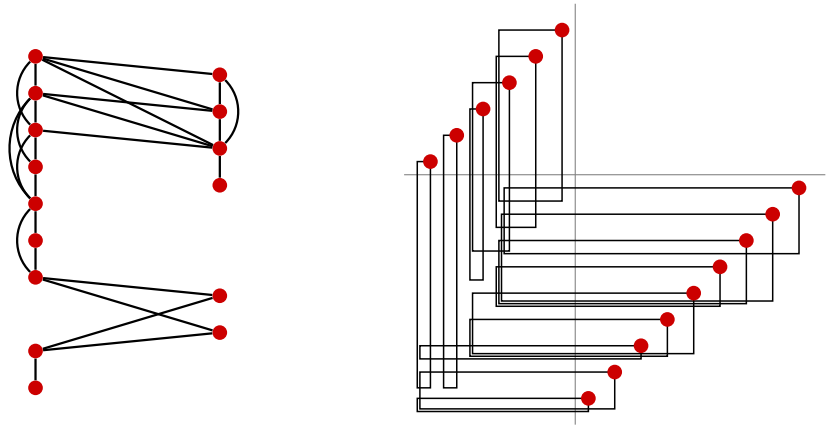
\begin{figure}
\begin{center}
\begin{tikzpicture}[scale=0.35]

%%graph

\node(x1) at (4, 1.4)  [n2] {}; \node(x2) at (4, 2.8)  [n2] {};
\node(x3) at (4, 5.6)  [n2] {}; \node(x4) at (4, 7)  [n2] {};
\node(x5) at (4, 8.4)  [n2] {}; \node(x6) at (4, 9.8)  [n2] {};
\node(x7) at (4, 11.2)  [n2] {}; \node(x8) at (4, 12.6)  [n2] {};
\node(x9) at (4, 14)  [n2] {};

\node(z1) at (11, 3.5)  [n2] {}; \node(z2) at (11, 4.9)  [n2] {};

\node(z3) at (11, 9.1)  [n2] {}; \node(z4) at (11, 10.5)  [n2] {};
\node(z5) at (11, 11.9)  [n2] {}; \node(z6) at (11, 13.3)  [n2]
{};

\path (x4) edge [e1,bend right=0] (x3); \path (x2) edge [e1,bend
right=0] (x1); \path (x4) edge [e1,bend right=0] (x5); \path (x5)
edge [e1,bend right=45] (x3);

\path (x8) edge [e1,bend right=0] (x7); \path (x8) edge [e1,bend
right=45] (x6); \path (x8) edge [e1,bend right=45] (x5); \path
(x7) edge [e1,bend right=0] (x6); \path (x7) edge [e1,bend
right=45] (x5); \path (x6) edge [e1,bend right=0] (x5);

\path (z6) edge [e1,bend left=0] (z5); \path (z6) edge [e1,bend
left=45] (z4); \path (z5) edge [e1,bend left=0] (z4); \path (z3)
edge [e1,bend left=0] (z4);

\path (z1) edge [e1,bend left=0] (x2); \path (z1) edge [e1,bend
left=0] (x3); \path (z2) edge [e1,bend left=0] (x2); \path (z2)
edge [e1,bend left=0] (x3);

\path (x8) edge [e1,bend left=0] (z4); \path (x7) edge [e1,bend
left=0] (z4);

\path (x8) edge [e1,bend left=0] (z5); \path (x9) edge [e1,bend
left=0] (z4); \path (x9) edge [e1,bend left=0] (z5); \path (x9)
edge [e1,bend left=0] (z6);

\path (x9) edge [e1,bend left=0] (x8); \path (x9) edge [e1,bend
right=45] (x7);

%%model

%%-------------------

\draw[color=gray] (18+0,9.5) -- (18+16,9.5);

\draw[color=gray] (18+6.5,0) -- (18+6.5,16);

\foreach \i / \x / \y in
{1/1/0,2/1.1/0.1,3/3/0.1,4/3.1/2,5/3.2/2.1,6/5/3,7/5.1/3.1,8/5.2/3.2,9/7/3.3}
    \draw (18+\y+0.5,\x-0.5) [c2] rectangle  (18+\i+6, \i);

\foreach \i / \x / \y in {1/0.9/1, 2/0.9/2, 3/5/3, 4/6.1/3.1,
5/7/4, 6/8/4.1}
    \draw (18+\y-0.5,\x+0.5) [c2] rectangle  (18+\i, \i+9);

\foreach \i in {1,...,9}
    \node() at (18+\i+6, \i)  [n2] {};

\foreach \i in {1,...,6}
    \node() at (18+\i,\i+9) [n2] {};

\end{tikzpicture}
\end{center}

\caption{The model $\mathcal{M}_1$ for the proper 2-thin graph on
the left. In the graph, the vertices are ordered increasingly by
their $y$-coordinate, and the classes correspond to the vertical
lines. }\label{fig:M1-prop}
\end{figure}

\begin{lemma}\label{lem:model1}
Let $G=(V,E)$ be a 2-thin graph, with partition $V^1, V^2$
consistent with an order $<$. Then $\mathcal{M}_1(G)$ is a
blocking 2-diagonal intersection model for $G$ that respects the
relative order on each class. Moreover, if the order and the
partition are strongly consistent, the model is bi-semi-proper.
\end{lemma}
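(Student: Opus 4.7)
The plan is to verify the claimed properties of $\mathcal{M}_1(G)$ in the order they are listed. The 2-diagonal structure is immediate from the construction: the upper-right corners of the $v_i$-boxes lie on the line $y = x - n_2$ with strictly increasing $x$-coordinate in $i$, and those of the $w_j$-boxes lie on $y = x + n_1$, also strictly increasing in $i$. A translation of the plane, which preserves all intersections, places the corners in the 4th and 2nd quadrants as the definition demands, and the strict monotonicity of $x$-coordinates along each diagonal is exactly the statement that $\mathcal{M}_1(G)$ respects the relative order on each class.

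To check that the model is faithful, I would split on the classes of the two vertices involved. If both lie in $V^1$, say $v_i, v_j$ with $i<j$, the $x$-intervals always overlap (since $U(2,v_\cdot) \leq n_2$), so intersection reduces to the $y$-overlap condition $U(1,v_j) \leq i$, which by the very definition of $U(1,v_j)$ and consistency of the partition with $<$ restricted to $V^1$ is equivalent to $v_iv_j \in E(G)$. The $V^2,V^2$ case is symmetric, with the roles of $x$ and $y$ exchanged. For a mixed pair $v_i \in V^1$, $w_j \in V^2$, intersection is equivalent to the conjunction $U(2,v_i) < j$ \emph{and} $U(1,w_j) < i$; splitting on whether $w_j<v_i$ or $v_i<w_j$ in $<$ and invoking consistency, a short case analysis shows this conjunction holds iff $v_iw_j \in E(G)$. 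The same bookkeeping delivers the blocking property: when $v_i, w_j$ are non-intersecting, exactly one of the two overlap conditions fails, and the surviving one is precisely what places the required prolongation across the other box (the horizontal prolongation of $v_i$ when only $x$-overlap fails, the vertical prolongation of $w_j$ when only $y$-overlap fails).

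For the bi-semi-proper refinement under strong consistency, take two boxes whose upper-right corners lie on the same diagonal with $x_2<x_2'$; by symmetry I may assume both are in $V^1$, say $v_i, v_j$ with $i<j$. The inequality $U(1,v_i) \leq U(1,v_j)$ reduces to plain consistency applied to $v_i, v_j$ together with a witness $v_k$ of $U(1,v_j)$. The inequality $U(2,v_i) \leq U(2,v_j)$ is where strong consistency is genuinely needed: given a witness $w_a$ with $w_a < v_i$ and $w_a v_i \notin E(G)$, consistency of the \emph{reversed} order on the pair $v_i, v_j$ together with the external vertex $w_a$ forbids $w_av_j \in E(G)$, so $w_a$ also witnesses $U(2,v_j) \geq a$. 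The $V^2,V^2$ case is symmetric.

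I expect the main obstacle to be keeping the half-integer offsets in the lower-left corners aligned with the integer-valued parameters $U(1,\cdot), U(2,\cdot)$, so that the strict-versus-non-strict overlap of intervals exactly matches the strict-versus-non-strict inequalities demanded by the definitions of $U(1,\cdot)$ and $U(2,\cdot)$. Handling this bookkeeping cleanly once makes the three intersection cases, the blocking property, and the strong-consistency refinement all fall out of the same argument.
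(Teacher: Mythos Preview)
Your approach is essentially the paper's: verify the 2-diagonal structure, split the intersection check into the three class-pair cases, extract blocking from the mixed case, and then observe monotonicity of the $U(\cdot,\cdot)$ values for bi-semi-properness. The case analysis you outline is correct and matches the paper's bullet-point verification.

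There is one slip in the bi-semi-proper paragraph. You write that $U(1,v_i)\le U(1,v_j)$ ``reduces to plain consistency''. It does not: take $V^1=\{v_1,v_2,v_3\}$ with $v_1v_3,v_2v_3\in E$ and $v_1v_2\notin E$. This ordering is (plainly) consistent, yet $U(1,v_2)=2>1=U(1,v_3)$. What is actually needed is the \emph{reverse} direction of consistency (which you correctly invoke for $U(2,\cdot)$): if $k=U(1,v_j)<i$, then from $v_k<v_i<v_j$, $v_i,v_j$ in the same class, and $v_kv_j\in E$, reverse consistency gives $v_kv_i\in E$, whence $U(1,v_i)\le k$. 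So both monotonicity inequalities rely on the reversed order being consistent, not just the forward one. Since your hypothesis is strong consistency anyway, the conclusion survives; only the attribution of which half of the hypothesis does the work is wrong.
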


\begin{proof} It is straightforward that the model is 2-diagonal, centered at $(n_2,n_1)$,
and respects the relative order on each class. Let us prove that
 $\mathcal{M}_1(G)$ is an intersection model for $G$ and that it is blocking.

\begin{itemize}

\item Let $v_i < v_j$ adjacent. Then $U(1,v_j) \leq i$ and
$U(2,v_j) \leq n_2$, so the boxes of $v_i$ and $v_j$ intersect.

\item Let $v_i < v_j$ not adjacent. Because of the consistency
between order and partition, $U(1,v_j)
> i$, thus the boxes of $v_i$ and $v_j$ do not intersect.

\item Let $w_i < w_j$ adjacent. Then $U(2,w_j) \leq i$ and
$U(1,w_j) \leq n_1$, so the boxes of $w_i$ and $w_j$ intersect.

\item Let $w_i < w_j$ not adjacent. Because of the consistency
between order and partition,  $U(2,w_j) > i$, thus the boxes of
$w_i$ and $w_j$ do not intersect.

\item Let $v_i < w_j$. Then $U(2,v_i) < j$. If they are adjacent,
by the consistency between order and partition, $U(1,w_j) < i$ and
the boxes of $v_i$ and $w_j$ intersect. If they are not adjacent,
$U(1,w_j) \geq i$ and the boxes of $v_i$ and $w_j$ do not
intersect. In this case, $P_Y(w_j) \cap v_i \neq \emptyset$.

\item Let $w_i < v_j$. Then $U(1,w_i) < j$. If they are adjacent,
by the consistency between order and partition, $U(2,v_j) < i$ and
the boxes of $v_i$ and $w_j$ intersect. If they are not adjacent,
$U(2,v_j) \geq i$ and the boxes of $v_i$ and $w_j$ do not
intersect. In this case, $P_X(v_j) \cap w_i \neq \emptyset$.
\end{itemize}

It remains to observe that if the order and the partition are
strongly consistent and $x_i < x_j$ are in the same class, then
$U(1,x_i) \leq U(1,x_j)$ and $U(2,x_i) \leq U(2,x_j)$, so the
model is bi-semi-proper.
\end{proof}

\subsection{Characterization of 2-thin graphs as rectangle intersection graphs}

\begin{theorem}\label{thm:char-2-thin} A graph is 2-thin if and only if it has a blocking 2-diagonal model.
Moreover, if a graph $G$ is 2-thin and the partition $V^1,V^2$ of
its vertices is consistent with an order $<$, then there exists a
blocking 2-diagonal model such that on each of the diagonals lie,
respectively, the upper-right corners of the vertices of $V^1$ and
$V^2$, in such a way that their order corresponds to $<$
restricted to the respective part. Conversely, if a graph $G$
admits a blocking 2-diagonal model, then there exists an order of
the vertices of $G$ that is consistent with the partition given by
the diagonals where the upper-right corners lie, and extends their
order on the respective diagonals.
\end{theorem}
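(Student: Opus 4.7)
The proof splits along the ``if and only if''. For the forward direction, together with its \emph{moreover} refinement, the work is essentially done by Lemma~\ref{lem:model1}: given a $2$-thin graph $G$ with consistent partition $(V^1,V^2)$ and order $<$, the explicit model $\mathcal{M}_1(G)$ is already a blocking $2$-diagonal intersection model, and by construction the upper-right corners of the vertices in $V^1$ and $V^2$ lie on the lower and upper diagonal respectively, in the order prescribed by the restriction of $<$ to each class.

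For the converse I would, given a blocking $2$-diagonal model of $G$, define $V^1$ and $V^2$ to be the sets of vertices whose upper-right corners lie on the lower and upper diagonal respectively, and equip each $V^i$ with the partial order $<$ given by the $x$-coordinate (equivalently, the $y$-coordinate) of the corner. A direct geometric check shows that this $<$ is consistent when restricted to each $V^i$: if $v_r<v_s<v_t$ lie in the same class and $v_r\cap v_t\neq\emptyset$, then both $X_2$ and $Y_2$ are strictly increasing along the diagonal, and the standard intersection conditions $X_1(v_t)\le X_2(v_r)<X_2(v_s)$ and $Y_1(v_t)\le Y_2(v_r)<Y_2(v_s)$, combined with $X_2(v_s)<X_2(v_t)$ and $Y_2(v_s)<Y_2(v_t)$, already yield $v_s\cap v_t\neq\emptyset$.

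By Lemma~\ref{lem:D-thin}, it now suffices to show that the digraph $D(G,\Pi,<)$ is acyclic; this is the main obstacle. The engine of the argument is the following geometric certificate, obtained by combining blocking with the witness from the definition of an arc: if $v_i\to v_j$ is a cross-diagonal arc with $v_i$ on the upper diagonal and $v_j$ on the lower one, then $X_2(v_i)<X_1(v_j)$, and symmetrically if $v_i$ is on the lower and $v_j$ on the upper diagonal, then $Y_2(v_i)<Y_1(v_j)$. Indeed, the alternative blocking case $P_Y(v_i)\cap v_j\neq\emptyset$ would force the $y$-ranges of $v_i$ and $v_j$ to be disjoint, hence $Y_2(v_j)<Y_1(v_i)$; but the witness $v_j'<v_j$ in $v_j$'s class with $v_i\cap v_j'\neq\emptyset$ gives $Y_1(v_i)\le Y_2(v_j')<Y_2(v_j)$, a contradiction, so only the other blocking alternative can hold and it is precisely the claimed $x$-separation.

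The acyclicity of $D(G,\Pi,<)$ follows. A $2$-cycle $u\leftrightarrow v$ on cross-diagonal arcs (say $u$ upper, $v$ lower) produces $X_2(u)<X_1(v)$ from the arc $u\to v$, while the arc $v\to u$ supplies a witness $u'<u$ upper with $u'\cap v\neq\emptyset$, giving $X_1(v)\le X_2(u')<X_2(u)$ and hence a direct contradiction with $X_2(u)<X_1(v)$. For a directed cycle of arbitrary length one takes a shortest one: two consecutive same-diagonal arcs can be shortcut by transitivity of $<$, so a minimum cycle must interleave cross-diagonal arcs in both directions, and applying the geometric certificate to a minimal such obstruction produces the same kind of $x$- or $y$-contradiction. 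Thus $D(G,\Pi,<)$ is acyclic, Lemma~\ref{lem:D-thin} delivers a consistent total order extending $<$, and $G$ is $2$-thin with the required partition and order extension. The hard step is clearly the acyclicity proof, and inside it the crucial move is the extraction of the strict geometric separation from each cross-diagonal arc using blocking.
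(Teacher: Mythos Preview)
Your approach coincides with the paper's: both directions invoke Lemma~\ref{lem:model1} and Lemma~\ref{lem:D-thin}, and the converse reduces to showing $D(G,\Pi,<)$ is acyclic. Your ``geometric certificate'' (a cross-diagonal arc from the upper to the lower diagonal forces strict $x$-separation, and symmetrically) is precisely the paper's observation that for such an arc the \emph{appropriate} prolongation of the target intersects the source---you have simply pushed it one step further to the coordinate inequality, which is a nice formulation. Your $2$-cycle contradiction is correct and matches the paper's Case~1.

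The gap is in the treatment of longer cycles. First, ``two consecutive same-diagonal arcs can be shortcut by transitivity of $<$'' only rules out three \emph{consecutive} same-class vertices; to bound a shortest cycle to at most two vertices per class you need the stronger fact that the subdigraph induced by each class is a \emph{tournament} (any two same-class vertices are joined by an arc since $<$ is total there), so a third same-class vertex always creates a chord. Second, once reduced to cycles of length $\le 4$, the $3$-cycles $v_1 w_1 v_2$ and $v_2 w_2 v_1$ and the $4$-cycle $v_1 w_1 v_2 w_2$ must actually be dispatched. Your certificate does handle them---e.g.\ in the $4$-cycle one chains $Y_2(v_1)<Y_2(w_1)<Y_1(v_2)$ (from the lower-to-upper arc) and $X_2(v_2)<X_2(w_2)<X_1(v_1)$ (from the upper-to-lower arc), so $v_1$ and $v_2$ are disjoint boxes with neither prolongation meeting the other, violating blocking---but the proposal does not carry this out, and ``produces the same kind of contradiction'' is not yet a proof. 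The paper makes exactly this case split (its Cases~1--4) explicit.
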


\begin{proof}

$\Rightarrow$) It follows from Lemma~\ref{lem:model1}.\\

\noindent $\Leftarrow$) Let us consider a blocking 2-diagonal
model of $G$,
 and let $V^1$ and $V^2$ be the vertices corresponding to boxes
 whose upper-right corners lie in the lower and upper diagonal, respectively.
We will slightly abuse notation and use it indistinctly for a
vertex and the box representing it.

Let $\Pi = \{V^1,V^2\}$, $<$ be the order of $V^1$ and $V^2$
defined by the $X_2$ coordinates on each of the sets, and where a
vertex of $V^1$ and a vertex of $V^2$ are not comparable.

Let us first prove that $<$ is consistent restricted to $V^i$, $i
= 1,2$. Let $x < y < z$ in $V^1$ with $xz \in E(G)$ (the
definitions are symmetric with respect to both classes). Then
$X_2(x) < X_2(y) < X_2(z)$ and since $xz \in E(G)$, it holds
$X_1(z) < X_2(x) < X_2(y)$ and $Y_1(z) < Y_2(x) = X_2(x) + d_1 <
X_2(y) + d_1 = Y_2(y)$. Therefore, $yz \in E(G)$.

Let $D = D(G,\Pi,<)$. By the blocking property, given two vertices
$v_i \in V^i$, $v_{3-i} \in V^{3-i}$, if $v_iv_{3-i} \in A(D)$,
then the appropriate prolongation of $v_{3-i}$ intersects $v_i$.
%By Lemma~\ref{lem:D-thin},
As observed above, an ordering of $V(G)$ is consistent with the
partition $V^1,V^2$ and extends the partial order $<$ if and only
if it is a topological ordering of $D$.

Let us prove now that $D$ is acyclic, thus it admits a topological
ordering. Suppose it is not, and let us consider a shortest
directed cycle of $D$. Moreover, since the subdigraph induced by
each class is complete and acyclic, the cycle has at most two
vertices of each class, and necessarily an arc from $V^1$ to $V^2$
and another from $V^2$ to $V^1$.

\noindent \textbf{{Case 1:}} The cycle consists of two vertices,
$v_1 \in V^1$ and $v_2 \in V^2$.

In this case, $v_1$ and $v_2$ do not intersect but the horizontal
prolongation of $v_1$ intersects $v_2$ and the vertical
prolongation of $v_2$ intersects $v_1$, which is not possible.

\noindent \textbf{{Case 2:}} The cycle is $v_1 w_1 v_2$ such that
$v_1, w_1 \in V^1$ and $v_2 \in V^2$.

Since $v_1 w_1 \in D$, we have $X_2(v_1) < X_2(w_1)$ and therefore
$Y_2(v_1) < Y_2(w_1)$. The horizontal prolongation of $v_1$
intersects $v_2$, therefore $Y_1(v_2) < Y_2(v_1) < Y_2(w_1)$, and
the vertical prolongation of $v_2$ intersects $w_1$, therefore
$X_1(w_1) < X_2(v_2)$, contradicting that $v_2$ and $w_1$ do not
intersect because they are not adjacent.

\noindent \textbf{{Case 3:}} The cycle is $v_2 w_2 v_1$ such that
$v_1 \in V^1$ and $v_2, w_2 \in V^2$.

Since $v_2 w_2 \in D$, we have $X_2(v_2) < X_2(w_2)$. The vertical
prolongation of $v_2$ intersects $v_1$, therefore $X_1(v_1) <
X_2(v_2) < X_2(w_2)$, and the horizontal prolongation of $v_1$
intersects $w_2$, therefore $Y_1(w_2) < Y_2(v_1)$, contradicting
that $v_1$ and $w_2$ do not intersect because they are not
adjacent.

\noindent \textbf{{Case 4:}} The cycle is $v_1 w_1 v_2 w_2$ such
that $v_1, w_1 \in V^1$ and $v_2, w_2 \in V^2$.

Since $v_i w_i \in D$, for $i = 1,2$, we have $X_2(v_i) <
X_2(w_i)$ and therefore $Y_2(v_i) < Y_2(w_i)$.

The vertical prolongation of $v_2$ intersects $w_1$ and $v_2$ does
not intersect $w_1$, therefore $Y_2(v_1) < Y_2(w_1) < Y_1(v_2)$.
The horizontal prolongation of $v_1$ intersects $w_2$ and $v_1$
does not intersect $w_2$, therefore $X_2(v_2) < X_2(w_2) <
X_1(v_1)$. This contradicts for $v_1$ and $v_2$ the fact that the
model is blocking.
\end{proof}

\subsection{Necessity of the blocking property}

Propositions~\ref{prop:rep} and~\ref{prop:th3} show that the
blocking property is necessary for Theorem~\ref{thm:char-2-thin},
since there are graphs having a 2-diagonal model which are not
2-thin.

\begin{definition}\label{def-G-star}
Let $G^*$ be the graph defined in the following way: $V(G^*) = A
\cup B$, where $A = a_1, \dots, a_{36}$, $B = b_1, \dots, b_{36}$,
and $A = A_0 \cup A_1 \cup \dots \cup A_5$, $B = B_0 \cup B_1 \cup
\dots \cup B_5$, where $A_5 = \{a_{33},a_{34},a_{35},a_{36}\}$,
$A_0 = \{a_i : 1 \leq i \leq 32, i \mbox{ is odd}\}$, and, for $1
\leq k \leq 4$, $A_{k} = \{a_i : 8(k-1) < i \leq 8k \mbox{ and } i
\mbox{ is even}\}$; $B_j = \{b_i : a_i \in A_j\}$, for $0 \leq j
\leq 5$. The edges joining $A$ and $B$ are such that: for $1 \leq
j \leq 4$, $A_j$ is complete to $B_j$ and $B_5$, and anticomplete
to $B_0$ and $B_i$, $1 \leq i \leq 4$, $i \neq j$; $A_5$ is
anticomplete to $B_0$ and complete to $B \setminus B_0$. Besides,
$a_{2k-1}a_{2k} \in E(G^*)$ and $b_{2k-1}b_{2k} \in E(G^*)$, for
$1 \leq k \leq 16$, and these are the only internal edges of $A$
and $B$.
\end{definition}

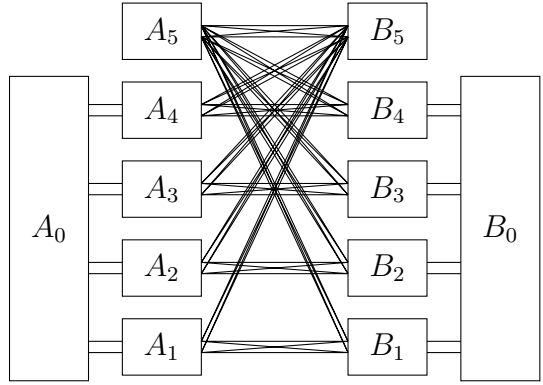
\begin{figure}
    \centering
    \begin{tikzpicture}[scale=1.5]

\foreach \x in {-1.05,-0.35,0.35,1.05,1.75} {

\draw (1-0.35,\x-0.25) -- (1+0.35,\x-0.25) -- (1+0.35,\x+0.25) --
(1-0.35,\x+0.25) -- cycle;

\draw (3-0.35,\x-0.25) -- (3+0.35,\x-0.25) -- (3+0.35,\x+0.25) --
(3-0.35,\x+0.25) -- cycle; }

\node (a1)  at (1,-1.05){$A_1$}; \node (a2)  at (1,-0.35){$A_2$};
\node (a3)  at (1,0.35){$A_3$}; \node (a4)  at (1,1.05) {$A_4$};
\node (a5)  at (1,1.75) {$A_5$};

\node (a0) at (0,0) {$A_0$}; \draw (-0.35,-1.35) -- (0.35,-1.35)
-- (0.35,1.35) -- (-0.35,1.35) -- cycle;

\node (b1)  at (3,-1.05){$B_1$}; \node (b2)  at (3,-0.35){$B_2$};
\node (b3)  at (3,0.35){$B_3$}; \node (b4)  at (3,1.05) {$B_4$};
\node (b5)  at (3,1.75) {$B_5$};

\node (b0) at (4,0) {$B_0$}; \draw (4-0.35,-1.35) -- (4.35,-1.35)
-- (4.35,1.35) -- (4-0.35,1.35) -- cycle;

\draw (0.35,-1) -- (1-0.35,-1); \draw (0.35,-1.1) --
(1-0.35,-1.1);

\draw (0.35,-0.3) -- (1-0.35,-0.3); \draw (0.35,-0.4) --
(1-0.35,-0.4);

\draw (0.35,0.3) -- (1-0.35,0.3); \draw (0.35,0.4) --
(1-0.35,0.4);

\draw (0.35,1) -- (1-0.35,1); \draw (0.35,1.1) -- (1-0.35,1.1);

\draw (4-0.35,-1) -- (3+0.35,-1); \draw (4-0.35,-1.1) --
(3+0.35,-1.1);

\draw (4-0.35,-0.3) -- (3+0.35,-0.3); \draw (4-0.35,-0.4) --
(3+0.35,-0.4);

\draw (4-0.35,0.3) -- (3+0.35,0.3); \draw (4-0.35,0.4) --
(3+0.35,0.4);

\draw (4-0.35,1) -- (3+0.35,1); \draw (4-0.35,1.1) --
(3+0.35,1.1);

\draw (1+0.35,-1) -- (3-0.35,-1) --
 (1+0.35,-1.1) -- (3-0.35,-1.1) -- cycle;

 \draw (1+0.35,-0.3) -- (3-0.35,-0.3) --
 (1+0.35,-0.4) -- (3-0.35,-0.4) -- cycle;

\draw (1+0.35,1) -- (3-0.35,1) --
 (1+0.35,1.1) -- (3-0.35,1.1) -- cycle;

 \draw (1+0.35,0.3) -- (3-0.35,0.3) --
 (1+0.35,0.4) -- (3-0.35,0.4) -- cycle;

%%A5B5

  \draw (1+0.35,1.7) -- (3-0.35,1.7) --
 (1+0.35,1.8) -- (3-0.35,1.8) -- cycle;

\draw (1+0.35,1.7) -- (3-0.35,-1) --
 (1+0.35,1.8) -- (3-0.35,-1.1) -- cycle;

 \draw (1+0.35,1.7) -- (3-0.35,-0.3) --
 (1+0.35,1.8) -- (3-0.35,-0.4) -- cycle;

\draw (1+0.35,1.7) -- (3-0.35,1) --
 (1+0.35,1.8) -- (3-0.35,1.1) -- cycle;

 \draw (1+0.35,1.7) -- (3-0.35,0.3) --
 (1+0.35,1.8) -- (3-0.35,0.4) -- cycle;

\draw (1+0.35,-1) -- (3-0.35,1.7) --
 (1+0.35,-1.1) -- (3-0.35,1.8) -- cycle;

 \draw (1+0.35,-0.3) -- (3-0.35,1.7) --
 (1+0.35,-0.4) -- (3-0.35,1.8) -- cycle;

\draw (1+0.35,1) -- (3-0.35,1.7) --
 (1+0.35,1.1) -- (3-0.35,1.8) -- cycle;

 \draw (1+0.35,0.3) -- (3-0.35,1.7) --
 (1+0.35,0.4) -- (3-0.35,1.8) -- cycle;

    \end{tikzpicture}
    \caption{Sketch of the graph $G^*$ in Definition~\ref{def-G-star}. }
    \label{fig:G-star}
\end{figure}

\begin{proposition}\label{prop:rep}
The graph $G^*$ from Definition~\ref{def-G-star} has a
representation as intersection of boxes having the 2-diagonal
property.
\end{proposition}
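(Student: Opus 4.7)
The plan is to construct an explicit box model of $G$ with upper-right corners on two parallel diagonals, placing the $A$-vertices on the lower diagonal and the $B$-vertices on the upper one. The positions of the corners on each diagonal, together with the leftward extent $X_1$ and the downward extent $Y_1$ of each box, are the parameters to choose.

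First I order each diagonal so that the pair partners $(a_{2k-1},a_{2k})$, respectively $(b_{2k-1},b_{2k})$, sit at two consecutive positions, with the four isolated vertices of $A_5$ and $B_5$ placed at one end. The within-class structure of $A$ and of $B$ is a matching of $16$ edges plus four isolated vertices, and this ordering lets me realize it simply by extending the left boundary of the even-indexed box of each pair by a small amount to catch its partner's corner, while keeping all other within-class intersections out.

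It remains to set, for each vertex, $X_1$ and $Y_1$ so as to realize the bipartite block adjacencies. Boxes of $A_0$ and $B_0$ have no cross-class neighbors, so their extents are kept small. For each $a\in A_j$ with $j\geq 1$, I choose $X_1(a)$ so that the $x$-range of $a$ reaches precisely the columns of the desired $B$-neighbors ($B_j\cup B_5$ when $j\leq 4$, and $B\setminus B_0$ when $j=5$); symmetrically, for each $b\in B$, the value $Y_1(b)$ is chosen so that the $y$-range of $b$ reaches precisely the rows of the intended $A$-neighbors.

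The delicate step is that a $B_j$-box extending its $x$-range leftward to reach the column of some $A_j$-box may cross the $x$-ranges of intervening $B$-boxes, risking unwanted within-$B$ intersections. Here I exploit the fact that a 2-diagonal model need not satisfy the blocking property from Theorem~\ref{thm:char-2-thin}: for a non-adjacent pair $A_j,B_i$, the boxes may be arranged so that neither the vertical prolongation of $B_i$ reaches $A_j$ nor the horizontal prolongation of $A_j$ reaches $B_i$. The main tool I expect to use is pinning the $Y_1$-value of a potentially conflicting $B$-box close to its upper-right corner, so that the box occupies only a thin strip near the upper diagonal and hence fails to $y$-overlap the boxes it should not meet, even though its $x$-range crosses theirs.

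The final verification is a case analysis over pairs of blocks, checking that the chosen extents produce intersection exactly for the edges of $G$. The main obstacle will be carrying this bookkeeping out consistently across the $72$ vertices of $G$, but the regularity of the block structure reduces it to a small number of representative cases.
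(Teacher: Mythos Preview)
Your plan is essentially the paper's: put $A$ on one diagonal and $B$ on the other, keep each box thin in the along-diagonal direction so that within a class only matching partners overlap, and use the cross-diagonal extent to encode the bipartite block structure. Two spots in your write-up are muddled, though. First, you cannot choose $X_1(a)$ so that the $x$-range of $a\in A_j$ covers \emph{precisely} the columns of $B_j\cup B_5$: in your ordering these are not contiguous, so the range inevitably covers the columns of other $B_{j'}$ and of the interleaved $B_0$ vertices as well. Second, your ``delicate step'' paragraph has the roles of $A$ and $B$ swapped relative to the convention you set up two paragraphs earlier. The paper handles both issues with a single staircase of thresholds rather than any per-block pinning: with the corner of $a_i$ at $(i,i+36)$ and of $b_i$ at $(i+36,i)$, it sets $Y_1=8(j-1)+\tfrac12$ for $a_i\in A_j$ ($1\le j\le 4$), $Y_1=0$ for $A_5$, tiny half-unit boxes for $A_0$, and the reflected values for $B$. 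Then $a\in A_j$ reaches the row of $b\in B_\ell$ iff $\ell\ge j$ (or $j=5$), and symmetrically $b$ reaches the column of $a$ iff $j\ge\ell$ (or $\ell=5$); both hold exactly when $j=\ell$ or one index is $5$, which is the required adjacency. This replaces your anticipated block-by-block case analysis with a two-line check.
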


\begin{proof}
The representation is as follows: the upper right corner of $a_i$
is $(i,i+36)$ for $1 \leq i \leq 36$; the lower left corners are,
for $a_i$ in $A_0$, $(i-0.5,i+35.5)$, for $a_i$ in $A_1$,
$(i-1.5,0)$, for $a_i$ in $A_2$, $(i-1.5,8.5)$, for $a_i$ in
$A_3$, $(i-1.5,16.5)$, for $a_i$ in $A_4$, $(i-1.5,24.5)$, for
$a_i$ in $A_5$, $(i-0.5,0)$. If the lower left and upper right
corners of $a_i$ are $(x,y)$ and $(w,z)$, respectively, then the
lower left and upper right corners of $b_i$ are $(y,x)$ and
$(z,w)$, respectively. It is not hard to verify that this is a
representation of $G^*$. The representation is drawn in
Figure~\ref{fig:contra-ej}.
\end{proof}

\begin{lemma}\label{lem:bip-comp}
Let $H$ be a complete bipartite graph with bipartition $(A,B)$. In
every 2-thin representation of $H$, except perhaps for the
greatest vertex of $A$ and the greatest vertex of $B$ (according
to the order associated with the representation), every vertex of
$A$ is in one class and every vertex of $B$ is in the other class.
\end{lemma}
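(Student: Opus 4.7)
The plan is to exploit the consistency condition directly. I will first prove the following key observation: in any 2-thin representation of $H$ with order $<$ and partition $\{V^1,V^2\}$, if $x \in A$ and $y \in B$ lie in the same class and $x < y$, then $y$ must be the greatest vertex of $B$ under $<$. To see this, suppose some $y' \in B$ satisfies $y < y'$. Apply consistency to the triple $(r,s,t)=(x,y,y')$: since $x,y$ are in the same class and $y'x \in E(H)$ (because $H$ is complete bipartite between $A$ and $B$), consistency forces $y'y \in E(H)$, contradicting that $B$ is independent in $H$. By the symmetric argument (swapping the roles of the two sides), if instead $y < x$ with $x \in A$, $y \in B$ in the same class, then $x$ must be the greatest vertex of $A$.

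Let $a^{\ast}$ be the greatest vertex of $A$ and $b^{\ast}$ the greatest vertex of $B$ according to $<$. The observation above immediately implies that no vertex of $A \setminus \{a^{\ast}\}$ can share a class with any vertex of $B \setminus \{b^{\ast}\}$: if $x \in A \setminus \{a^\ast\}$ and $y \in B \setminus \{b^\ast\}$ were in the same class, then whichever of $x < y$ or $y < x$ holds would force $y = b^\ast$ or $x = a^\ast$, a contradiction. Since there are only two classes available, this forces $A \setminus \{a^{\ast}\}$ to be entirely contained in one class of the partition and $B \setminus \{b^{\ast}\}$ to be entirely contained in the other, which is precisely the statement of the lemma.

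I expect no substantive obstacle: the whole argument is a one-line application of the consistency condition, repeated in the two symmetric configurations of $x$ and $y$. The only point requiring care is to handle both orderings ($x < y$ and $y < x$) so as to cover all possible placements of a mixed pair in the same class, and to note that the conclusion about $A \setminus \{a^\ast\}$ and $B \setminus \{b^\ast\}$ is vacuous when either set is empty, which also matches the statement.
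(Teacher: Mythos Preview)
Your proof is correct and follows essentially the same argument as the paper. The paper phrases the argument via the conflict graph $G_{<}$, showing that $A\setminus\{a_M\}$ is complete to $B\setminus\{b_M\}$ in $G_{<}$ (with witnesses $a_M$ and $b_M$), whereas you apply the consistency condition directly; the witnesses and the underlying logic are identical.
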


\begin{proof}
Let $<$ be the order associated with a 2-thin representation of
$H$. Let $a_M$ (resp. $b_M$) be the greatest vertex of $A$ (resp.
$B$) according to $<$. By symmetry of the graph (since the sizes
of $A$ and $B$ are not specified in the statement), we may assume
without loss of generality $a_M > b_M$.

We will prove that $A \setminus \{a_M\}$ is complete to $B
\setminus \{b_M\}$ in $G_{<}$. Let $a \in A \setminus \{a_M\}$, $b
\in B$ such that $b < a$. Then, $b < a < a_M$, $a_M b \in E(G)$
and $a_M a \not \in E(G)$, therefore $ab \in E(G_{<})$. Now, let
$b \in B \setminus \{b_M\}$, $a \in A$ such that $a < b$. Then, $a
< b < b_M$, $b_M a \in E(G)$ and $b_M b \not \in E(G)$, thus $ab
\in E(G_{<})$. Hence, $A \setminus \{a_M\}$ is complete to $B
\setminus \{b_M\}$ in $G_{<}$.

In particular, $A \setminus \{a_M\}$ and $B \setminus \{b_M\}$ are
in different sets of the partition, and since there are only two
sets in the partition, the statement holds.
\end{proof}

\begin{proposition}\label{prop:th3}
The graph $G^*$ from Definition~\ref{def-G-star} has thinness~3.
\end{proposition}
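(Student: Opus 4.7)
I plan to establish $\thin(G)=3$ by proving the two bounds separately.

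For $\thin(G)\le 3$, I would exhibit a 3-thin representation directly. Writing $A_0^{(i)}$ for the four vertices of $A_0$ matched to $A_i$ and $B_0^{(j)}$ for the four of $B_0$ matched to $B_j$, I use the vertex ordering $A_1\cup A_0^{(1)},\,B_1\cup B_0^{(1)},\dots,A_4\cup A_0^{(4)},\,B_4\cup B_0^{(4)},\,A_5,\,B_5$, each compound block taken in increasing index order, together with the 3-partition $\{A_0\cup B_0,\,A\setminus A_0,\,B\setminus B_0\}$. Consistency within each class follows by routine checks, since the matching partner of any vertex of $A_0\cup B_0$ sits in the same compound block as itself, and the cross-class edges obey the block-bipartite pattern that defines $G$.

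For $\thin(G)\ge 3$, I proceed by contradiction: suppose $G$ is 2-thin with partition $\{V^1,V^2\}$ consistent with some order $<$. First, I apply Lemma~\ref{lem:bip-comp} to each induced $K_{4,4}$ of $G$, namely $G[A_i,B_i]$ for $i\in\{1,2,3,4\}$, $G[A_i,B_5]$ for $i\in\{1,\dots,5\}$, and $G[A_5,B_j]$ for $j\in\{1,\dots,5\}$. Each such application forces at least three of the four vertices of the two sides into opposite classes; because $A_5$ participates in such a $K_{4,4}$ with every $B_j$ and $B_5$ with every $A_i$, chaining the forced majorities gives, up to swapping $V^1$ and $V^2$, that $|A_i\cap V^2|\ge 3$ for every $i\in\{1,\dots,5\}$ and $|B_j\cap V^1|\ge 3$ for every $j\in\{1,\dots,5\}$.

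Second, I extract two families of forbidden triples from consistency. The first, coming from the non-adjacencies $A_i\not\sim B_j$ for $i\ne j$ in $\{1,\dots,4\}$, forbids $a<c<b$ whenever $a\in A_5\cap V^2$, $c\in A_i\cap V^2$ and $b\in B_j\cap V^1$, with a symmetric statement on the $V^1$-side obtained by interchanging the roles of $A$ and $B$. The second, using the matching edges $a_{2k-1}a_{2k}$ and the anticompleteness $A_0\not\sim B$, forbids $a<a_{2k-1}<v_t$ whenever $a\in A_5\cap V^2$, $a_{2k-1}\in A_0\cap V^2$ and $v_t\in(B\setminus B_0)\cap V^1$; again there is a symmetric statement for $B_0$.

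The main obstacle is to convert these forbidden-triple constraints into an outright contradiction, which I would do by a case split on the position of $A_5\cap V^2$ (and of $B_5\cap V^1$) relative to the other $A_i\cap V^2$ (resp.\ $B_j\cap V^1$). In the case where some $A_5\cap V^2$-vertex precedes some $A_i\cap V^2$-vertex for each $i\in\{1,\dots,4\}$ and symmetrically on the $B$-side, the first family together with its symmetric version forces $B_j\cap V^1<A_i\cap V^2$ and $A_i\cap V^2<B_j\cap V^1$ for some $i\ne j$, an outright contradiction. Otherwise, $A_5\cap V^2$ or $B_5\cap V^1$ lies after its entire $A_1\cup\dots\cup A_4$-block (resp.\ $B_1\cup\dots\cup B_4$-block); then consistency applied to triples with $v_r\in A_i\cap V^2$, $v_s\in A_5\cap V^2$ and $v_t$ the matched partner of $v_r$ in $A_0$, combined with the size bounds $|A_0|=|B_0|=16$ and the at-most-one exception per $A_i$ and $B_j$ guaranteed by Lemma~\ref{lem:bip-comp}, forces a configuration that triggers the second family. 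The delicate bookkeeping lies in the subcases where $A_0$ or $B_0$ is split between $V^1$ and $V^2$.
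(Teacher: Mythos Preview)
Your upper-bound construction and the overall lower-bound strategy (apply Lemma~\ref{lem:bip-comp} to pin most of $A\setminus A_0$ into one class and $B\setminus B_0$ into the other, then exploit an $A_0$ vertex) match the paper's. The gap is in how you close the lower bound.

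The dichotomy you set up is not exhaustive. The negation of your Case~1 is only ``there exists some $i$ with $\max(A_i\cap V^2) < \min(A_5\cap V^2)$, or the symmetric $B$-statement,'' not ``$A_5\cap V^2$ lies after all of $(A_1\cup\dots\cup A_4)\cap V^2$.'' More seriously, your Case~2 sketch does not lead anywhere: from $A_i\cap V^2 < A_5\cap V^2$ and the triple $v_r=a_{2k}\in A_i\cap V^2$, $v_s\in A_5\cap V^2$, $v_t=a_{2k-1}$, consistency only forces $a_{2k-1}<\min(A_5\cap V^2)$, which \emph{prevents} your second forbidden family from triggering (that family requires $a<a_{2k-1}$ for some $a\in A_5\cap V^2$). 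The paper shows what is actually required here: a chain of consistency checks establishing the full block order $A'_1<A'_2<A'_3<A'_4$, $B'_1<B'_2<B'_3<B'_4$ together with the interleaving $\max\{a^j_M,b^j_M\}<\min\{a^{j+1}_S,b^{j+1}_S\}$, and then placing $A'_5,B'_5$ above $a^3_M,b^3_M$. Only after this structure is in hand can the matched partner $a_0$ of a mid-height vertex of $A'_2$ be squeezed into a short interval, and the final contradiction plays $a_0$ against a vertex of $B'_5$ (when $a_0\in V^1$) or of $A'_5$ (when $a_0\in V^2$); neither branch matches your second family as stated. Your proposal skips this ordering argument entirely, and without it the $A_0$ vertex cannot be located precisely enough to finish.
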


\begin{proof}
Let $V^1 = A \setminus A_0$, $V^2 = B \setminus B_0$, $V^3 = A_0
\cup B_0$ and the order given by $a_1$, $a_2$, $b_1$, $b_2$,
$a_3$, $a_4$, $b_3$, $b_4$, $\dots a_{35}$, $a_{36}$, $b_{35}$,
$b_{36}$. Let us see that the order and the partition are
consistent. Let $x < y < z$, such that $x$ and $y$ belong to the
same class $V^{\ell}$ and $xz \in E(G^*)$. Since for each vertex
$x$ of $V^3$ its only neighbor is the vertex immediately after $x$
in the order, $\ell \neq 3$ and $z \not \in V^3$.

Suppose first that $\ell=1$, so $x=a_i$, $y = a_j$, and $z=b_k \in
V^2$, since $V^1$ is an independent set. If $z \in B_5$, then $zy
\in E(G^*)$, as required. Otherwise, the indices $i,j,k$ are even
and, by the definition of the order, $i < j \leq k$.  If $z \in
B_t$, $1 \leq t \leq 4$, then $x = a_i \in A_t$ because it is
adjacent to $z$, $a_k \in A_t$ because of the symmetric
definitions of $B_t$ and $A_t$, and $y = a_j \in A_t$ because $i <
j \leq k$, and all the indices are even. Thus $zy \in E(G^*)$, as
required.

Suppose now that $\ell=2$, so $x=b_i$, $y = b_j$, and $z=a_k \in
V^1$, since $V^2$ is an independent set. If $z \in A_5$, then $zy
\in E(G^*)$, as required. Otherwise, the indices $i,j,k$ are even
and, by the definition of the order, $i < j < k$. If $z \in A_t$,
$1 \leq t \leq 4$, then $x = b_i \in B_t$ because it is adjacent
to $z$, $b_k \in B_t$ because of the symmetric definitions of
$B_t$ and $A_t$, and $y = b_j \in B_t$ because $i < j < k$, and
all the indices are even. Thus $zy \in E(G^*)$, as required.
Therefore, $\thin(G^*) \leq 3$.

Now, suppose that $G^*$ admits a 2-thin representation
$(<,V^1,V^2)$. Notice that $(A_j \cup A_5, B_j \cup B_5)$ induce a
complete bipartite graph, for $j = 1, \dots, 4$. So, by
Lemma~\ref{lem:bip-comp} and transitivity, except perhaps for a
few vertices, the vertices of $A \setminus A_0$ are in one of the
sets of the partition, say $V^1$, and the vertices of $B \setminus
B_0$ are in the other, say $V^2$. Let us call $A'$ the vertices of
$A \setminus A_0$ that are in $V^2$ and $B'$ the vertices of $B
\setminus B_0$ that are in $V^1$. For $j = 1, \dots, 5$, let $A_j'
= A_j \setminus A'$ and $B_j' = B_j \setminus B'$.

For $1 \leq j \leq 5$, let $a^j_M, a^j_S$ and $b^j_M, b^j_S$ be
the greatest and smallest vertices of $A_j'$ and $B_j'$,
respectively.

Let $\{i, j, k, \ell\} = \{1,2,3,4\}$. Then, for every vertex $a
\in A_j' \cup A_k' \cup A_{\ell}'$, either $a
> b^i_M$ or $a < a'$ for every $a' \in A_i'$, and, analogously, for every vertex $b \in B_j' \cup
B_k' \cup B_{\ell}'$, either $b
> a^i_M$ or $b < b'$ for every $b' \in B_i'$.

By symmetry of the graph $G^*$, we may assume $a^1_M < a^2_M <
a^3_M < a^4_M$ and $a^4_M > \max\{b^1_M,b^2_M,b^3_M,b^4_M\}$. By
the observation above, for every $b \in B_3' \cup B_2' \cup B_1'$
and $b' \in B_4'$, it holds $b < b'$. In particular, $b^4_M
> b^4_S
> \max\{b^3_M,b^2_M,b^1_M\}$. Suppose $a^3_M
> b^4_S$. Then $a^3_M > b^4_S > b^3_M$, $a^3_M b^3_M \in E(G^*)$ and
$a^3_M b^4_S \not \in E(G^*)$, a contradiction. Then $a^3_M <
b^4_S < b^4_M$, and hence, for every $a \in A_3' \cup A_2' \cup
A_1'$ and $a' \in A_4'$, it holds $a < a'$, i.e., $a^4_S
> a^3_M$.

Suppose $b^3_M > a^4_S$. Then $b^3_M > a^4_S > a^3_M$, $a^3_M
b^3_M \in E(G^*)$ and $b^3_M a^4_S \not \in E(G^*)$, a
contradiction. Then $b^3_M < a^4_S$. Suppose now $b^2_M > a^3_M$.
Then $b^2_M
> a^3_M > a^2_M$, $b^2_M a^2_M \in E(G^*)$ and $b^2_M a^3_M \not \in
E(G^*)$, a contradiction. Then $b^2_M < a^3_M$, and hence, for
every $b \in B_2' \cup B_1'$ and $b' \in B_3'$, it holds $b < b'$,
i.e., $b^3_S > \max\{b^2_M,b^1_M\}$.

Next, suppose $a^2_M > b^3_S$. Then $a^2_M > b^3_S > b^2_M$,
$a^2_M b^2_M \in E(G^*)$ and $a^2_M b^3_S \not \in E(G^*)$, a
contradiction. Then $a^2_M < b^3_S < b^3_M$, and hence, for every
$a \in A_2' \cup A_1'$ and $a' \in A_3'$, it holds $a < a'$, i.e.,
$a^2_M < a^3_S$. Suppose now $b^2_M > a^3_S$. Then $b^2_M > a^3_S
> a^2_M$, $a^2_M b^2_M \in E(G^*)$ and $b^2_M a^3_S \not \in E(G^*)$,
a contradiction. Then $b^2_M < a^3_S$.

Suppose $b^1_M > a^2_M$. Then $b^1_M
> a^2_M > a^1_M$, $b^1_M a^1_M \in E(G^*)$ and $b^1_M a^2_M \not \in
E(G^*)$, a contradiction. Then $b^1_M < a^2_M$, and hence, for
every $b \in B_1'$ and $b' \in B_2'$, it holds $b < b'$, i.e.,
$b^2_S > b^1_M$. Suppose $b^1_M > a^2_S$. Then $b^1_M > a^2_S >
a^1_M$, $a^1_M b^1_M \in E(G^*)$ and $b^1_M a^2_S \not \in
E(G^*)$, a contradiction. Then $b^1_M < a^2_S$.

Finally, suppose $a^1_M > b^2_S$. Then $a^1_M > b^2_S > b^1_M$,
$a^1_M b^1_M \in E(G^*)$ and $a^1_M b^2_S \not \in E(G^*)$, a
contradiction. Then $a^1_M < b^2_S < b^2_M$, and hence, for every
$a \in A_1'$ and $a' \in A_2'$, it holds $a < a'$, i.e., $a^1_M <
a^2_S$. Suppose now $b^1_M > a^2_S$. Then $b^1_M > a^2_S
> a^1_M$, $a^1_M b^1_M \in E(G^*)$ and $b^1_M a^2_S \not \in E(G^*)$,
a contradiction. Then $b^1_M < a^2_S$.

So, $A_1' < A_2' < A_3' < A_4'$, $B_1' < B_2' < B_3' < B_4'$, and
$\max\{a^1_M,b^1_M\} < \min\{a^2_S,b^2_S\} < \max\{a^2_M,b^2_M\} <
\min\{a^3_S,b^3_S\} < \max\{a^3_M,b^3_M\} < \min\{a^4_S,b^4_S\} <
b^4_M < a^4_M$.

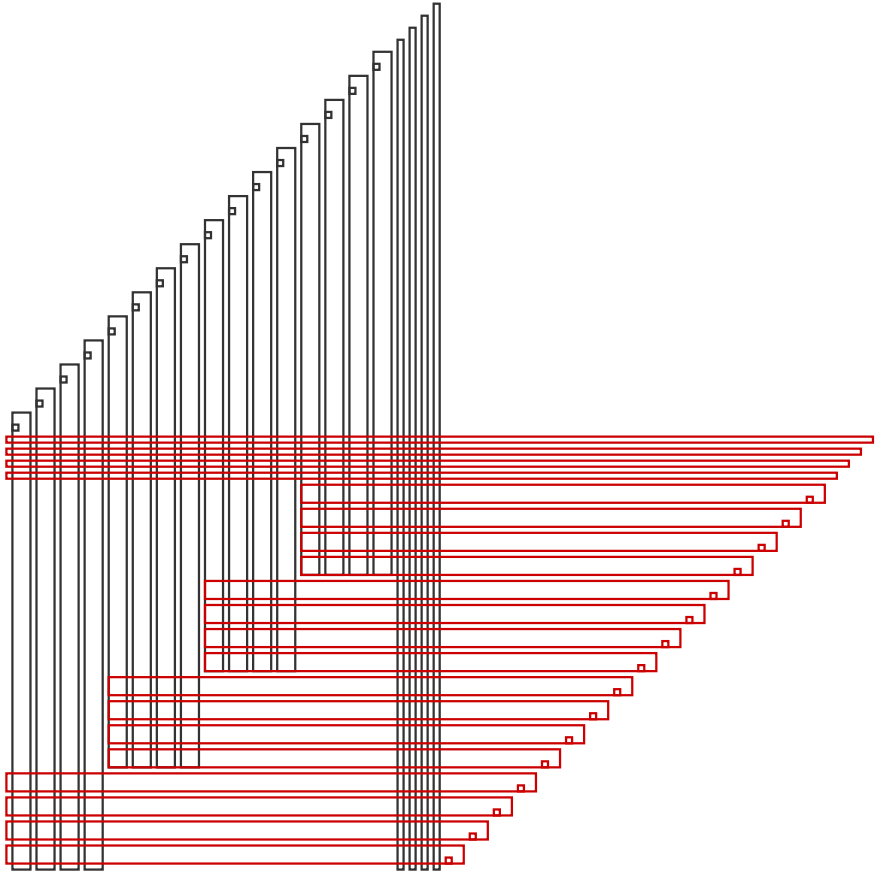
\begin{figure}[p]
\begin{center}
\begin{tikzpicture}[x=0.16cm,y=0.16cm]

\draw[line width=0.3mm,color=darkcyan] (35.5, 0)    rectangle (36, 72);  % a36                A5
\draw[line width=0.3mm,color=darkcyan] (34.5, 0)    rectangle (35, 71);  % a35                A5
\draw[line width=0.3mm,color=darkcyan] (33.5, 0)    rectangle (34, 70);  % a34                A5
\draw[line width=0.3mm,color=darkcyan] (32.5, 0)    rectangle (33, 69);  % a33                A5
\draw[line width=0.3mm,color=darkcyan] (30.5, 24.5) rectangle (32, 68);  % a32            A4
\draw[line width=0.3mm,color=darkcyan] (30.5, 66.5) rectangle (31, 67);  % a31 A0
\draw[line width=0.3mm,color=darkcyan] (28.5, 24.5) rectangle (30, 66);  % a30            A4
\draw[line width=0.3mm,color=darkcyan] (28.5, 64.5) rectangle (29, 65);  % a29 A0
\draw[line width=0.3mm,color=darkcyan] (26.5, 24.5) rectangle (28, 64);  % a28            A4
\draw[line width=0.3mm,color=darkcyan] (26.5, 62.5) rectangle (27, 63);  % a27 A0
\draw[line width=0.3mm,color=darkcyan] (24.5, 24.5) rectangle (26, 62);  % a26            A4
\draw[line width=0.3mm,color=darkcyan] (24.5, 60.5) rectangle (25, 61);  % a25 A0
\draw[line width=0.3mm,color=darkcyan] (22.5, 16.5) rectangle (24, 60);  % a24        A3
\draw[line width=0.3mm,color=darkcyan] (22.5, 58.5) rectangle (23, 59);  % a23 A0
\draw[line width=0.3mm,color=darkcyan] (20.5, 16.5) rectangle (22, 58);  % a22        A3
\draw[line width=0.3mm,color=darkcyan] (20.5, 56.5) rectangle (21, 57);  % a21 A0
\draw[line width=0.3mm,color=darkcyan] (18.5, 16.5) rectangle (20, 56);  % a20        A3
\draw[line width=0.3mm,color=darkcyan] (18.5, 54.5) rectangle (19, 55);  % a19 A0
\draw[line width=0.3mm,color=darkcyan] (16.5, 16.5) rectangle (18, 54);  % a18        A3
\draw[line width=0.3mm,color=darkcyan] (16.5, 52.5) rectangle (17, 53);  % a17 A0
\draw[line width=0.3mm,color=darkcyan] (14.5, 8.50) rectangle (16, 52);  % a16     A2
\draw[line width=0.3mm,color=darkcyan] (14.5, 50.5) rectangle (15, 51);  % a15 A0
\draw[line width=0.3mm,color=darkcyan] (12.5, 8.5)  rectangle (14, 50);  % a14     A2
\draw[line width=0.3mm,color=darkcyan] (12.5, 48.5) rectangle (13, 49);  % a13 A0
\draw[line width=0.3mm,color=darkcyan] (10.5, 8.5)  rectangle (12, 48);  % a12     A2
\draw[line width=0.3mm,color=darkcyan] (10.5, 46.5) rectangle (11, 47);  % a11 A0
\draw[line width=0.3mm,color=darkcyan] (8.5, 8.5)   rectangle (10, 46);  % a10     A2
\draw[line width=0.3mm,color=darkcyan] (8.5, 44.5)  rectangle (9, 45);   % a9  A0
\draw[line width=0.3mm,color=darkcyan] (6.5, 0)     rectangle (8, 44);   % a8    A1
\draw[line width=0.3mm,color=darkcyan] (6.5, 42.5)  rectangle (7, 43);   % a7  A0
\draw[line width=0.3mm,color=darkcyan] (4.5, 0)     rectangle (6, 42);   % a6    A1
\draw[line width=0.3mm,color=darkcyan] (4.5, 40.5)  rectangle (5, 41);   % a5  A0
\draw[line width=0.3mm,color=darkcyan] (2.5, 0)     rectangle (4, 40);   % a4    A1
\draw[line width=0.3mm,color=darkcyan] (2.5, 38.5)  rectangle (3, 39);   % a3  A0
\draw[line width=0.3mm,color=darkcyan] (0.5, 0)     rectangle (2, 38);   % a2    A1
\draw[line width=0.3mm,color=darkcyan] (0.5, 36.5)  rectangle (1, 37);   % a1  A0

% B

\draw[line width=0.3mm,color=orange] (0, 35.5)    rectangle (72, 36);  % b36                B5
\draw[line width=0.3mm,color=orange] (0, 34.5)    rectangle (71, 35);  % b35                B5
\draw[line width=0.3mm,color=orange] (0, 33.5)    rectangle (70, 34);  % b34                B5
\draw[line width=0.3mm,color=orange] (0, 32.5)    rectangle (69, 33);  % b33                B5
\draw[line width=0.3mm,color=orange] (24.5, 30.5) rectangle (68, 32);  % b32            B4
\draw[line width=0.3mm,color=orange] (66.5, 30.5) rectangle (67, 31);  % b31 B0
\draw[line width=0.3mm,color=orange] (24.5, 28.5) rectangle (66, 30);  % b30            B4
\draw[line width=0.3mm,color=orange] (64.5, 28.5) rectangle (65, 29);  % b29 B0
\draw[line width=0.3mm,color=orange] (24.5, 26.5) rectangle (64, 28);  % b28            B4
\draw[line width=0.3mm,color=orange] (62.5, 26.5) rectangle (63, 27);  % b27 B0
\draw[line width=0.3mm,color=orange] (24.5, 24.5) rectangle (62, 26);  % b26            B4
\draw[line width=0.3mm,color=orange] (60.5, 24.5) rectangle (61, 25);  % b25 B0
\draw[line width=0.3mm,color=orange] (16.5, 22.5) rectangle (60, 24);  % b24        B3
\draw[line width=0.3mm,color=orange] (58.5, 22.5) rectangle (59, 23);  % b23 B0
\draw[line width=0.3mm,color=orange] (16.5, 20.5) rectangle (58, 22);  % b22        B3
\draw[line width=0.3mm,color=orange] (56.5, 20.5) rectangle (57, 21);  % b21 B0
\draw[line width=0.3mm,color=orange] (16.5, 18.5) rectangle (56, 20);  % b20        B3
\draw[line width=0.3mm,color=orange] (54.5, 18.5) rectangle (55, 19);  % b19 B0
\draw[line width=0.3mm,color=orange] (16.5, 16.5) rectangle (54, 18);  % b18        B3
\draw[line width=0.3mm,color=orange] (52.5, 16.5) rectangle (53, 17);  % b17 B0
\draw[line width=0.3mm,color=orange] (8.5, 14.5)  rectangle (52, 16);  % b16     B2
\draw[line width=0.3mm,color=orange] (50.5, 14.5) rectangle (51, 15);  % b15 B0
\draw[line width=0.3mm,color=orange] (8.5, 12.5)  rectangle (50, 14);  % b14     B2
\draw[line width=0.3mm,color=orange] (48.5, 12.5) rectangle (49, 13);  % b13 B0
\draw[line width=0.3mm,color=orange] (8.5, 10.5)  rectangle (48, 12);  % b12     B2
\draw[line width=0.3mm,color=orange] (46.5, 10.5) rectangle (47, 11);  % b11 B0
\draw[line width=0.3mm,color=orange] (8.5, 8.5)   rectangle (46, 10);  % b10     B2
\draw[line width=0.3mm,color=orange] (44.5, 8.5)  rectangle (45, 9);   % b9  B0
\draw[line width=0.3mm,color=orange] (0, 6.5)     rectangle (44, 8);   % b8    B1
\draw[line width=0.3mm,color=orange] (42.5, 6.5)  rectangle (43, 7);   % b7  B0
\draw[line width=0.3mm,color=orange] (0, 4.5)     rectangle (42, 6);   % b6    B1
\draw[line width=0.3mm,color=orange] (40.5, 4.5)  rectangle (41, 5);   % b5  B0
\draw[line width=0.3mm,color=orange] (0, 2.5)     rectangle (40, 4);   % b4    B1
\draw[line width=0.3mm,color=orange] (38.5, 2.5)  rectangle (39, 3);   % b3  B0
\draw[line width=0.3mm,color=orange] (0, 0.5)     rectangle (38, 2);   % b2    B1
\draw[line width=0.3mm,color=orange] (36.5, 0.5)  rectangle (37, 1);   % b1  B0

\end{tikzpicture}
\end{center}
\caption{The graph $G^*$ (Definition~\ref{def-G-star}) used to
prove that the blocking property is necessary for the
characterization of 2-thin graphs as boxicity 2 graphs
(Proposition~\ref{prop:rep}).}\label{fig:contra-ej}
\end{figure}

The vertices in $B_5'$ have to be greater than $b^3_M$, which is a
non-neighbor of $a^4_M$ and smaller than it. Similarly, the
vertices in $A_5'$ have to be greater than $a^3_M$.

Let $a^2_2, a^2_3$ be the second and third greatest vertices of
$A_2'$, respectively. Let $a_0$ be the neighbor of $a^2_2$ in
$A_0$.

Suppose first $a_0 \in V^1$. If $a_0 > a^2_2$, then $a_0 < a^2_M$,
because $a_0 a^2_M \not \in E(G^*)$. If $a_0 < a^2_2$, then $a_0 >
a^2_3$, because $a^2_2 a^2_3 \not \in E(G^*)$. Let $b^5 \in B_5'$.
Then $b^5 > b^3_M > a^2_M > a_0 > a^2_S$, but $b^5 a^2_S \in
E(G^*)$ and $b^5 a^0 \not \in E(G^*)$, a contradiction.

Suppose now $a^0 \in V^2$. If $a^0 > a^2_2$, then $a^0 < a^2_M$,
because $a^0 a^2_M \not \in E(G^*)$. If $a^0 < a^2_2$, then $a^0 >
b^1_M$, because $a^2_2 > b^1_M$ and $a^2_2 b^1_M \not \in E(G^*)$.
Let $a^5 \in A_5'$. Then $a^5 > a^3_M > a^2_2 > a^0 > b^1_M$, but
$a^5 b^1_M \in E(G^*)$ and $a^5 a^0 \not \in E(G^*)$, a
contradiction.
\end{proof}

\subsection{Characterization of proper 2-thin graphs as rectangle intersection graphs}

For proper 2-thin graphs, we can relax the 2-diagonal property and
do not require the blocking property, by requiring the model to be
bi-semi proper.

\begin{theorem}\label{thm:char-2-pthin} Let $G$ be a graph. The following statements are equivalent:

\begin{itemize}
\item[$(i)$] $G$ is a proper 2-thin graph.

\item[$(ii)$] $G$ has a bi-semi-proper blocking 2-diagonal model.

\item[$(iii)$] $G$ has a bi-semi-proper weakly 2-diagonal model.
\end{itemize}

Moreover, if $G$ is proper 2-thin and the partition $V^1,V^2$ of
its vertices is strongly consistent with an order $<$, then there
exists a bi-semi-proper blocking 2-diagonal model such that on
each of the diagonals lie, respectively, the upper-right corners
of the vertices of $V^1$ and $V^2$, in such a way that their order
corresponds to $<$ restricted to the respective part. Furthermore,
if $G$ admits a bi-semi-proper weakly 2-diagonal model, then there
exists an order of the vertices of $G$ that is consistent with the
partition given by the diagonals where the upper-right corners
lie, and extends their order on the respective diagonals.
\end{theorem}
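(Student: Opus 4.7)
The plan is to argue the cycle of implications $(i)\Rightarrow(ii)\Rightarrow(iii)\Rightarrow(i)$ together with the two extra statements. The first implication and the ``moreover'' part come for free from Lemma~\ref{lem:model1}: the model $\mathcal{M}_1(G)$ built from a strongly consistent order $<$ and partition $(V^1,V^2)$ is bi-semi-proper, blocking and 2-diagonal, and its upper-right corners lie on the lower and upper diagonal in precisely the restriction of $<$ to $V^1$ and $V^2$. The implication $(ii)\Rightarrow(iii)$ is immediate, since every 2-diagonal model is a fortiori weakly 2-diagonal and the blocking requirement is only stronger than what (iii) asks.

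For $(iii)\Rightarrow(i)$ and the ``furthermore'' part, I would fix a bi-semi-proper weakly 2-diagonal model, let $V^1$ and $V^2$ be the vertices whose upper-right corners lie on the lower and upper diagonal respectively, and define a partial order by $u<v$ in the same $V^i$ whenever $X_2(u)<X_2(v)$. The bi-semi-proper property guarantees that the lower-left corners along each diagonal are monotone, so $u<v$ in $V^i$ forces $X_1(u)\leq X_1(v)$, $Y_1(u)\leq Y_1(v)$, $X_2(u)<X_2(v)$ and $Y_2(u)<Y_2(v)$; an interval-overlap computation along the lines of the first step of Lemma~\ref{lem:model1} then shows that $<$ is strongly consistent on each $V^i$. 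Lemma~\ref{lem:D-pthin} then reduces the construction of a strongly consistent total extension of $<$ to proving that the digraph $\tilde{D}(G,\Pi,<)$ is acyclic.

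The main work is establishing acyclicity of $\tilde{D}$, for which I would follow the case analysis in the proof of Theorem~\ref{thm:char-2-thin}. Since the arcs inside each $V^i$ coming from the partial order form a tournament that is acyclic, a shortest cycle of $\tilde{D}$ has at most two vertices per class, so its skeleton is one of: a $2$-cycle between a vertex of $V^1$ and a vertex of $V^2$; a $3$-cycle with two vertices in one class and one in the other; or a $4$-cycle with two vertices in each class. Every cross-class arc $xy$ of $\tilde{D}$ (with $x\in V^i$, $y\in V^j$, $i\neq j$) is witnessed either by some $y'<y$ in $V^j$ with $xy'\in E(G)$, or by some $x'>x$ in $V^i$ with $x'y\in E(G)$, giving up to four witness combinations per skeleton.

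The key observation, applied in every subcase, is that chaining the $X_2$-inequalities from the order $<$, the $X_1$- and $Y_1$-monotonicity along each diagonal given by bi-semi-proper, and the interval-overlap inequalities coming from the witness edges, forces one of the non-adjacent pairs $uv$ in the cycle (with $u\in V^i$, $v\in V^j$) to satisfy $X_1(u)\leq X_2(v)$, $X_1(v)\leq X_2(u)$, $Y_1(u)\leq Y_2(v)$ and $Y_1(v)\leq Y_2(u)$, i.e., $uv\in E(G)$, a contradiction. I expect the main obstacle to be only the tedious bookkeeping across the four witness combinations in each of the three cycle lengths; conceptually, the bi-semi-proper hypothesis supplies exactly the geometric information needed to replace the appeal to the blocking property made in the proof of Theorem~\ref{thm:char-2-thin}.
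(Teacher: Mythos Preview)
Your proposal is correct and follows the same overall architecture as the paper: $(i)\Rightarrow(ii)$ via Lemma~\ref{lem:model1}, $(ii)\Rightarrow(iii)$ trivially, and $(iii)\Rightarrow(i)$ by showing $\tilde D(G,\Pi,<)$ is acyclic via a short-cycle case analysis, using Lemma~\ref{lem:D-pthin} to finish. Your verification that $<$ is strongly consistent on each class, and your reduction to cycles with at most two vertices per class, match the paper exactly.

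The one organisational difference is worth noting. The paper classifies each cross-arc $v_1v_2$ by the \emph{geometric type} of the pair of upper-right corners (five mutually exclusive positions $a,b,c,d,e$ depending on the signs of $X_2(v_1)-X_2(v_2)$ and $Y_2(v_1)-Y_2(v_2)$), derives coordinate inequalities for each type from non-adjacency plus bi-semi-proper, and then runs through the $3\times 3$ compatible type pairs in each of the four skeletons. You instead classify each cross-arc by which of the two \emph{witnesses} in the definition of $\tilde D$ realises it, giving only $2\times 2$ subcases per skeleton, and you aim for a uniform contradiction of the form ``a declared non-edge is forced to be an edge''. This does work: in every subcase the witness edge, pushed through the bi-semi-proper monotonicity of $X_1,Y_1$ and the diagonal monotonicity of $X_2,Y_2$, yields all four box-intersection inequalities for one of the two cross non-edges of the cycle. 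Your organisation is therefore somewhat tidier than the paper's, though the underlying geometric content is identical; the paper occasionally has to reopen the witness definition anyway (e.g.\ in its $b_{12}$--$c_{21}$ subcases), which is precisely the information your scheme front-loads.
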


\begin{proof} $(i) \Rightarrow (ii)$) It follows from Lemma~\ref{lem:model1}. \\

\noindent
$(ii) \Rightarrow (iii)$) This is straightforward. \\

\noindent $(iii) \Rightarrow (i)$) Let us consider a
bi-semi-proper weakly 2-diagonal model of $G$. We will slightly
abuse notation and use it indistinctly for a vertex and the box
representing it. Let $V^i$ be the set of vertices $v$ such that
$Y_2(v) - X_2(v) = d_i$, for $i = 1,2$. We may assume without loss
of generality that $d_1 < d_2$.

Let $\Pi = \{V^1,V^2\}$, $<$ be the order of $V^1 \cup V^2$
defined by the $X_2$ coordinates on each of the sets, and where a
vertex of $V^1$ and a vertex of $V^2$ are not comparable.

Let us first prove that $<$ is strongly consistent restricted to
$V^i$, $i = 1,2$. Let $x < y < z$ in $V^1$ with $xz \in E(G)$ (the
definitions are symmetric with respect to both classes). Then
$X_2(x) < X_2(y) < X_2(z)$ and since $xz \in E(G)$, it holds
$X_1(z) < X_2(x) < X_2(y)$ and $Y_1(z) < Y_2(x) = X_2(x) + d_1 <
X_2(y) + d_1 = Y_2(y)$. Therefore, $yz \in E(G)$. By the
bi-semi-proper property, $X_1(y) \leq X_1(z) < X_2(x)$ and $Y_1(y)
\leq Y_1(z) < Y_2(x)$, and therefore $xy \in E(G)$.

Let $D = \tilde{D}(G,\Pi,<)$. By Lemma~\ref{lem:D-pthin}, an
ordering of $V(G)$ is strongly consistent with the partition
$V^1,V^2$ and extends the partial order $<$ if and only if it is a  % replace partial order by order defined on each of the sets ? more accurate
topological ordering of $D$, thus let us prove that $D$ is
acyclic, Suppose it is not, and consider a shortest directed cycle
of $D$. Moreover, since the subdigraph induced by each class is
complete and acyclic, the cycle has at most two vertices of each
class, and necessarily an arc from $V^1$ to $V^2$ and another from
$V^2$ to $V^1$.

The possible types of arcs joining $v_1 \in V^1$ and $v_2 \in V^2$
are the following:
\begin{itemize}
\item Type $a$ if $X_2(v_1)
> X_2(v_2)$ and $Y_2(v_1) < Y_2(v_2)$;
\item Type $b$ if $X_2(v_1) < X_2(v_2)$ and $Y_2(v_1) < Y_2(v_2)$;
\item Type $c$ if $X_2(v_1) > X_2(v_2)$ and $Y_2(v_1) > Y_2(v_2)$;
\item Type $d$ if $X_2(v_1) = X_2(v_2)$ and $Y_2(v_1) < Y_2(v_2)$;
\item Type $e$ if $X_2(v_1) > X_2(v_2)$ and $Y_2(v_1) = Y_2(v_2)$,
\end{itemize}
and we use the subindex $ij$ if the orientation of the arc is from
$V^i$ to $V^j$.

The following properties hold because the model is bi-semi-proper.

\begin{itemize}
\item Type $a_{12}$: $Y_1(v_2) > Y_2(v_1)$ and $X_1(v_1) <
X_2(v_2)$;

\item Type $a_{21}$: $Y_1(v_2) < Y_2(v_1)$ and $X_1(v_1) >
X_2(v_2)$;

\item Type $b_{12}$: $Y_1(v_2) > Y_2(v_1)$ or $X_1(v_2) >
X_2(v_1)$;

\item Types $b_{21}$ and $c_{12}$: cannot exist;

\item Type $c_{21}$: $Y_1(v_1) > Y_2(v_2)$ or $X_1(v_1) >
X_2(v_2)$;

\item Type $d_{12}$: $Y_1(v_2) > Y_2(v_1)$;

\item Types $d_{21}$ and $e_{12}$: cannot exist;

\item Type $e_{21}$: $X_1(v_1) > X_2(v_2)$.
\end{itemize}

Let us see the possible cycles.

\noindent \textbf{{Case 1:}} The cycle consists of two vertices,
$v_1 \in V^1$ and $v_2 \in V^2$.

It cannot happen, since for every $t \in \{b,c,d,e\}$, one of
$t_{12}$ or $t_{21}$ cannot exist, and $a_{12}$ and $a_{21}$
cannot occur simultaneously.

\noindent \textbf{{Case 2:}} The cycle is $v_1 w_1 v_2$ such that
$v_1, w_1 \in V^1$ and $v_2 \in V^2$.

Since $v_1 w_1 \in D$, we have $X_2(v_1) < X_2(w_1)$ and therefore
 $Y_2(v_1) < Y_2(w_1)$. As the model is bi-semi-proper, $X_1(v_1) \leq X_1(w_1)$ and $Y_1(v_1) \leq
Y_1(w_1)$.

The arc $w_1 v_2$, as seen before, may be of type $a_{12}$, thus
$Y_1(v_2)
> Y_2(w_1)$ and $X_1(w_1) < X_2(v_2)$; or type $b_{12}$, thus
$Y_1(v_2) > Y_2(w_1)$ or $X_1(v_2)
> X_2(w_1)$; or type $d_{12}$, and in that case $Y_1(v_2) > Y_2(w_1)$.

Also, the arc $v_2 v_1$, as seen before, can be type $a_{21}$,
thus $Y_1(v_2) < Y_2(v_1)$ and $X_1(v_1) > X_2(v_2)$; or type
$c_{21}$, thus $Y_1(v_1) > Y_2(v_2)$ or $X_1(v_1) > X_2(v_2)$; or
type $e_{21}$, and in that case $X_1(v_1)
>
X_2(v_2)$.

The options $a_{12}$ and $a_{21}$ imply $Y_2(w_1) < Y_1(v_2) <
Y_2(v_1)$, but we know that $Y_2(v_1) < Y_2(w_1)$.

The options $a_{12}$ and $c_{21}$ imply either $Y_2(v_1) >
Y_1(v_1)
> Y_2(v_2) > Y_1(v_2) > Y_2(w_1)$, which contradicts $Y_2(v_1) <
Y_2(w_1)$, or $X_1(v_1)
> X_2(v_2) > X_1(w_1)$, which contradicts $X_1(v_1) \leq
X_1(w_1)$.

The options $a_{12}$ and $e_{21}$ imply $X_1(v_1) > X_2(v_2)
> X_1(w_1)$, which contradicts $X_1(v_1) \leq X_1(w_1)$.

The options $b_{12}$ and $a_{21}$ are incompatible because they
imply either $Y_2(v_1) > Y_1(v_2) > Y_2(w_1)$, that contradicts
$Y_2(v_1) < Y_2(w_1)$, or $X_2(v_1) > X_1(v_1) > X_2(v_2) >
X_1(v_2) > X_2(w_1)$, that contradicts $X_2(v_1) < X_2(w_1)$.

The options $b_{12}$ and $c_{21}$ are incompatible because
$Y_2(v_1) > Y_1(v_1) > Y_2(v_2) > Y_2(w_1)$ contradicts $Y_2(v_1)
< Y_2(w_1)$; $X_2(v_1) > X_1(v_1) > X_2(v_2) > X_2(w_1)$
contradicts $X_2(v_1) < X_2(w_1)$; $X_1(v_2) > X_2(w_1)$ and
$Y_1(v_1) > Y_2(v_2)$ implies $Y_2(v_2) = X_2(v_2) + d_2 >
X_1(v_2) + d_2 > X_2(w_1) + d_2 > X_2(v_1) + d_2 = Y_2(v_1) +
d_2-d_1 > Y_1(v_1) + d_2 - d_1 > Y_2(v_2) + d_2 - d_1 > Y_2(v_2)$,
a contradiction. Finally, suppose that $Y_1(v_2) > Y_2(w_1)$ and
$X_1(w_1) \geq X_1(v_1)
> X_2(v_2)$. The existence of the arc $w_1v_2$ implies that either
there exists $v_{2}' \in V^{2}$ with $X_2(v_2') < X_2(v_2)$ and
$w_1v_{2}' \in E(G)$, or there exists $v_1' \in V^1$ with
$X_2(v_1')
> X_2(w_1)$ and $v_1'v_{2} \in E(G)$. But $X_2(v_2') < X_2(v_2) <
X_1(w_1)$ contradicts $w_1v_{2}' \in E(G)$ and $X_2(v_1')
> X_2(w_1)$ implies $X_1(v_1') \geq X_1(w_1) > X_2(v_2)$ which
contradicts $v_1'v_{2} \in E(G)$.

The options $b_{12}$ and $e_{21}$ are incompatible because they
imply either $Y_2(v_1) = Y_1(v_2) > Y_2(w_1)$, a contradiction, or
$X_1(v_1) > X_2(v_2) > X_1(v_2)
> X_2(w_1)$, a contradiction too.

The options $d_{12}$ and $a_{21}$ imply $Y_2(v_1) > Y_1(v_2)
> Y_2(w_1)$, a contradiction.

The options $d_{12}$ and $c_{21}$ imply either $Y_2(v_1) >
Y_1(v_1)
> Y_2(v_2) > Y_2(w_1)$, a contradiction, or $X_2(v_1) > X_1(v_1) >
X_2(v_2) = X_2(w_1)$, a contradiction too.

The options $d_{12}$ and $e_{21}$ imply $Y_2(v_1) = Y_2(v_2)
> Y_1(v_2) > Y_2(w_1)$, a contradiction.

\noindent \textbf{{Case 3:}} The cycle is $v_2 w_2 v_1$ such that
$v_1 \in V^1$ and $v_2, w_2 \in V^2$.

Since $v_2 w_2 \in D$, we have $X_2(v_2) < X_2(w_2)$ and therefore
$Y_2(v_2) < Y_2(w_2)$. As the model is bi-semi-proper, $X_1(v_2)
\leq X_1(w_2)$ and $Y_1(v_2) \leq Y_1(w_2)$.

The arc $v_1 v_2$, as seen before, can be type $a_{12}$, thus
$Y_1(v_2)
> Y_2(v_1)$ and $X_1(v_1) < X_2(v_2)$; or type $b_{12}$, thus
$Y_1(v_2) > Y_2(v_1)$ or $X_1(v_2)
> X_2(v_1)$; or type $d_{12}$, and in that case $Y_1(v_2) > Y_2(v_1)$.

Also, the arc $w_2 v_1$, as seen before, can be type $a_{21}$,
thus $Y_1(w_2) < Y_2(v_1)$ and $X_1(v_1) > X_2(w_2)$; or type
$c_{21}$, thus $Y_1(v_1) > Y_2(w_2)$ or $X_1(v_1) > X_2(w_2)$; or
type $e_{21}$, and in that case $X_1(v_1)
>
X_2(w_2)$.

The options $a_{12}$ and $a_{21}$ imply $Y_1(v_2) > Y_2(v_1)
> Y_1(w_2)$, a contradiction.

The options $a_{12}$ and $c_{21}$ imply either $Y_2(v_2) >
Y_1(v_2)
> Y_2(v_1) > Y_1(v_1) > Y_2(w_2)$, a contradiction, or $X_2(w_2) <
X_1(v_1) < X_2(v_2)$, a contradiction.

The options $a_{12}$ and $e_{21}$ imply $X_2(w_2) < X_1(v_1) <
X_2(v_2)$, a contradiction.

The options $b_{12}$ and $a_{21}$ imply either $Y_1(v_2) >
Y_2(v_1)
> Y_1(w_2)$, a contradiction, or $X_2(v_2) > X_1(v_2)
> X_2(v_1) > X_1(v_1) >
X_2(w_2)$, a contradiction.

The options $b_{12}$ and $c_{21}$ are incompatible because
$Y_2(v_2) > Y_1(v_2) > Y_2(v_1) > Y_2(w_2)$ contradicts $Y_2(v_2)
< Y_2(w_2)$; $X_1(v_2) > X_2(v_1) > X_2(v_2)$ contradicts
$X_1(v_2) < X_2(v_2)$; $X_1(v_2) > X_2(v_1)$ and $Y_1(v_1) >
Y_2(w_2)$ implies $Y_2(v_2) = X_2(v_2) + d_2 > X_1(v_2) + d_2 >
X_2(v_1) + d_2 = Y_2(v_1) + d_2-d_1 > Y_1(v_1) + d_2 - d_1
> Y_2(w_2) + d_2 - d_1 > Y_2(w_2)$, a contradiction. Finally,
suppose that $Y_1(v_2) > Y_2(v_1)$ and $X_1(v_1)
> X_2(w_2)$. The existence of the arc $v_1v_2$ implies that either
there exists $v_{2}' \in V^{2}$ with $X_2(v_2') < X_2(v_2)$ and
$v_1v_{2}' \in E(G)$, or there exists $v_1' \in V^1$ with
$X_2(v_1')
> X_2(v_1)$ and $v_1'v_{2} \in E(G)$. But $X_2(v_2') < X_2(v_2) <
X_2(w_2) < X_1(v_1)$ contradicts $v_1v_{2}' \in E(G)$ and
$X_2(v_1') > X_2(v_1)$ implies $X_1(v_1') \geq X_1(v_1) > X_2(w_2)
> X_2(v_2)$ which contradicts $v_1'v_{2} \in E(G)$.

The options $b_{12}$ and $e_{21}$ imply $Y_2(v_2) > Y_1(v_2)
> Y_2(v_1) = Y_2(w_2)$, a contradiction.

The options $d_{12}$ and $a_{21}$ imply $Y_1(v_2) > Y_2(v_1)
> Y_1(w_2)$, a contradiction.

The options $d_{12}$ and $c_{21}$ imply either $Y_2(v_2) >
Y_1(v_2)
> Y_2(v_1) > Y_1(v_1) > Y_2(w_2)$, a contradiction, or $X_2(v_2) =
X_2(v_1) > X_1(v_1) > X_2(w_2)$, a contradiction.

The options $d_{12}$ and $e_{21}$ imply $Y_2(v_2) > Y_1(v_2)
> Y_2(v_1) = Y_2(w_2)$, a contradiction.

\noindent \textbf{{Case 4:}} The cycle is $v_1 w_1 v_2 w_2$ such
that $v_1, w_1 \in V^1$ and $v_2, w_2 \in V^2$.

Since $v_i w_i \in D$, for $i = 1,2$, we have $X_2(v_i) <
X_2(w_i)$ therefore $Y_2(v_i) < Y_2(w_i)$. As the model is
bi-semi-proper, $X_1(v_i) \leq X_1(w_i)$ and $Y_1(v_i) \leq
Y_1(w_i)$.

The arc $w_1 v_2$, as seen before, can be type $a_{12}$, thus
$Y_1(v_2)
> Y_2(w_1)$ and $X_1(w_1) < X_2(v_2)$; or type $b_{12}$, thus
$Y_1(v_2) > Y_2(w_1)$ or $X_1(v_2)
> X_2(w_1)$; or type $d_{12}$, and in that case $Y_1(v_2) > Y_2(w_1)$.

Also, the arc $w_2 v_1$, as seen before, can be type $a_{21}$,
thus $Y_1(w_2) < Y_2(v_1)$ and $X_1(v_1) > X_2(w_2)$; or type
$c_{21}$, thus $Y_1(v_1) > Y_2(w_2)$ or $X_1(v_1) > X_2(w_2)$; or
type $e_{21}$, and in that case $X_1(v_1)
>
X_2(w_2)$.

The options $a_{12}$ and $a_{21}$ imply $Y_2(v_1) > Y_1(w_2) \geq
Y_1(v_2) > Y_2(w_1)$, a contradiction.

The options $a_{12}$ and $c_{21}$ imply $Y_2(v_2) > Y_1(v_2)
> Y_2(w_1) > Y_2(v_1) > Y_1(v_1)
> Y_2(w_2)$, a contradiction.

The options $a_{12}$ and $e_{21}$ imply $X_1(w_1) < X_2(v_2) <
X_2(w_2) < X_1(v_1)$, a contradiction.

The options $b_{12}$ and $a_{21}$ imply either $Y_2(v_1) >
Y_1(w_2) \geq Y_1(v_2) > Y_2(w_1)$, a contradiction, or $X_2(v_2)
> X_1(v_2)
> X_2(w_1) > X_2(v_1) > X_1(v_1) >
X_2(w_2)$, a contradiction.

%%%%%%%%%%%%%%%%%%%%%%%%%%%%%%

The options $b_{12}$ and $c_{21}$ are incompatible because
$Y_2(v_2) > Y_1(v_2) > Y_2(w_1) > Y_2(v_1) > Y_2(w_2)$ contradicts
$Y_2(v_2) < Y_2(w_2)$; $X_2(v_2) > X_1(v_2) > X_2(w_1) > X_2(v_1)
> X_2(w_2)$, contradicts $X_2(v_2) < X_2(w_2)$;
$X_1(v_2) > X_2(w_1)$ and $Y_1(v_1) > Y_2(w_2)$ implies $Y_2(v_2)
= X_2(v_2) + d_2 > X_1(v_2) + d_2 > X_2(w_1) + d_2 > X_2(v_1) +
d_2 = Y_2(v_1) + d_2-d_1 > Y_1(v_1) + d_2 - d_1
> Y_2(w_2) + d_2 - d_1 > Y_2(w_2)$, a contradiction. Finally,
suppose that $Y_1(v_2) > Y_2(w_1)$ and $X_1(v_1)
> X_2(w_2)$. The existence of the arc $w_1v_2$ implies that either
there exists $v_{2}' \in V^{2}$ with $X_2(v_2') < X_2(v_2)$ and
$w_1v_{2}' \in E(G)$, or there exists $v_1' \in V^1$ with
$X_2(v_1')
> X_2(w_1)$ and $v_1'v_{2} \in E(G)$. But $X_2(v_2') < X_2(v_2) <
X_2(w_2) < X_1(v_1) \leq X_1(w_1)$ contradicts $w_1v_{2}' \in
E(G)$, and $X_2(v_1')
> X_2(w_1)$ implies $X_1(v_1') \geq X_1(w_1) \geq X_1(v_1) > X_2(w_2) >
X_2(v_2)$ which contradicts $v_1'v_{2} \in E(G)$.

The options $b_{12}$ and $e_{21}$ imply $X_1(v_1) > X_2(w_2)
> X_2(v_2) > X_2(w_1)$, a contradiction.

The options $d_{12}$ and $a_{21}$ imply $Y_1(v_2) > Y_2(w_1)
> Y_2(v_1) > Y_1(w_2)$, a contradiction.

The options $d_{12}$ and $c_{21}$ imply $Y_2(v_2) > Y_1(v_2)
> Y_2(w_1) > Y_2(v_1) > Y_1(v_1)
> Y_2(w_2)$, a contradiction.

The options $d_{12}$ and $e_{21}$ imply $Y_2(v_2) > Y_1(v_2)
> Y_2(w_1) > Y_2(v_1) = Y_2(w_2)$, a contradiction.

\end{proof}

The bi-semi-proper requirement is necessary, otherwise we can
represent any interval graph (we place the intervals on a diagonal
line and make each of them the diagonal of a square box, see
Figure~\ref{fig:classes}), and interval graphs may have
arbitrarily large proper thinness~\cite{B-D-thinness}.

\subsection{Definition of the 2-grounded model $\mathcal{M}_2$}

%%------------
The model $\mathcal{M}_1$ can be transformed into a model where
the rectangles lie within the 3rd quadrant of the Cartesian plane,
and each of the rectangles has either its top side or its right
side on a Cartesian axis. We call such a model \emph{2-grounded}.
Since every pair of intersecting rectangles has nonempty
intersection within the third quadrant, it is enough to define the
model $\mathcal{M}_2$ as the intersection of $\mathcal{M}_1$ and
the 3rd quadrant of the Cartesian plane (see
Figure~\ref{fig:M2-M3}). By adjusting the definitions of blocking
and bi-semi-proper for 2-grounded models, we can prove
characterizations analogous to those in
Theorems~\ref{thm:char-2-thin} and~\ref{thm:char-2-pthin} for
(proper) 2-thin graphs.

%A 2-grounded model is \emph{bi-semi-proper} if for any two boxes
%$b$, $b'$, defined by $x_1,x_2,y_1,y_2$ and $x_1',x_2',y_1',y_2'$
%and such that $y_2 = y_2'$ and $x_2 < x_2'$ or such that $x_2 =
%x_2'$ and $y_2 < y_2'$, it holds $x_1 \leq x_1'$ and $y_1 \leq
%y_1'$.

\section{Thin graphs as VPG graphs}\label{sec:vpg}

A graph is \emph{$B_k$-VPG} (resp. \emph{$B_k$-EPG}) if it is the
vertex (resp. edge) intersection graph of paths with at most
\emph{$k$ bends} in a grid~\cite{asinowski,Golumbic-epg}.

Every graph is an EPG graph~\cite{Golumbic-epg}, and $B_0$-EPG
graphs are exactly the interval graphs, or 1-thin graphs. However,
we have the following.

\begin{proposition}\label{prop:epg}
The bend number of proper independent 2-thin graphs as EPG graphs
is unbounded.
\end{proposition}

\begin{proof}
Complete bipartite graphs are proper independent 2-thin graphs,
and that class has unbounded EPG bend number~\cite{Suk-epg}.
\end{proof}

The definitions of EPG and VPG graphs are similar, but the classes
behave very differently. On the one hand, VPG graphs, without
bounds in the number of bends, are equivalent to \emph{string
graphs}~\cite{asinowski}, and not every graph is a string
graph~\cite{Sinden-K5}. On the other hand, $B_0$-VPG graphs
properly contain the class of interval graphs.

Indeed, we will first prove that $B_0$-VPG graphs have unbounded
thinness, and that not every 4-thin graph is a VPG graph. Then, we
will prove that graphs with thinness at most three have bounded
bend number as VPG graphs.

Let $G$ be a graph. Let $\Delta(G)$ be the maximum degree of a
vertex in $G$. A subgraph $H$ (not necessarily induced) of $G$ is
a \emph{spanning subgraph} if $V(H)=V(G)$. If $X \subseteq V(G)$,
denote by $N(X)$ the set of vertices of $G$ having at least one
neighbor in $X$. The \emph{vertex isoperimetric peak} of a graph
$G$, denoted as $b_v(G)$, is defined as $b_v(G) = \max_s
\min_{X\subset V, |X|=s} |N(X) \cap (V(G) \setminus X)|$, i.e.,
the maximum over $s$ of the lower bounds for the number of
boundary vertices (vertices outside the set with a neighbor in the
set) in sets of size $s$.

\begin{theorem}\emph{\cite{C-M-O-thinness-man}}\label{thm:peak}
For every graph $G$ with at least one edge, $\thin(G)\geq
b_v(G)/\Delta(G)$.
\end{theorem}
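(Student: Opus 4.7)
The plan is to use the characterization of thinness directly. Let $k = \thin(G)$ and fix a consistent ordering $v_1 < v_2 < \dots < v_n$ together with a consistent partition $(V^1,\dots,V^k)$. I will show that for every $s$, the prefix $X_s = \{v_1,\dots,v_s\}$ has boundary of size at most $k\cdot\Delta(G)$, which immediately yields $b_v(G)\leq k\cdot\Delta(G)$.

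First I would use the consistency property to prove the following key observation: for each class $V^i$ such that $V^i\cap X_s\neq\emptyset$, let $m_i$ denote the \emph{largest} (with respect to $<$) vertex of $V^i\cap X_s$; then every vertex in $N(X_s)\setminus X_s$ is adjacent to some $m_i$. Indeed, if $v_t\in N(X_s)\setminus X_s$, then $t>s$ and there is some $v_r\in X_s$ with $v_rv_t\in E(G)$. Let $V^i$ be the class of $v_r$ and let $m_i$ be as above. If $v_r=m_i$ we are done; otherwise $r < \text{index}(m_i) < t$ with $v_r$ and $m_i$ in the same class $V^i$ and $v_tv_r\in E(G)$, so consistency gives $v_tm_i\in E(G)$.

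From this observation, $N(X_s)\setminus X_s \subseteq \bigcup_{i:\,V^i\cap X_s\neq\emptyset} N(m_i)$, and therefore
\[
|N(X_s)\setminus X_s| \;\leq\; \sum_{i:\,V^i\cap X_s\neq\emptyset} \deg(m_i) \;\leq\; k\cdot\Delta(G).
\]
Hence $\min_{|X|=s}|N(X)\cap(V(G)\setminus X)|\leq k\cdot\Delta(G)$ for every $s$ (witnessed by $X=X_s$), so taking the maximum over $s$ gives $b_v(G)\leq k\cdot\Delta(G)$. Since $G$ has at least one edge, $\Delta(G)\geq 1$, and dividing yields $\thin(G)=k\geq b_v(G)/\Delta(G)$.

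The only conceptual step is the observation that each boundary vertex is forced to be adjacent to the rightmost representative of some class inside the prefix; this is a direct application of consistency and I do not anticipate any technical obstacle. The hypothesis that $G$ has at least one edge is used only to guarantee $\Delta(G)\geq 1$ so that the final division is meaningful.
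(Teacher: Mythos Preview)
Your proof is correct. Note that the paper does not actually prove this theorem---it is quoted from~\cite{C-M-O-thinness-man} and stated without proof---so there is no in-paper argument to compare against; that said, your argument (choosing the prefix sets $X_s$ and observing via consistency that each boundary vertex must be adjacent to the $<$-maximal vertex of some class inside $X_s$) is exactly the natural one and is essentially how the bound is derived in the cited source.
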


The following corollary is also useful.

\begin{corollary}\label{cor:peak}
Let $G$ be a graph such that $\Delta(G) \leq d$, and $H$ be a (not
necessarily induced) subgraph of $G$ with at least one edge and
such that $b_v(H) \geq b$. Then $\thin(G)\geq b/d$.
\end{corollary}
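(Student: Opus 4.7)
The plan is to revisit the proof of Theorem~\ref{thm:peak} rather than invoke it as a black box, because for a non-induced subgraph $H\subseteq G$ there is no automatic inequality $\thin(H)\leq\thin(G)$ and a consistent ordering of $H$ is not inherited from one of $G$. Instead I will borrow the \emph{upper bound} on the boundary of a set from the consistent ordering of $G$ (and from $\Delta(G)\leq d$), and pair it with the \emph{lower bound} on the boundary supplied by $b_v(H)$.

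Concretely, I would first fix a consistent ordering $<$ and $k$-partition $(V^1,\ldots,V^k)$ of $V(G)$ witnessing $k=\thin(G)$. Next, I would choose an integer $s$ realising the outer maximum in the definition of $b_v(H)$, so that every $X\subseteq V(H)$ with $|X|=s$ satisfies $|N_H(X)\setminus X|\geq b_v(H)\geq b$, and take $X$ to be the first $s$ vertices of $V(H)$ in the order $<$ (a prefix of $V(H)$, although not necessarily of $V(G)$). For each class $V^j$ meeting $X$, let $u^j:=\max_{<}(X\cap V^j)$.

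The heart of the argument is the inequality $|N_H(X)\setminus X|\leq k\,\Delta(G)$. To establish it I would show that every $v\in N_H(X)\setminus X$ is $G$-adjacent to some $u^j$: given a witness edge $uv\in E(H)\subseteq E(G)$ with $u\in X\cap V^j$, we have $u\le u^j<v$ in $<$ (because $X$ is a prefix of $V(H)$ and $v\in V(H)\setminus X$), and consistency of $<$ for $G$ applied to the triple $(u,u^j,v)$ forces $u^j v\in E(G)$. Summing over the at most $k$ values of $j$ then yields $b\leq |N_H(X)\setminus X|\leq\sum_j |N_G(u^j)|\leq k\,\Delta(G)\leq kd$, hence $\thin(G)\geq b/d$.

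The one subtlety, and the only place where the present situation departs from the proof of Theorem~\ref{thm:peak}, is precisely that $X$ need only be a prefix of $V(H)$ rather than of $V(G)$. This is harmless because every $v\in N_H(X)\setminus X$ lies in $V(H)\setminus X$ and therefore, being in $V(H)$, still comes after every element of $X$ in $<$, which is exactly the ordering condition the consistency argument requires.
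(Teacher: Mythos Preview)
Your proof is correct, but it takes a different route from the paper's. You unwound the proof of Theorem~\ref{thm:peak} and re-ran the boundary argument inside $G$ with the prefix $X\subseteq V(H)$; your handling of the subtlety that $X$ is only a prefix of $V(H)$ (not of $V(G)$) is clean and correct. The paper instead treats Theorem~\ref{thm:peak} as a black box: it introduces the induced subgraph $G'=G[V(H)]$, observes that $H$ is a \emph{spanning} subgraph of $G'$ so that $b_v(G')\geq b_v(H)$, that $\Delta(G')\leq\Delta(G)$, and that $\thin(G)\geq\thin(G')$ because $G'$ is \emph{induced} in $G$, and then applies Theorem~\ref{thm:peak} to $G'$ directly. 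Your stated worry---that $\thin(H)\leq\thin(G)$ can fail for a non-induced $H$---is legitimate, but the paper sidesteps it precisely by replacing $H$ with $G'$ and using only the two monotonicities just mentioned. Your approach is a little more self-contained; the paper's is shorter and more modular.
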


\begin{proof}
Let $G'$ be the subgraph of $G$ induced by $V(H)$. Then
$\Delta(G') \leq \Delta(G)$. Since $H$ is a spanning subgraph of
$G'$, then $b_v(G') \geq b_v(H)$, and since $G'$ is an induced
subgraph of $G$, then $thin(G) \geq thin(G')$. So, by
Theorem~\ref{thm:peak}, $\thin(G) \geq thin(G') \geq
b_v(G')/\Delta(G') \geq b_v(H)/\Delta(G) \geq b/d$.
\end{proof}

For a positive integer $r$, the \emph{$(r \times r)$-grid} $GR_r$
is the graph whose vertex set is $\{(i,j) : 1 \leq i, j \leq r\}$
and whose edge set is $\{(i,j)(k,l) : |i - k| + |j - l| = 1,
\mbox{ where } 1 \leq i,j,k, l \leq r \}$.

The thinness of the two dimensional $r \times r$ grid $GR_r$ was
lower bounded by $r/4$, by using Theorem~\ref{thm:peak} and the
following lemma.

\begin{lemma}\emph{\cite{Chv-grid}}\label{lem:grid}
For every $r \geq 2$, $b_v(GR_r) \geq r$.
\end{lemma}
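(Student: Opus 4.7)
The plan is to exhibit a specific size $s$ with $\min_{|X|=s} |N(X)\setminus X| \geq r$. I would take $s = \lfloor r^2/2 \rfloor$; the matching upper bound $|N(X)\setminus X| = r$ is realized by $X$ equal to the union of the first $\lfloor r/2 \rfloor$ rows (plus a prefix of the next row if $r$ is odd), whose boundary is contained in one or two adjacent rows.

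For the lower bound, I would use a two-step compression argument in the style of Harper-type isoperimetric proofs. Let $a_i = |X \cap \mathrm{row}\, i|$. First, left-compress each row: replace $X \cap \mathrm{row}\, i$ by the $a_i$ leftmost cells of row $i$. A local neighbor analysis shows that this operation preserves $|X|$ and does not increase $|N(X)\setminus X|$, since within each row the number of boundary cells collapses to at most one (the cell immediately right of the block), and the vertical interaction with adjacent rows weakly decreases under alignment. Second, permute the rows so that $a_1 \geq a_2 \geq \cdots \geq a_r$; realized as a sequence of adjacent swaps restoring monotonicity, each swap is shown by a short case analysis to leave the vertex boundary unchanged or decrease it. After both operations $X$ is a Young diagram in the top-left corner.

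For such a Young diagram the boundary admits a direct formula in terms of the $a_i$'s: each row $i$ with $0 < a_i < r$ contributes $\max(1,\, a_{i-1} - a_i)$ boundary cells (with $a_0 := 0$), and if the largest index $k$ with $a_k > 0$ satisfies $k < r$, then row $k+1$ contributes an additional $a_k$ cells. A case analysis according to whether $X$ has full rows, combined with a telescoping estimate using $\sum_i a_i = \lfloor r^2/2 \rfloor$ and an AM--GM-type inequality $a_1 \cdot k \geq \sum_i a_i$, establishes $|N(X)\setminus X| \geq r$ in every case; concretely, if $X$ consists entirely of full rows the boundary row contributes $r$ cells directly, and otherwise the contributions $\max(1, a_{i-1}-a_i)$ across the active rows telescope to $a_1$ plus the number of plateaus, which together with the tail term $a_k$ is bounded below by $r$ by the constraint on $\sum_i a_i$.

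The main obstacle is the careful verification that both compression steps are monotone for the vertex boundary; while such compression arguments are standard in discrete isoperimetry, they require enumerating a handful of local configurations of $X$-cells across pairs of adjacent rows (in particular, distinguishing whether the cell left, right, above, or below a boundary candidate belongs to $X$). Once the reduction to Young diagrams is in place, the closed-form formula makes the final counting a routine verification.
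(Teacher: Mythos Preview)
The paper does not contain a proof of this lemma: it is stated with the citation \cite{Chv-grid} and used as a black box in the lower bound for the thinness of the $B_0$-VPG graphs constructed afterwards. There is therefore no argument in the paper to compare your attempt against.

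As for the attempt itself, the overall strategy is the standard one for discrete vertex-isoperimetry on the grid and is sound: picking $s=\lfloor r^2/2\rfloor$, left-compressing each row, and then reordering rows to obtain a Young diagram are valid monotone operations for the vertex boundary (this is the Bollob\'as--Leader style argument). Your boundary formula for the resulting Young diagram is also correct. Where the sketch becomes shaky is the final step. The sentence ``the contributions $\max(1,a_{i-1}-a_i)$ \dots\ telescope to $a_1$ plus the number of plateaus, which together with the tail term $a_k$ is bounded below by $r$'' is not accurate as written: for a diagram with $f$ full rows and last nonzero row $k$, the telescoping gives $r-a_k$ (not $a_1$) plus the number $P$ of plateaus among rows $f+1,\dots,k$, and what you actually need is $P\ge a_r$ when $k=r$, or the extra $a_k$ when $k<r$. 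Neither follows from an AM--GM step alone; it requires using the size constraint $\sum_i a_i=\lfloor r^2/2\rfloor$ together with the pigeonhole bound $a_r\le f+P$ coming from the number of strict decreases. This is not hard to complete, but the current write-up does not yet contain that case analysis, so as it stands the proof has a gap at the very end.
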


%Notice that if $H$ is a spanning subgraph of $G$, then $b_v(G) \geq b_v(H)$.

We use these results to prove the unboundedness of the thinness of
$B_0$-VPG graphs.

\begin{proposition}\label{prop:b0-unbound}
The class of $B_0$-VPG graphs has unbounded thinness.
\end{proposition}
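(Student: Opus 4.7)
The plan is to show that the $r\times r$ grid $GR_r$ is itself a $B_0$-VPG graph, for every $r\geq 2$. Once this is established, Lemma~\ref{lem:grid} gives $b_v(GR_r)\geq r$ while $\Delta(GR_r)=4$, and Theorem~\ref{thm:peak} yields $\thin(GR_r)\geq r/4$, which is unbounded, proving the proposition.

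To realize $GR_r$ as a $B_0$-VPG graph, I would fix a sufficiently small $\epsilon>0$ (for instance $\epsilon<1/(4r)$) and split the vertices by the parity of $i+j$. For $(i,j)$ with $i+j$ even, place a horizontal segment at height $y=i+2j\epsilon$ spanning $x\in[j-1+2i\epsilon,\, j+1+2i\epsilon]$; for $(i,j)$ with $i+j$ odd, place a vertical segment at abscissa $x=j+2i\epsilon$ spanning $y\in[i-1+2j\epsilon,\, i+1+2j\epsilon]$. The tiny shifts force distinct horizontal segments to lie at distinct $y$-coordinates and distinct vertical segments to lie at distinct $x$-coordinates, so same-orientation segments never intersect; this is consistent with the fact that the two parity classes of $GR_r$ are independent sets.

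Writing $(i',j')=(i+\beta,j+\alpha)$ for a perpendicular pair (so that $\alpha+\beta$ is odd), a direct computation shows that the two corresponding segments meet in the plane if and only if $|\alpha+2\beta\epsilon|\leq 1$ and $|\beta+2\alpha\epsilon|\leq 1$. A short case analysis on integer values of $\alpha$ and $\beta$, using $\epsilon<1/(4r)$, shows that these inequalities are simultaneously satisfied exactly when $\{|\alpha|,|\beta|\}=\{0,1\}$, which is precisely the grid-adjacency condition. The main technical content is this case analysis: the role of the $\epsilon$-perturbation is to push the perpendicular non-neighbors at $L^1$-distance at least $2$, such as $(\alpha,\beta)=(\pm 1,\pm 2)$ or $(\pm 2,\pm 1)$, just outside the allowed ranges, thereby eliminating the spurious boundary intersections that an unperturbed construction would produce. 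With the representation in hand, the bound $\thin(GR_r)\geq r/4$ follows from the earlier results and completes the argument.
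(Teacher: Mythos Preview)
Your proof is correct and in fact a bit sharper than the paper's. You show directly that the grid $GR_r$ is a $B_0$-VPG graph via an explicit segment representation, and then apply Theorem~\ref{thm:peak} with $\Delta(GR_r)=4$ to obtain $\thin(GR_r)\geq r/4$. The paper instead constructs an auxiliary $B_0$-VPG graph $G_r$ of maximum degree~$6$ that merely contains $GR_r$ as a (non-induced) subgraph, and then invokes Corollary~\ref{cor:peak} to get $\thin(G_r)\geq r/6$. Your route avoids the need for Corollary~\ref{cor:peak}, yields a better constant, and exhibits a natural family (the grids themselves) witnessing unbounded thinness within $B_0$-VPG. One minor remark: your description of the role of the $\epsilon$-perturbation is slightly misleading---with $\epsilon=0$ the perpendicular pairs are already handled correctly by the inequalities, and the real purpose of the shift is exactly what you state first, namely to separate the parallel segments that would otherwise be collinear (e.g., the horizontal segments for $(1,1)$ and $(1,3)$ would touch at a point). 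This does not affect the validity of your argument.
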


\begin{proof}
Let $r \geq 2$ and let $G_r$ be the intersection graph of the
following paths on a grid: $\{(i-0.1,j)$--$(i+1.1,j)_{0 \leq i,j
\leq r}\} \cup \{(i,j-1.1)$--$(i,j+0.1)_{1 \leq i,j \leq r}\}$.
The grid $GR_r$ is a subgraph of $G_r$, and $\Delta(G_r) = 6$ (see
Figure~\ref{fig:B0}). So, by Lemma~\ref{lem:grid} and
Corollary~\ref{cor:peak}, $\thin(G) \geq r/6$.
\end{proof}

\begin{figure}
    \begin{center}
    \begin{tikzpicture}[scale=0.7]

\foreach \y in {0,...,4} {    \draw (-8.1,\y) -- (-6.8,\y);
    \draw (-7.1,\y-0.1) -- (-5.8,\y-0.1);
    \draw (-6.1,\y) -- (-4.8,\y);
    \draw (-5.1,\y-0.1) -- (-3.8,\y-0.1);
    \draw (-4.1,\y) -- (-2.8,\y);
}

\foreach \x in {-7,...,-4} {    \draw (\x+0.1,-0.2) --
(\x+0.1,1.1);
    \draw (\x,0.8) -- (\x,2.1);
    \draw (\x+0.1,1.8) -- (\x+0.1,3.1);
    \draw (\x,2.8) -- (\x,4.1);
}

\foreach \x in {0,...,4}
    \foreach \y in {0,...,4}
        \vertex{\x}{\y}{v\x\y};

\foreach \x in {0,...,3}
    \foreach \y in {0,...,3}
        \vertex{\x+0.5}{\y+0.5}{w\x\y};

    \edge{v00}{v11};
    \edge{v01}{v12};
    \edge[ultra thick]{v02}{v13};
    \edge{v03}{v14};
    \edge{v01}{v10};
    \edge[ultra thick]{v02}{v11};
    \edge{v03}{v12};
    \edge{v04}{v13};

    \edge{v10}{v21};
    \edge[ultra thick]{v11}{v22};
    \edge{v12}{v23};
    \edge[ultra thick]{v13}{v24};
    \edge[ultra thick]{v11}{v20};
    \edge{v12}{v21};
    \edge[ultra thick]{v13}{v22};
    \edge{v14}{v23};

    \edge[ultra thick]{v20}{v31};
    \edge{v21}{v32};
    \edge[ultra thick]{v22}{v33};
    \edge{v23}{v34};
    \edge{v21}{v30};
    \edge[ultra thick]{v22}{v31};
    \edge{v23}{v32};
    \edge[ultra thick]{v24}{v33};

    \edge{v30}{v41};
    \edge[ultra thick]{v31}{v42};
    \edge{v32}{v43};
    \edge{v33}{v44};
    \edge{v31}{v40};
    \edge{v32}{v41};
    \edge[ultra thick]{v33}{v42};
    \edge{v34}{v43};

    \edge{v00}{v40};
    \edge{v01}{v41};
    \edge{v02}{v42};
    \edge{v03}{v43};
    \edge{v04}{v44};

    \edge[ultra thick]{w01}{w23};
    \edge[ultra thick]{w10}{w32};
    \edge[ultra thick]{w02}{w20};
    \edge[ultra thick]{w13}{w31};

    \edge{w00}{w03};
    \edge{w10}{w13};
    \edge{w20}{w23};
    \edge{w30}{w33};

\foreach \x in {0,...,4}
    \foreach \y in {0,...,4}
        \vertex[orange]{\x}{\y}{};

\foreach \x in {0,...,3}
    \foreach \y in {0,...,3}
        \vertex[orange]{\x+0.5}{\y+0.5}{};

    \end{tikzpicture}
    \end{center}
    \caption{The $r \times r$ grid is a subgraph of this bounded degree $B_0$-VPG graph.}\label{fig:B0}
\end{figure}
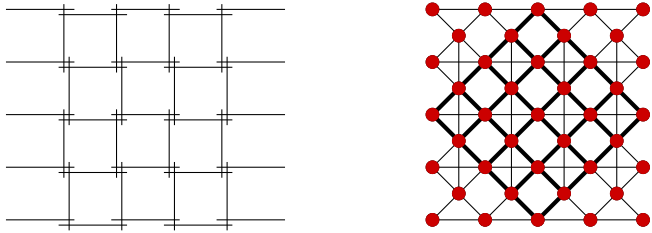

\begin{proposition}\label{prop:4t-not-vpg}
Not every 4-thin graph is a VPG graph.
\end{proposition}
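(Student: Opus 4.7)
The plan is to exhibit a specific graph $G$ that admits a $4$-thin representation but cannot be realized as any vertex intersection graph of paths on a grid (equivalently, is not a string graph). The proof has two largely independent parts: a combinatorial upper bound on thinness via an explicit ordering and partition, and a topological lower bound ruling out any string realization.

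The candidate construction is adapted from a known family of non-string graphs (for instance, a Kratochv\'{\i}l-style gadget, or a graph arising from a combinatorial incidence design whose strings cannot be drawn in the plane without forbidden crossings). The construction is arranged so that the vertex set splits into four layers $V^1, V^2, V^3, V^4$, each linearly ordered, with cross-layer edges defined by a monotone incidence pattern. The first step is to verify $\thin(G) \leq 4$ by taking the concatenated order $v_1^1 < \dots < v_{n_1}^1 < v_1^2 < \dots < v_{n_4}^4$ (or an analogous interleaving) and checking, for each triple $(v_r, v_s, v_t)$ with $v_r, v_s$ in the same $V^j$ and $v_tv_r \in E(G)$, that $v_tv_s \in E(G)$. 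This reduces to a finite case analysis on the layer of $v_t$ and follows directly from the monotonicity built into the cross-layer adjacencies.

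The second step, which is the heart of the proof, is to show that $G$ is not a string graph. Assume toward a contradiction that a string representation $\{\gamma_v : v \in V(G)\}$ exists. Using the Jordan curve theorem applied to cycles of $G$ realized by concatenations of intersecting arcs, together with the specific incidence pattern encoded in the four layers, we derive a topological impossibility in the plane. The cleanest route is a reduction: identify within $G$ an induced substructure that, in any string representation, would force a planar realization of a non-planar auxiliary graph (or an equivalent combinatorial obstruction from the Schaefer--\v Stefankovi\v c framework). Because the obstruction is inherited from a classical non-string gadget embedded inside $G$, and strings restrict to strings on induced subgraphs, this yields the required contradiction.

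The main obstacle is the tension between the two requirements. A $4$-thin structure is quite restrictive, forcing the graph to behave like the intersection of axis-aligned objects across four monotone layers, while a non-string certificate requires genuine topological complexity. The crucial observation is that a $4$-layer monotone incidence pattern can already encode enough non-planar adjacency structure to defeat any curve realization in the plane, even though the analogous $3$-layer model is always realizable (as the earlier $B_3$-VPG result for $3$-thin graphs shows). Making this obstruction survive a careful ordering and partitioning into exactly four consistent classes is the delicate part of the construction, and is where the bulk of the technical work lies.
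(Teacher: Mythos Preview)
Your proposal is a plan, not a proof: you never name the graph $G$, never give the ordering and partition certifying $\thin(G)\le 4$, and never carry out the topological obstruction. Phrases like ``a Kratochv\'{\i}l-style gadget'' and ``the Schaefer--\v Stefankovi\v c framework'' are gestures at the literature, not arguments. As written, both halves of the proof are left entirely to the reader.

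The paper, by contrast, uses a one-line argument based on a classical example. The graph is the full edge subdivision of $K_5$ (fifteen vertices: the five original vertices plus one subdivision vertex per edge). That this graph is not a string graph is a well-known result going back to Sinden (1966); combined with the equivalence of string graphs and VPG graphs, this disposes of the ``not VPG'' half with a citation. The ``$4$-thin'' half is handled by exhibiting an explicit ordering and a partition into four classes (shown in a figure), which can be checked directly. There is no need for a bespoke four-layer gadget, Jordan-curve arguments, or a reduction to an auxiliary non-planar graph: the smallest classical non-string graph already has thinness at most~$4$. If you want to salvage your approach, the concrete fix is simply to take $G$ to be the subdivision of $K_5$, cite the non-string result, and write down the four classes and the order.
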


\begin{proof}
The edge subdivision of the complete graph $K_5$ is 4-thin (see
Figure~\ref{fig:4-thin-no-vpg} for a representation).
Nevertheless, it is not a string graph~\cite{Sinden-K5}, and
string graphs are equivalent to VPG graphs~\cite{asinowski}.
\end{proof}

\begin{figure}
\begin{center}
\begin{tikzpicture}[scale=0.4,line cap=round,line join=round,>=triangle 45,x=1cm,y=1cm]
\fill[line width=1pt,color=black,fill=white,fill opacity=0]
(-25,-5) -- (-20,-5) -- (-18.454915028125264,-0.2447174185242338)
-- (-22.5,2.6942088429381323) --
(-26.545084971874736,-0.24471741852423157) -- cycle; \draw [line
width=1pt] (-8,-2)-- (-4,6); \draw [line width=1pt] (-4,6)--
(-6,3); \draw [line width=1pt] (-6,3)-- (-10,-1); \draw [line
width=1pt] (-10,-1)-- (-4,-8); \draw [line width=1pt] (-4,-8)--
(-10,-4.5); \draw [line width=1pt] (-10,-4.5)-- (-8,-4); \draw
[line width=1pt] (-8,-4)-- (-10,-5.5); \draw [line width=1pt]
(-10,-5.5)-- (-6,-6); \draw [line width=1pt] (-6,-6)-- (-4,1.15);
\draw [line width=1pt] (-4,1.15)-- (-8,-2); \draw [line width=1pt]
(-8,-2)-- (-4,5); \draw [line width=1pt] (-4,5)-- (-8,-4); \draw
[line width=1pt] (-8,-4)-- (-10,-3); \draw [line width=1pt]
(-10,-3)-- (-6,3); \draw [line width=1pt] (-6,3)-- (-4,4); \draw
[line width=1pt] (-4,4)-- (-6,-6); \draw [line width=1pt]
(-6,-6)-- (-10,-6.5); \draw [line width=1pt] (-10,-6.5)-- (-4,-8);
\draw [line width=1pt] (-4,-8)-- (-4,0); \draw [line width=1pt]
(-4,0)-- (-8,-2); \draw [line width=1pt] (-25,-5)-- (-20,-5);
\draw [line width=1pt] (-20,-5)--
(-18.454915028125264,-0.2447174185242338); \draw [line width=1pt]
(-18.454915028125264,-0.2447174185242338)--
(-22.5,2.6942088429381323); \draw [line width=1pt]
(-22.5,2.6942088429381323)--
(-26.545084971874736,-0.24471741852423157); \draw [line width=1pt]
(-26.545084971874736,-0.24471741852423157)-- (-25,-5); \draw [line
width=1pt] (-25,-5)-- (-18.454915028125264,-0.2447174185242338);
\draw [line width=1pt] (-18.454915028125264,-0.2447174185242338)--
(-26.545084971874736,-0.24471741852423157); \draw [line width=1pt]
(-26.545084971874736,-0.24471741852423157)-- (-20,-5); \draw [line
width=1pt] (-20,-5)-- (-22.5,2.6942088429381323); \draw [line
width=1pt] (-22.5,2.6942088429381323)-- (-25,-5); \draw
[fill=orange,color=orange] (-8,-2) circle (7pt); \draw
[fill=orange,color=orange] (-4,6) circle (7pt); \draw
[fill=orange,color=orange] (-6,3) circle (7pt); \draw
[fill=orange,color=orange] (-10,-1) circle (7pt); \draw
[fill=orange,color=orange] (-4,-8) circle (7pt); \draw
[fill=orange,color=orange] (-10,-4.5) circle (7pt); \draw
[fill=orange,color=orange] (-8,-4) circle (7pt); \draw
[fill=orange,color=orange] (-10,-5.5) circle (7pt); \draw
[fill=orange,color=orange] (-6,-6) circle (7pt); \draw
[fill=orange,color=orange] (-4,1.15) circle (7pt); \draw
[fill=orange,color=orange] (-4,5) circle (7pt); \draw
[fill=orange,color=orange] (-10,-3) circle (7pt); \draw
[fill=orange,color=orange] (-4,4) circle (7pt); \draw
[fill=orange,color=orange] (-10,-6.5) circle (7pt); \draw
[fill=orange,color=orange] (-4,0) circle (7pt); \draw
[fill=orange,color=orange] (-25,-5) circle (7pt); \draw
[fill=orange,color=orange] (-20,-5) circle (7pt); \draw
[fill=orange,color=orange]
(-18.454915028125264,-0.2447174185242338) circle (7pt); \draw
[fill=orange,color=orange] (-22.5,2.6942088429381323) circle
(7pt); \draw [fill=orange,color=orange]
(-26.545084971874736,-0.24471741852423157) circle (7pt); \draw
[fill=orange,color=orange]
(-24.522542485937368,1.2247457122069503) circle (7pt); \draw
[fill=orange,color=orange]
(-25.772542485937368,-2.622358709262116) circle (7pt); \draw
[fill=orange,color=orange] (-22.5,-5) circle (7pt); \draw
[fill=orange,color=orange]
(-19.227457514062632,-2.622358709262117) circle (7pt); \draw
[fill=orange,color=orange]
(-20.477457514062632,1.2247457122069494) circle (7pt); \draw
[fill=orange,color=orange] (-21.25,-1.1528955785309338) circle
(7pt); \draw [fill=orange,color=orange]
(-21.727457514062632,-2.622358709262117) circle (7pt); \draw
[fill=orange,color=orange]
(-23.272542485937368,-2.622358709262116) circle (7pt); \draw
[fill=orange,color=orange] (-22.5,-0.24471741852423268) circle
(7pt); \draw [fill=orange,color=orange]
(-23.75,-1.1528955785309338) circle (7pt);
\end{tikzpicture}
\end{center}
\caption{A 4-thin representation of the edge subdivision of
$K_5$.}\label{fig:4-thin-no-vpg}
\end{figure}
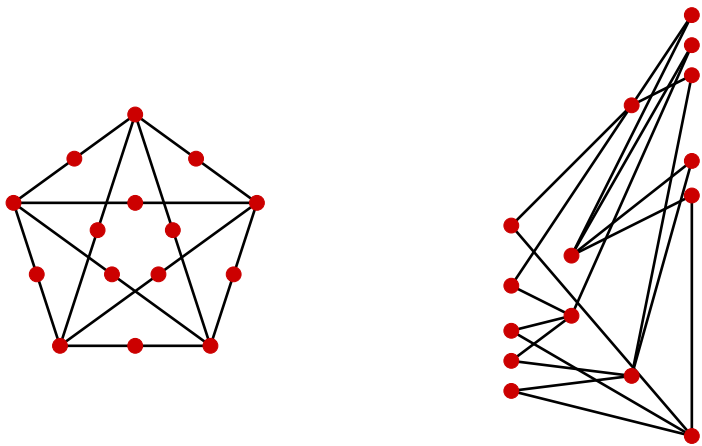

To prove that graphs with thinness at most three have bounded bend
number, we start by defining an intersection model obtained from
$\mathcal{M}_2$ by keeping from each rectangle the path formed by
the top and right sides (all the paths have the shape \Lxy).
Notice that two rectangles that are grounded to the $x$-axis
intersect if and only if their top sides intersect, and two
rectangles that are grounded to the $y$-axis intersect if and only
if their right sides intersect. Furthermore, a rectangle $X$
grounded to the $x$-axis intersects a rectangle $Y$ grounded to
the $y$-axis if and only if the right side of $X$ intersects the
top side of $Y$. So, both intersection models produce the same
graph. We can then reflect vertically and horizontally the model
in order to obtain the \Lr-model $\mathcal{M}_3$ (see
Figure~\ref{fig:M2-M3}).

\begin{proposition}\label{prop:b1-b0}
Every 2-thin graph is an \Lr-graph, thus a $B_1$-VPG graph.
Moreover, every  independent 2-thin graph is a $B_0$-VPG graph.
\end{proposition}

\begin{proof}
Every 2-thin graph admits the intersection model $\mathcal{M}_1$
(Lemma~\ref{lem:model1}). We have observed that the model
$\mathcal{M}_1$ can be modified to a grounded model
$\mathcal{M}_2$ and then to an \Lr-model $\mathcal{M}_3$
representing the same graph, so 2-thin graphs are \Lr-graphs, and
in particular $B_1$-VPG graphs. When the graph is independent
2-thin it is enough to keep the horizontal part for the paths that
are grounded to the $y$-axis and the vertical part for the paths
that are grounded to the $x$-axis, so independent 2-thin graphs
are $B_0$-VPG.
\end{proof}

Again, by adjusting the definitions of blocking, bi-semi-proper to
these 2-grounded models, we can prove characterizations analogous
to those in Theorems~\ref{thm:char-2-thin}
and~\ref{thm:char-2-pthin} for (independent) (proper) 2-thin
graphs. We will formalize one such characterization in
Theorem~\ref{thm:forb-pat-2-thin}.

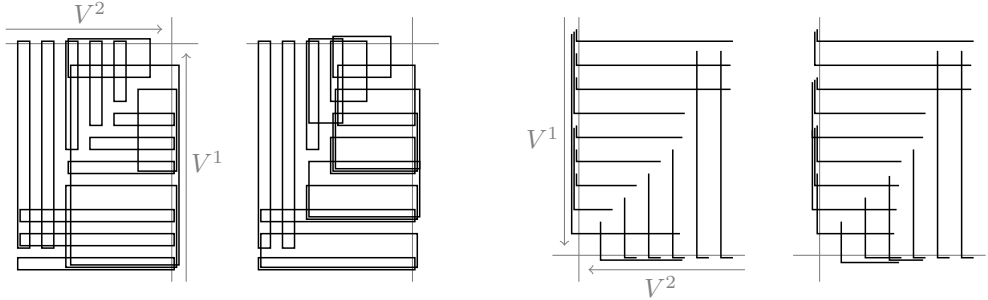
\begin{figure}

\noindent\begin{minipage}{\textwidth}
\begin{minipage}[t]{\dimexpr0.5\textwidth-0.5\Colsep\relax}
\begin{center}
\begin{tikzpicture}[scale=0.32]

\draw[color=gray,->] (0,10.5) -- (6.5,10.5);

\draw[color=gray,->] (7.5,0) -- (7.5,9.5);

\node at (3.5,11.2) {\footnotesize \textcolor{gray}{$V^2$}};

\node at (8.4,5) {\footnotesize \textcolor{gray}{$V^1$}};

\draw[color=white] (0,-2) -- (1,-2);

\draw[color=white] (0,13) -- (1,13);

\draw[color=gray] (0,9.9) -- (8,9.9);

\draw[color=gray] (6.9,0) -- (6.9,11);

\foreach \i / \x / \y / \d in
{1/1/0/0,2/2/0.1/0,3/3/0.1/0,4/1.1/2/1,5/5/2.1/0,6/6/3/0,7/7/4/0,8/5.1/5/1,9/1.2/2.2/2}
    \draw (\y+0.5,\x-0.5) [c2] rectangle  (7+0.1*\d,\i);

\foreach \i / \x / \y / \d in {1/0.9/1/0, 2/0.9/2/0, 3/5/3/0,
4/6/4/0, 5/7/5/0, 6/8/3.1/1}
    \draw (\y-0.5,\x+0.5) [c2] rectangle  (\i,10+0.1*\d);

%\foreach \i in {1,...,9}
%    \node() at (\i+6, \i)  [n2] {};

%\foreach \i in {1,...,6}
%    \node() at (\i,\i+9) [n2] {};

%%-------------------

\draw[color=gray] (10,9.9) -- (18,9.9);

\draw[color=gray] (16.9,0) -- (16.9,11);

\foreach \i / \x / \y / \d in
{1/1/0/0,2/1.1/0.1/1,3/3/0.1/0,4/3.1/2/1,5/3.2/2.1/2,6/5/3/0,7/5.1/3.1/1,8/5.2/3.2/2,9/7/3.3/0}
    \draw (10+\y+0.5,\x-0.5) [c2] rectangle  (17+0.1*\d, \i);

\foreach \i / \x / \y / \d in {1/0.9/1/0, 2/0.9/2/0, 3/5/3/0,
4/6.1/3.1/1, 5/7/4/0, 6/8/4.1/2}
    \draw (10+\y-0.5,\x+0.5) [c2] rectangle  (10+\i,10+0.1*\d);

%\foreach \i in {1,...,9}
%    \node() at (18+\i+6, \i)  [n2] {};

%\foreach \i in {1,...,6}
%    \node() at (18+\i,\i+9) [n2] {};

\end{tikzpicture}
\end{center}
\end{minipage}\hfill
\begin{minipage}[t]{\dimexpr0.5\textwidth-0.5\Colsep\relax}
\begin{center}
\begin{tikzpicture}[scale=0.32,xscale=-1,yscale=-1]

\draw[color=gray,->] (10,10.5) -- (16.5,10.5);

\draw[color=gray,->] (17.5,0) -- (17.5,9.5);

\node at (13.5,11.2) {\footnotesize \textcolor{gray}{$V^2$}};

\node at (18.4,5) {\footnotesize \textcolor{gray}{$V^1$}};

\draw[color=white] (0,-2) -- (1,-2);

\draw[color=white] (0,13) -- (1,13);

\draw[color=gray] (0,9.9) -- (8,9.9);

\draw[color=gray] (6.9,0) -- (6.9,11);

\draw[color=gray] (10,9.9) -- (18,9.9);

\draw[color=gray] (16.9,0) -- (16.9,11);

\foreach \i / \x / \y / \d in
{1/1/0/0,2/2/0.1/0,3/3/0.1/0,4/1.1/2/1,5/5/2.1/0,6/6/3/0,7/7/4/0,8/5.1/5/1,9/1.2/2.2/2}
    \draw (10+\y+0.5,\i) [c2] --  (17+0.1*\d,\i) -- (17+0.1*\d,\x-0.5);

\foreach \i / \x / \y / \d in {1/0.9/1/0, 2/0.9/2/0, 3/5/3/0,
4/6/4/0, 5/7/5/0, 6/8/3.1/1}
    \draw (10+\i,\x+0.5) [c2] --  (10+\i,10+0.1*\d) -- (10+\y-0.5,10+0.1*\d);

%\foreach \i in {1,...,9}
%    \node() at (\i+6, \i)  [n2] {};

%\foreach \i in {1,...,6}
%    \node() at (\i,\i+9) [n2] {};

%%-------------------

\foreach \i / \x / \y / \d in
{1/1/0/0,2/1.1/0.1/1,3/3/0.1/0,4/3.1/2/1,5/3.2/2.1/2,6/5/3/0,7/5.1/3.1/1,8/5.2/3.2/2,9/7/3.3/0}
    \draw (\y+0.5,\i) [c2] --  (7+0.1*\d, \i) -- (7+0.1*\d,\x-0.5);

\foreach \i / \x / \y / \d in {1/0.9/1/0, 2/0.9/2/0, 3/5/3/0,
4/6.1/3.1/1, 5/7/4/0, 6/8/4.1/2}
    \draw (\i,\x+0.5) [c2] -- (\i,10+0.1*\d) -- (\y-0.5,10+0.1*\d);

%\foreach \i in {1,...,9}
%    \node() at (18+\i+6, \i)  [n2] {};

%\foreach \i in {1,...,6}
%    \node() at (18+\i,\i+9) [n2] {};

\end{tikzpicture}
\end{center}
\end{minipage}%
\end{minipage}

\caption{The model $\mathcal{M}_2$ as intersection of grounded
rectangles (left) and the \Lr-model $\mathcal{M}_3$ (right) for
the graphs in Figures~\ref{fig:M1} and~\ref{fig:M1-prop}.
}\label{fig:M2-M3}
\end{figure}

For 3-thin graphs, we use the previous ideas for the intersections
within each class and between each pair of classes.

\begin{proposition}\label{prop:b3-b1}
Every 3-thin graph is a $B_3$-VPG graph. Moreover, every
independent 3-thin graph is a $B_1$-VPG graph.
\end{proposition}

\begin{proof}
Let $G$ be a 3-thin graph with partition $V^1,V^2,V^3$. We start
by constructing the \Lr-model $\mathcal{M}_3$ for $G[V^1 \cup
V^2]$. Then we extend the paths corresponding to the vertices of
$V^1$ and the paths corresponding to the vertices of $V^2$ in
order to allow the respective intersections with the paths
corresponding to the vertices of $V^3$ in two different sectors of
the plane. These intersections also follow the ideas of the model
$\mathcal{M}_3$ for $G[V^1 \cup V^3]$ and $G[V^2 \cup V^3]$,
respectively. In the case of independent 3-thin graphs, less bends
per path are necessary because there are no internal intersections
on each class. A sketch of a $B_3$-VPG model for 3-thin graphs and
a $B_1$-VPG model for independent 3-thin graphs can be found in
Figure~\ref{fig:3-thin}.
\end{proof}

\begin{proposition}\label{prop:not-b0}
There are 2-thin graphs and independent 3-thin graphs that are not
$B_0$-VPG.
\end{proposition}

\begin{proof}
The 4-wheel $W_4$ (obtained from a 4-cycle by adding a universal
vertex) is 2-thin and proper independent 3-thin but not $B_0$-VPG,
since in a $B_0$-VPG graph, for every vertex $v$, $N[v]$ induces
an interval graph~\cite{asinowski}.
\end{proof}

So, the bound of the bend number in the cases of 2-thin and
independent 3-thin graphs is tight. We conjecture that also the 3
bends bound is tight for 3-thin graphs, but we are missing an
example. We can prove, however, the following.

\begin{proposition}\label{prop:not-L}
There are independent 3-thin graphs that are not monotone
\Lr-graphs.
\end{proposition}

\begin{proof}
The octahedron $\overline{3K_2}$ is an example of a graph of
independent thinness~3 which is not a p-box since it has
boxicity~3~\cite{Rob-box}.
\end{proof}

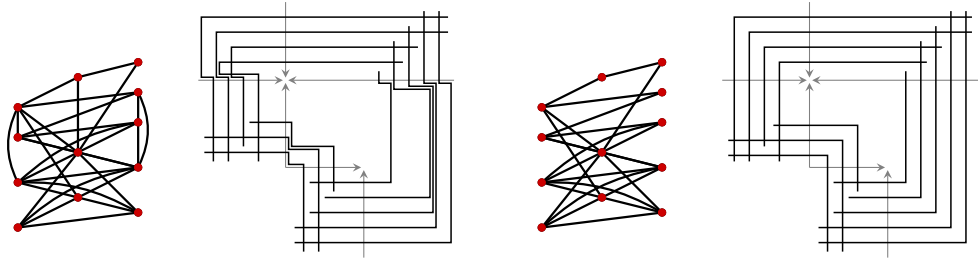
\begin{figure}
\noindent\begin{minipage}{\textwidth}
\begin{minipage}[t]{\dimexpr0.5\textwidth-0.5\Colsep\relax}
\begin{center}
    \begin{tikzpicture}[scale=0.4]

\foreach \x in {0.5} {
    \vertex{-6}{0+4}{v1};
    \vertex{-6}{3*\x+4}{v2};
    \vertex{-6}{6*\x+4}{v3};
    \vertex{-6}{8*\x+4}{v4};

    \vertex{-4}{2*\x+4}{w1};
    \vertex{-4}{5*\x+4}{w2};
    \vertex{-4}{10*\x+4}{w3};

    \vertex{-2}{1*\x+4}{z1};
    \vertex{-2}{4*\x+4}{z2};
    \vertex{-2}{7*\x+4}{z3};
    \vertex{-2}{9*\x+4}{z4};
    \vertex{-2}{11*\x+4}{z5};
}

\path (v4) edge [e1,bend right=25] (v2); \path (v4) edge [e1]
(v3);

\path (w3) edge [e1] (w2);

\path (z4) edge [e1,bend left=25] (z2); \path (z4) edge [e1] (z3);
\path (z3) edge [e1] (z2);

\path (v2) edge [e1] (w1); \path (v3) edge [e1] (w2); \path (v4)
edge [e1] (w1); \path (v4) edge [e1] (w2); \path (v1) edge [e1]
(w2); \path (v4) edge [e1] (w3); \path (v2) edge [e1] (w2); \path
(v1) edge [e1] (w1);

\path (v1) edge [e1] (z1); \path (v2) edge [e1] (z2); \path (v2)
edge [e1,bend left=15] (z1); \path (v1) edge [e1,bend left=15]
(z2); \path (v2) edge [e1,bend left=15] (z3); \path (v4) edge [e1]
(z4); \path (v3) edge [e1] (z4); \path (v3) edge [e1] (z2); \path
(v3) edge [e1] (z3);

\path (z2) edge [e1] (w1); \path (z3) edge [e1] (w2); \path (z5)
edge [e1] (w3); \path (z5) edge [e1] (w2); \path (z2) edge [e1]
(w2); \path (z1) edge [e1] (w1); \path (z1) edge [e1] (w2);

\foreach \x in {0.5} {
    \vertex[orange]{-6}{0+4}{};
    \vertex[orange]{-6}{3*\x+4}{};
    \vertex[orange]{-6}{6*\x+4}{};
    \vertex[orange]{-6}{8*\x+4}{};

    \vertex[orange]{-4}{2*\x+4}{};
    \vertex[orange]{-4}{5*\x+4}{};
    \vertex[orange]{-4}{10*\x+4}{};

    \vertex[orange]{-2}{1*\x+4}{};
    \vertex[orange]{-2}{4*\x+4}{};
    \vertex[orange]{-2}{7*\x+4}{};
    \vertex[orange]{-2}{9*\x+4}{};
    \vertex[orange]{-2}{11*\x+4}{};
}

\draw[color=gray,->,> = stealth, shorten > = 1pt, auto] (0,8.9) ->
(2.9,8.9); \draw[color=gray,->,> = stealth, shorten > = 1pt, auto]
(8.5,8.9) -> (2.9,8.9); \draw[color=gray,->,> = stealth, shorten >
= 1pt, auto] (2.9,11.5) -> (2.9,8.9); \draw[color=gray,->,> =
stealth, shorten > = 1pt, auto] (2.9,6) -> (2.9,8.9);
\draw[color=gray,->,> = stealth, shorten > = 1pt, auto] (2.9,6) ->
(5.5,6);
%\draw[color=gray,->,> = stealth, shorten > = 1pt, auto] (6,11.9) -> (6,6);
\draw[color=gray,->,> = stealth, shorten > = 1pt, auto] (5.5,3) ->
(5.5,6);

\foreach \h / \p / \c in {12/3/0} \foreach \i / \x / \y / \z / \d
in {1/1/0/0/0,2/2/0/0/0,3/3/1/2/0,4/2/0/3/1}
    \draw (0.5*\i,\h-6+0.5*\y+0.2) [c2] --  (0.5*\i,\h-3+0.1*\d) -- (0.5*\x-0.4+0.1*\d,\h-3+0.1*\d) -- (0.5*\x-0.4+0.1*\d,\h-0.5-0.5*\i) -- (8.5-0.5*\z-0.2,\h-0.5-0.5*\i);

\foreach \h / \p / \c in {3/3/6} \foreach \i / \x / \y / \z / \d
in {1/0/1/0/0, 2/0/2/0/0, 3/3/2/4/1}
    \draw (0.5*\x+0.2,\c+0.5*\i) [c2] --  (\h+0.1*\d,\c+0.5*\i) -- (\h+0.1*\d,\c+0.5*\y-0.4+0.1*\d) -- (\h+0.5*\i,\c+0.5*\y-0.4+0.1*\d) -- (\h+0.5*\i,\p+0.5*\z+0.2);

\foreach \h / \p / \c in {11.8/3/0} \foreach \i / \x / \y / \z /
\d in {1/0/0/1/0,2/0/0/2/0,3/1/1/2/1,4/2/2/2/2,5/4/1/5/0}
    \draw (8.5-0.5*\i,\h-1-0.5*\x+0.4) [c2] --  (8.5-0.5*\i,\h-3-0.1*\d) -- (8.5-0.5*\z+0.4-0.1*\d,\h-3-0.1*\d)  -- (8.5-0.5*\z+0.4-0.1*\d,3+0.5*\i) -- (3+0.5*\y+0.2,3+0.5*\i) ;

    \end{tikzpicture}
    \end{center}
\end{minipage}\hfill
\begin{minipage}[t]{\dimexpr0.5\textwidth-0.5\Colsep\relax}
    \begin{center}
    \begin{tikzpicture}[scale=0.4]

\foreach \x in {0.5} {
    \vertex{-6}{0+4}{v1};
    \vertex{-6}{3*\x+4}{v2};
    \vertex{-6}{6*\x+4}{v3};
    \vertex{-6}{8*\x+4}{v4};

    \vertex{-4}{2*\x+4}{w1};
    \vertex{-4}{5*\x+4}{w2};
    \vertex{-4}{10*\x+4}{w3};

    \vertex{-2}{1*\x+4}{z1};
    \vertex{-2}{4*\x+4}{z2};
    \vertex{-2}{7*\x+4}{z3};
    \vertex{-2}{9*\x+4}{z4};
    \vertex{-2}{11*\x+4}{z5};
}

%\path (v4) edge [e1,bend right=25] (v2);
%\path (v4) edge [e1] (v3);

%\path (w3) edge [e1] (w2);

%\path (z4) edge [e1,bend left=25] (z2);
%\path (z4) edge [e1] (z3);
%\path (z3) edge [e1] (z2);

\path (v2) edge [e1] (w1); \path (v3) edge [e1] (w2); \path (v4)
edge [e1] (w1); \path (v4) edge [e1] (w2); \path (v1) edge [e1]
(w2); \path (v4) edge [e1] (w3); \path (v2) edge [e1] (w2); \path
(v1) edge [e1] (w1);

\path (v1) edge [e1] (z1); \path (v2) edge [e1] (z2); \path (v2)
edge [e1,bend left=15] (z1); \path (v1) edge [e1,bend left=15]
(z2); \path (v2) edge [e1,bend left=15] (z3); \path (v4) edge [e1]
(z4); \path (v3) edge [e1] (z4); \path (v3) edge [e1] (z2); \path
(v3) edge [e1] (z3);

\path (z2) edge [e1] (w1); \path (z3) edge [e1] (w2); \path (z5)
edge [e1] (w3); \path (z5) edge [e1] (w2); \path (z2) edge [e1]
(w2); \path (z1) edge [e1] (w1); \path (z1) edge [e1] (w2);

\foreach \x in {0.5} {
    \vertex[orange]{-6}{0+4}{};
    \vertex[orange]{-6}{3*\x+4}{};
    \vertex[orange]{-6}{6*\x+4}{};
    \vertex[orange]{-6}{8*\x+4}{};

    \vertex[orange]{-4}{2*\x+4}{};
    \vertex[orange]{-4}{5*\x+4}{};
    \vertex[orange]{-4}{10*\x+4}{};

    \vertex[orange]{-2}{1*\x+4}{};
    \vertex[orange]{-2}{4*\x+4}{};
    \vertex[orange]{-2}{7*\x+4}{};
    \vertex[orange]{-2}{9*\x+4}{};
    \vertex[orange]{-2}{11*\x+4}{};
}

\draw[color=gray,->,> = stealth, shorten > = 1pt, auto] (0,8.9) ->
(2.9,8.9); \draw[color=gray,->,> = stealth, shorten > = 1pt, auto]
(8.5,8.9) -> (2.9,8.9); \draw[color=gray,->,> = stealth, shorten >
= 1pt, auto] (2.9,11.5) -> (2.9,8.9); \draw[color=gray,->,> =
stealth, shorten > = 1pt, auto] (2.9,6) -> (2.9,8.9);
\draw[color=gray,->,> = stealth, shorten > = 1pt, auto] (2.9,6) ->
(5.5,6);
%\draw[color=gray,->,> = stealth, shorten > = 1pt, auto] (6,11.9) -> (6,6);
\draw[color=gray,->,> = stealth, shorten > = 1pt, auto] (5.5,3) ->
(5.5,6);

\foreach \h / \p / \c in {12/3/0} \foreach \i / \x / \y / \z / \d
in {1/1/0/0/0,2/2/0/0/0,3/3/1/2/0,4/2/0/3/1} {
    \draw (0.5*\i-0.1,\h-6+0.5*\y+0.2) [c2]  -- (0.5*\i-0.1,\h-0.5-0.5*\i) -- (8.5-0.5*\z-0.2,\h-0.5-0.5*\i);
}

\foreach \h / \p / \c in {3/3/6} \foreach \i / \x / \y / \z / \d
in {1/0/1/0/0, 2/0/2/0/0, 3/3/2/4/1} {
    \draw (0.5*\x+0.2,\c+0.5*\i-0.1) [c2] -- (\h+0.5*\i,\c+0.5*\i-0.1) -- (\h+0.5*\i,\p+0.5*\z+0.2);
}

\foreach \h / \p / \c in {11.8/3/0} \foreach \i / \x / \y / \z /
\d in {1/0/0/1/0,2/0/0/2/0,3/1/1/2/1,4/2/2/2/2,5/4/1/5/0} {
    \draw (8.5-0.5*\i+0.1,\h-1-0.5*\x+0.4) [c2] -- (8.5-0.5*\i+0.1,3+0.5*\i) -- (3+0.5*\y+0.2,3+0.5*\i) ;

}

    \end{tikzpicture}
\end{center}
\end{minipage}%
\end{minipage}

    \caption{A $B_3$-VPG representation of a 3-thin graph (left) and a $B_1$-VPG representation of an independent 3-thin graph (right). }\label{fig:3-thin}
\end{figure}

Further modifying the model $\mathcal{M}_3$ by translating and
extending vertically the paths that are grounded to the $x$-axis
and horizontally the paths that are grounded to the $y$-axis in
order to have the bends lying in the inverted diagonal $y = -x$,
we can obtain a monotone \Lr-model $\mathcal{M}_4$ for every
2-thin graph. Concretely, the path $(x,y)$--$(x,0)$--$(z,0)$ ($0 <
x < z$, $0 < y$) becomes $(x,y)$--$(x,-x)$--$(z,-x)$, and the path
$(0,z)$--$(0,y)$--$(x,y)$ ($0 < y < z$, $0 < x$) becomes
$(-y,z)$--$(-y,y)$--$(x,y)$ (see Figure~\ref{fig:M3-M4}). In this
way, the intersections within the first quadrant are maintained;
two paths $(x,y)$--$(x,-x)$--$(z,-x)$ and
$(x',y')$--$(x',-x')$--$(z',-x')$ with $0 < x < x'$ intersect (at
$(x',-x)$) if and only if $x' < z$, if and only if the original
paths $(x,y)$--$(x,0)$--$(z,0)$ and $(x',y')$--$(x',0)$--$(z',0)$
intersect; two paths $(-y,z)$--$(-y,y)$--$(x,y)$ and
$(-y',z')$--$(-y',y')$--$(x',y')$ with $0 < y < y'$ intersect (at
$(-y,y')$) if and only if $y' < z$, if and only if the original
paths $(0,z)$--$(0,y)$--$(x,y)$ and $(0,z')$--$(0,y')$--$(x',y')$
intersect. So, both intersection models produce the same graph.

We will prove in the next section that 2-thin graphs are exactly
the blocking monotone \Lr-graphs. An \Lr-model is \emph{blocking}
if for every two non-intersecting \Lr's, either the vertical or
the horizontal prolongation of one of them intersects the other.

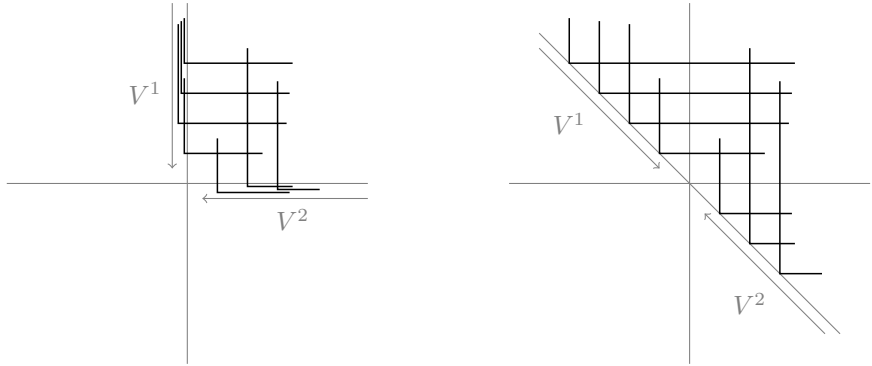
\begin{figure}

\noindent\begin{minipage}{\textwidth}
\begin{minipage}[t]{\dimexpr0.5\textwidth-0.5\Colsep\relax}
\begin{center}
\begin{tikzpicture}[scale=0.4,xscale=-1,yscale=-1]

\draw[color=gray,->] (1,10.5) -- (6.5,10.5);

\draw[color=gray,->] (7.5,4) -- (7.5,9.5);

\node at (3.5,11.2) {\footnotesize \textcolor{gray}{$V^2$}};

\node at (8.4,7) {\footnotesize \textcolor{gray}{$V^1$}};

\draw[color=gray] (1,10) -- (13,10);

\draw[color=gray] (7,4) -- (7,16);

\foreach \i / \x / \y / \d in
{6/5/3/1,7/5.1/3.1/2,8/5.2/3.2/3,9/7/4/1}
    \draw (\y+0.5,\i) [c2] --  (7+0.1*\d, \i) -- (7+0.1*\d,\x-0.5);

\foreach \i / \x / \y / \d in {4/6.1/3.1/2, 5/5/4/1, 6/8/4.1/3}
    \draw (\i,\x+0.5) [c2] -- (\i,10+0.1*\d) -- (\y-0.5,10+0.1*\d);

\end{tikzpicture}
\end{center}
\end{minipage}%
\begin{minipage}[t]{\dimexpr0.5\textwidth-0.5\Colsep\relax}
\begin{center}
\begin{tikzpicture}[scale=0.4]

\draw[color=gray,->] (-5,4.5) -- (-1,0.5);

\draw[color=gray,->] (4.5,-5) -- (0.5,-1);

\node at (2,-4) {\footnotesize \textcolor{gray}{$V^2$}};

\node at (-4,2) {\footnotesize \textcolor{gray}{$V^1$}};

\draw[color=gray] (-6,0) -- (6,0);

\draw[color=gray] (0,-6) -- (0,6);

\draw[color=gray] (-5,5) -- (5,-5);

\foreach \i / \x / \y / \d in
{6/5/3/0,7/5.1/3.1/1,8/5.2/3.2/2,9/7/4/0}
    \draw (7-\y-0.5,10-\i) [c2] --  (\i-10,10-\i) -- (\i-10,10-\x+0.5);

\foreach \i / \x / \y / \d in {4/6.1/3.1/1, 5/5/4/0, 6/8/4.1/2}
    \draw (7-\i,10-\x-0.5) [c2] -- (7-\i,\i-7) -- (7-\y+0.5,\i-7);

\end{tikzpicture}
\end{center}
\end{minipage}\hfill
\end{minipage}

\caption{The \Lr-model $\mathcal{M}_3$ (left) and the monotone
\Lr-model $\mathcal{M}_4$ (right) for the same graph.
}\label{fig:M3-M4}
\end{figure}

\section{Characterization by forbidden patterns}\label{sec:pat}

A \emph{trigraph} $T$ is a 4-tuple $(V(T),E(T),N(T),U(T))$ where
$V(T)$ is the vertex set and every unordered pair of vertices
belongs to one of the three disjoint sets $E(T)$, $N(T)$, and
$U(T)$ called respectively \emph{edges}, \emph{non-edges}, and
\emph{undecided edges}. A graph $G = (V(G),E(G))$ is a
\emph{realization} of a trigraph $T$ if $V(G) = V(T)$ and $E(G) =
E(T) \cup U'$, where $U' \subseteq U(T)$. When representing a
trigraph, we will draw solid lines for edges, dotted lines for non
edges, and nothing for undecided edges. As $(E(T),N(T),U(T))$ is a
partition of the unordered pairs, it is enough to give any two of
these sets to define the trigraph, and we will often define a
trigraph by giving only $E$ and $N$.

An \emph{ordered graph} is a graph given with a linear ordering of
its vertices. We define the same for a trigraph, and call it a
\emph{pattern}. We say that an ordered graph is a
\emph{realization} of a pattern if they share the same set of
vertices and linear ordering and the graph is a realization of the
trigraph. When, in an ordered graph, no ordered subgraph is the
realization of given pattern, we say that the ordered graph
\emph{avoids} the pattern. The \emph{mirror} or \emph{reverse} of
a pattern is the same pattern, except the ordering, which is
reversed.

Given a family of patterns $\mathcal{F}$, the class
\textsc{Ord}$(\mathcal{F})$ is the set of graphs that have the
following property: there exists an ordering of the nodes, such
that none of the ordered subgraphs is a realization of a pattern
in $\mathcal{F}$, i.e., the ordered graph avoids all the patterns
in $\mathcal{F}$.

Many natural graph classes can be described as
\textsc{Ord}$(\mathcal{F})$ for sets $\mathcal{F}$ of small
patterns~\cite{Ch-perf-ord,Chap-MPT,Damas-order,Habib-patterns,H-M-R-orders}.
For instance, for $Z_1 = (\{1,2,3\},\{13\},\{23\})$ and its mirror
$Z_2 = (\{1,2,3\},\{13\},\{12\})$, the class of interval graphs is
\textsc{Ord}$(\{Z_1\})$~\cite{Ola-interval} and the class of
proper interval graphs is
\textsc{Ord}$(\{Z_1,Z_2\})$~\cite{Rob-uig}, and for $Z_3 =
(\{1,2,3\},\{13\},\{12,23\})$ and $Z_4 =
(\{1,2,3\},\{12,13,23\},\emptyset)$, the class of bipartite
permutation graphs is
\textsc{Ord}$(\{Z_3,Z_4\})$~\cite{Habib-patterns}. For $Z_6 =
(\{1,2,3,4\}, \{13,24\}, \{23\})$, the class of monotone
\Lr-graphs is \textsc{Ord}$(\{Z_6\})$~\cite{Chap-MPT}. See
Figure~\ref{fig:patterns} for a graphical representation of the
aforementioned patterns. Moreover, a recent paper studies
systematically forbidden pattern characterizations of graph
classes defined by grounded intersection
models~\cite{Habib-patterns-ground-arx}.

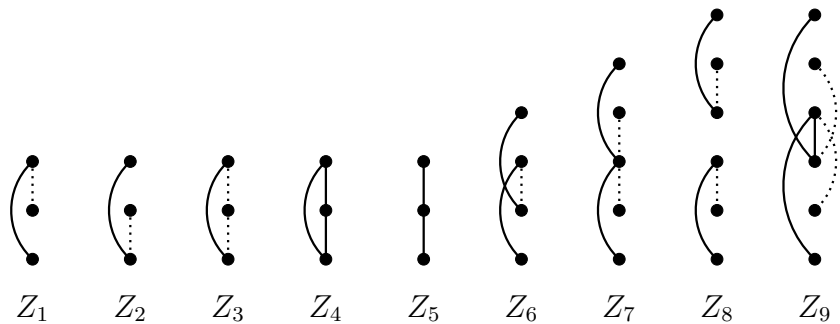
\begin{figure}
    \begin{center}
    \begin{tikzpicture}[scale=0.65]

    \vertex{0}{0}{v0};
    \vertex{0}{1}{v1};
    \vertex{0}{2}{v2};

    \vertex{2}{0}{w0};
    \vertex{2}{1}{w1};
    \vertex{2}{2}{w2};

    \vertex{4}{0}{x0};
    \vertex{4}{1}{x1};
    \vertex{4}{2}{x2};

    \vertex{6}{0}{y0};
    \vertex{6}{1}{y1};
    \vertex{6}{2}{y2};

    \vertex{8}{0}{z0};
    \vertex{8}{1}{z1};
    \vertex{8}{2}{z2};

\path (v0) edge [e1,bend left=45] (v2); \path (w0) edge [e1,bend
left=45] (w2);

\path (v1) edge [e1,dotted] (v2); \path (w0) edge [e1,dotted]
(w1);

\path (x0) edge [e1,bend left=45] (x2); \path (y0) edge [e1,bend
left=45] (y2);

\path (x1) edge [e1,dotted] (x2); \path (x0) edge [e1,dotted]
(x1); \path (y0) edge [e1] (y1); \path (y1) edge [e1] (y2); \path
(z0) edge [e1] (z1); \path (z1) edge [e1] (z2);

    \node at (0,-1) {$Z_1$};
    \node at (2,-1) {$Z_2$};
    \node at (4,-1) {$Z_3$};
    \node at (6,-1) {$Z_4$};
    \node at (8,-1) {$Z_5$};

    \vertex{10}{0}{a0};
    \vertex{10}{1}{a1};
    \vertex{10}{2}{a2};
    \vertex{10}{3}{a3}:

    \vertex{12}{0}{b1};
    \vertex{12}{1}{b2};
    \vertex{12}{2}{b3};
    \vertex{12}{3}{b4};
    \vertex{12}{4}{b5};

    \vertex{14}{0}{c1};
    \vertex{14}{1}{c2};
    \vertex{14}{2}{c3};
    \vertex{14}{3}{c4};
    \vertex{14}{4}{c5};
    \vertex{14}{5}{c6};

    \vertex{16}{0}{d1};
    \vertex{16}{1}{d2};
    \vertex{16}{2}{d3};
    \vertex{16}{3}{d4};
    \vertex{16}{4}{d5};
    \vertex{16}{5}{d6};

\path (a0) edge [e1,bend left=45] (a2); \path (a1) edge [e1,bend
left=45] (a3); \path (a1) edge [e1,dotted] (a2);

\path (b1) edge [e1,bend left=45] (b3); \path (b3) edge [e1,bend
left=45] (b5);

\path (b2) edge [e1,dotted] (b3); \path (b3) edge [e1,dotted]
(b4);

\path (c1) edge [e1,bend left=45] (c3); \path (c4) edge [e1,bend
left=45] (c6);

\path (c2) edge [e1,dotted] (c3); \path (c4) edge [e1,dotted]
(c5);

\path (d1) edge [e1,bend left=45] (d4); \path (d3) edge [e1,bend
left=45] (d6); \path (d3) edge [e1] (d4);

\path (d2) edge [e1,dotted,bend right=45] (d4); \path (d3) edge
[e1,dotted,bend right=45] (d5);

    \node at (10,-1) {$Z_6$};
    \node at (12,-1) {$Z_7$};
    \node at (14,-1) {$Z_8$};
    \node at (16,-1) {$Z_9$};

    \end{tikzpicture}
    \end{center}
    \caption{The patterns used in the characterizations of this section. The solid lines denote compulsory edges and the dotted lines are
compulsory non-edges in the pattern.}\label{fig:patterns}
\end{figure}

The class of bipartite graphs can be characterized as
\textsc{Ord}$(\{Z_5\})$, where $Z_5 =
(\{1,2,3\},\{12,23\},\emptyset)$~\cite{Habib-patterns}. In the
literature, there are two additional ways of defining patterns for
subclasses of bipartite graphs. The first one~\cite{H-H-bigraphs}
involves a total order of the vertices, that are colored black or
white, a set of compulsory edges, and a set of compulsory
non-edges. We will call such a structure a \emph{bicolored
pattern}, and we will describe it as a 4-tuple containing a set of
ordered vertices, the subset of white vertices, the set of edges
and the set of non-edges. For instance, the paths in
Figure~\ref{fig:bicol-patterns} are described as $Q_1 =
(\{1,2,3\}, \{3\}, \{13\}, \{23\})$, $Q_2 = (\{1,2,3\}, \{1,2\},
\{13\}, \{23\})$, $Q_3 = (\{1,2,3\}, \{1\}, \{13\}, \{12\})$, $Q_4
= (\{1,2,3\}, \{2,3\}, \{13\}, \{12\})$.

We say that a bipartite graph $H$ belongs to
\textsc{BicolOrd}$(\mathcal{F})$, for a fixed family of bicolored
patterns $\mathcal{F}$, if $H$ admits a bipartition $V(H) = A \cup
B$ and an ordering of $A \cup B$ that avoids the patterns from
$\mathcal{F}$. It was proved in~\cite{H-H-bigraphs} that interval
bigraphs are exactly \textsc{BicolOrd}$(\{Q_1,Q_2\})$.

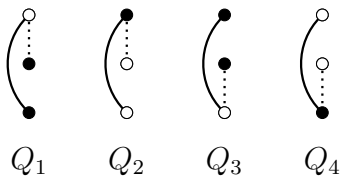
\begin{figure}
    \begin{center}
    \begin{tikzpicture}[scale=0.65]

    \vertex{0}{0}{v0};
    \vertex{0}{1}{v1};
    \vertex{0}{2}{v2};

    \vertex{2}{0}{w0};
    \vertex{2}{1}{w1};
    \vertex{2}{2}{w2};

    \vertex{4}{0}{x0};
    \vertex{4}{1}{x1};
    \vertex{4}{2}{x2};

    \vertex{6}{0}{y0};
    \vertex{6}{1}{y1};
    \vertex{6}{2}{y2};

\path (v0) edge [e1,bend left=45] (v2); \path (w0) edge [e1,bend
left=45] (w2); \path (x0) edge [e1,bend left=45] (x2); \path (y0)
edge [e1,bend left=45] (y2);

\path (v1) edge [e1,dotted] (v2); \path (w1) edge [e1,dotted]
(w2);

\path (x0) edge [e1,dotted] (x1); \path (y0) edge [e1,dotted]
(y1);

    \wvertex{0}{2}{};

    \wvertex{2}{0}{};
    \wvertex{2}{1}{};

    \wvertex{4}{0}{};

    \wvertex{6}{1}{};
    \wvertex{6}{2}{};

    \node at (0,-1) {$Q_1$};
    \node at (2,-1) {$Q_2$};
    \node at (4,-1) {$Q_3$};
    \node at (6,-1) {$Q_4$};
    \end{tikzpicture}
    \end{center}
    \caption{Examples of bicolored patterns.}\label{fig:bicol-patterns}
\end{figure}

The other (slightly different) way~\cite{H-M-R-orders} is the
following. A \emph{bipartite pattern} is a bipartite trigraph
whose vertices in each part of the bipartition are linearly
ordered. We will denote such pattern as a 4-tuple containing two
disjoint sets of ordered vertices, the set of edges and the set of
non-edges. For instance, the paths in Figure~\ref{fig:bip-pat} are
described as $R_1 = (\{1,2\}, \{1',2'\}, \{12',21'\}, \{11'\})$,
$R_2 = (\{1,2\}, \{1',2'\}, \{12',21'\}, \{22'\})$, $R_3 =
(\{1,2,3\}, \{1',2',3'\}, \{13',31',33'\}, \{23',32'\})$,
$R_4=(\{1,2,3\}, \{1'\}, \{11',31'\}, \{21'\})$, $R_4'=(\{1\},
\{1',2',3'\}, \{11',13'\}, \{12'\})$.

We say that a bipartite graph $H$ belongs to
\textsc{BiOrd}$(\mathcal{F})$, for a fixed family of bipartite
patterns $\mathcal{F}$, if $H$ admits a bipartition $V(H) = A \cup
B$ and an ordering of $A$ and of $B$ so that no pattern from
$\mathcal{F}$ occurs. Several known bipartite graph classes can be
characterized as \textsc{BiOrd}$(\mathcal{F})$. For instance,
bipartite convex graphs are \textsc{BiOrd}$(\{R_4\})$, and proper
interval bigraphs are
\textsc{BiOrd}$(\{R_1,R_2\})$~\cite{H-M-R-orders}.

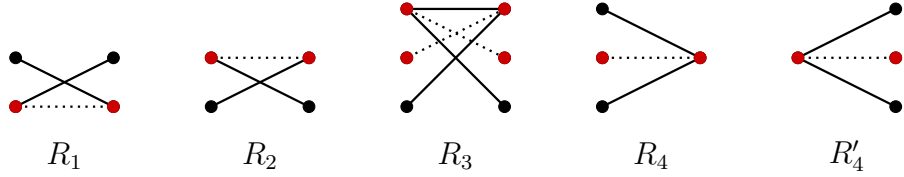
\begin{figure}
    \begin{center}
    \begin{tikzpicture}[scale=0.65]

\foreach \x in {0,...,6,9}
    \foreach \y in {0,1}
        \vertex{2*\x}{\y}{v\x\y};

\foreach \x in {7,8}
    \foreach \y in {1}
        \vertex{2*\x}{\y}{v\x\y};

\foreach \x in {4,5,6,9}
    \foreach \y in {2}
        \vertex{2*\x}{\y}{v\x\y};

\path (v00) edge [e1] (v11); \path (v01) edge [e1] (v10); \path
(v00) edge [e1,dotted] (v10);

\path (v20) edge [e1] (v31); \path (v21) edge [e1] (v30); \path
(v21) edge [e1,dotted] (v31);

\path (v40) edge [e1] (v52); \path (v42) edge [e1] (v50); \path
(v42) edge [e1] (v52); \path (v41) edge [e1,dotted] (v52); \path
(v42) edge [e1,dotted] (v51);

\path (v60) edge [e1] (v71); \path (v62) edge [e1] (v71); \path
(v61) edge [e1,dotted] (v71);

\path (v90) edge [e1] (v81); \path (v92) edge [e1] (v81); \path
(v91) edge [e1,dotted] (v81);

    \vertex[orange]{0}{0}{};
    \vertex[orange]{2}{0}{};
    \vertex[orange]{4}{1}{};
    \vertex[orange]{6}{1}{};
    \vertex[orange]{8}{1}{};
    \vertex[orange]{8}{2}{};
    \vertex[orange]{10}{1}{};
    \vertex[orange]{10}{2}{};
    \vertex[orange]{12}{1}{};
    \vertex[orange]{14}{1}{};
    \vertex[orange]{16}{1}{};
    \vertex[orange]{18}{1}{};

    \node at (1,-1) {$R_1$};
    \node at (5,-1) {$R_2$};
    \node at (9,-1) {$R_3$};
    \node at (13,-1) {$R_4$};
    \node at (17,-1) {$R_4'$};

    \end{tikzpicture}
    \end{center}
    \caption{The bipartite patterns in the characterizations of Lemma~\ref{lem:D-pat}.
The solid lines denote compulsory edges and the dotted lines are
compulsory non-edges in the pattern. Red vertices form a directed
cycle in $\tilde{D}(G,\Pi,<)$ or $D(G,\Pi,<)$, respectively, and
black vertices are the witnesses.}\label{fig:bip-pat}
\end{figure}

We will state now a lemma that is necessary to prove the forbidden
pattern characterizations of 2-thin graphs and (proper)
independent 2-thin graphs.

\begin{lemma}\label{lem:D-pat}
Let $G$ be a graph, $\{V^1,V^2\}$ a partition of $V(G)$, and $<$ a
partial order of $V(G)$ that is total when restricted to each of
$V^1, V^2$. Then $D(G,\{V^1,V^2\},<)$ is acyclic if and only if
$G[V^1,V^2]$ ordered according to $<$ avoids the bipartite
patterns $R_2$ and $R_3$, and $\tilde{D}(G,\{V^1,V^2\},<)$ is
acyclic if and only if $G[V^1,V^2]$ ordered according to $<$
avoids the bipartite patterns $R_1$, $R_2$, $R_4$ and $R_4'$.
Furthermore, if $G[V^1,V^2]$ has no isolated vertices, then
$\tilde{D}(G,\{V^1,V^2\},<)$ is acyclic if and only if
$G[V^1,V^2]$ ordered according to $<$ avoids the bipartite
patterns $R_1$ and $R_2$.
\end{lemma}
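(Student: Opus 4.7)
The plan is to analyze shortest directed cycles in $D = D(G,\{V^1,V^2\},<)$ and $\tilde{D} = \tilde{D}(G,\{V^1,V^2\},<)$. As in the proof of Theorem~\ref{thm:char-2-thin}, the subdigraph induced by each class is complete and acyclic, so a shortest cycle alternates between classes and contains at most two vertices of each class, leaving only cycles of length $2$, $3$, or $4$ to consider. I will handle the two directions (cycle $\Rightarrow$ pattern and pattern $\Rightarrow$ cycle) separately for each of the three statements.

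For $D$ I argue the three cycle lengths in turn. A $2$-cycle $v_1 \leftrightarrow v_2$ forces witnesses $v_1' < v_1$ in $V^1$ and $v_2' < v_2$ in $V^2$ with $v_2 v_1', v_1 v_2' \in E(G)$ and $v_1 v_2 \notin E(G)$, and these four vertices realize $R_2$. A $3$-cycle $v_1 \to w_1 \to v_2 \to v_1$ with $v_1 < w_1$ in $V^1$ cannot be shortest: the witness $v_1' < v_1$ used for $v_2 \to v_1$ also witnesses $v_2 \to w_1$, giving a $2$-cycle $w_1 \leftrightarrow v_2$. A $4$-cycle $v_1 \to w_1 \to v_2 \to w_2 \to v_1$ either has $w_1 w_2 \notin E(G)$, in which case its cross-class witnesses also produce $w_1 \to w_2$ and $w_2 \to w_1$, or $w_1 w_2 \in E(G)$, in which case the six vertices $v_1' < v_1 < w_1$ and $v_2' < v_2 < w_2$ realize $R_3$. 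Conversely, $R_2$ immediately gives a $2$-cycle on its second vertices in each part, and $R_3$ gives the $4$-cycle $a_2 \to a_3 \to b_2 \to b_3 \to a_2$.

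The analysis for $\tilde{D}$ is analogous but richer: each cross-class arc now admits two possible witness types, namely an earlier vertex in the target class or a later vertex in the source class. For a $2$-cycle $v_1 \leftrightarrow v_2$, the four combinations of these witness types produce exactly the four patterns $R_2$, $R_4'$, $R_4$, $R_1$; conversely, each pattern exhibits such a $2$-cycle. The main obstacle is to verify that every $3$- and $4$-cycle in $\tilde{D}$ reduces to a $2$-cycle. This follows from a case analysis on witness types: in each case I either reuse a witness of a cross-class arc to exhibit a shorter cycle directly, or I split on the membership in $E(G)$ of an auxiliary edge between an original cycle vertex and a witness (such as $w_1 w_2$, $v_1 v_2$, or the edge between a cycle vertex and the ``outer'' witness of the opposite-direction arc), and in every resulting sub-case a $2$-cycle emerges, involving either two original cycle vertices or a cycle vertex and one of the witnesses.

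For the final assertion, given $R_4$ realized by $a_1 < a_2 < a_3 \in V^1$ and $b_1 \in V^2$, the no-isolated-vertex hypothesis yields a neighbor $b \in V^2$ of $a_2$. If $b < b_1$ then $\{a_1, a_2, b, b_1\}$ realize $R_2$, and if $b > b_1$ then $\{a_2, a_3, b_1, b\}$ realize $R_1$; the case of $R_4'$ is symmetric. Hence under this hypothesis, $R_4$ and $R_4'$ become redundant in the pattern list, and the third statement follows from the second.
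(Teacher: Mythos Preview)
Your proof is correct and follows essentially the same approach as the paper. The one cosmetic difference is framing: you argue that a shortest cycle in $D$ or $\tilde{D}$ of length $3$ or $4$ always yields a $2$-cycle (contradicting minimality), and then extract the pattern from the $2$-cycle, whereas the paper directly exhibits one of $R_1,R_2,R_3,R_4,R_4'$ from each cycle length without passing through a shorter cycle. Since each of these patterns is equivalent to a $2$-cycle in the relevant digraph, the two presentations amount to the same case analysis; your outline of the $\tilde{D}$ length-$3$ and length-$4$ cases is less explicit than the paper's full enumeration, but the stated strategy (splitting on witness types and then, when needed, on an auxiliary edge such as $w_1w_2$, $v_1v_2$, or $w_1w_2''$) matches the paper's cases exactly, and your treatment of the no-isolated-vertex clause is identical to the paper's.
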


\begin{proof}
If the pattern $R_2$ occurs, then the vertices $2,2'$ form a
directed cycle both in $D(G,\{V^1,V^2\},<)$ and
$\tilde{D}(G,\{V^1,V^2\},<)$. If the pattern $R_1$ occurs, then
the vertices $1,1'$ form a directed cycle in
$\tilde{D}(G,\{V^1,V^2\},<)$. If the pattern $R_3$ occurs, then
the vertices $2,3,2',3'$ form a directed cycle in
$D(G,\{V^1,V^2\},<)$. If the pattern $R_4$ (resp. $R_4'$) occurs,
then the vertices $2,1'$ (resp. $1,2'$) form a directed cycle in
$\tilde{D}(G,\{V^1,V^2\},<)$.

In order to prove the converse, suppose that there is a directed
cycle in $D(G,\{V^1,V^2\},<)$. As in the proof of
Theorem~\ref{thm:char-2-thin}, we can reduce up to symmetry (since
the definition of the digraph and the patterns are symmetric) to
the following three cases.

\noindent \textbf{{Case 1:}} The cycle consists of two vertices,
$v_1 \in V^1$ and $v_2 \in V^2$.

In this case, $v_1$ and $v_2$ are not adjacent and, by definition
of the digraph, there exist $v_1' \in V^1, v_2' \in V^2$ such that
$v_1' < v_1$, $v_2' < v_2$, and $v_1'v_2, v_1v_2' \in E(G)$. So,
the pattern $R_2$ occurs in $G[V^1,V^2]$ ordered according to $<$.

\noindent \textbf{{Case 2:}} The cycle is $v_1 w_1 v_2$ such that
$v_1, w_1 \in V^1$ and $v_2 \in V^2$.

By definition of the digraph, we have $v_1 < w_1$, $v_1v_2$,
$w_1v_2 \not \in E(G)$, there exists in $V^2$ a vertex $v_2' <
v_2$ such that $w_1v_2' \in E(G)$, and there exists in $V^1$ a
vertex $v_1' < v_1$ such that $v_1'v_2 \in E(G)$. So,
$\{v_1',w_1,v_2',v_2\}$ form the pattern $R_2$ in $G[V^1,V^2]$
ordered according to $<$.

\noindent \textbf{{Case 3:}} The cycle is $v_1 w_1 v_2 w_2$ such
that $v_1, w_1 \in V^1$ and $v_2, w_2 \in V^2$.

By definition of the digraph, we have $v_1 < w_1$, $v_2 < w_2$,
$w_1v_2$, $v_1w_2 \not \in E(G)$, there exists in $V^2$ a vertex
$v_2' < v_2$ such that $w_1v_2' \in E(G)$, and there exists in
$V^1$ a vertex $v_1' < v_1$ such that $v_1'w_2 \in E(G)$. If
$w_1w_2 \not \in E(G)$, then $\{v_1',w_1,v_2',w_2\}$ form the
pattern $R_2$ in $G[V^1,V^2]$ ordered according to $<$. If,
otherwise, $w_1w_2 \in E(G)$, then $\{v_1',v_1,w_1,v_2',v_2,w_2\}$
form the pattern $R_3$ in $G[V^1,V^2]$ ordered according to $<$.

Suppose now that there is a directed cycle in
$\tilde{D}(G,\{V^1,V^2\},<)$. As in the proof of
Theorem~\ref{thm:char-2-pthin}, we can reduce up to symmetry
(since the definition of the digraph and the patterns are
symmetric) to the following three cases.

\noindent \textbf{{Case 1:}} The cycle consists of two vertices,
$v_1 \in V^1$ and $v_2 \in V^2$.

In this case, by definition of the digraph, $v_1$ and $v_2$ are
not adjacent and, on the one hand, either there exists in $V^2$ a
vertex $v_2' < v_2$ with $v_1v_2' \in E(G)$ or there exists in
$V^1$ a vertex $v_1'' > v_1$ with $v_1''v_2 \in E(G)$ and, on the
other hand, either there exists in $V^1$ a vertex $v_1' < v_1$
with $v_1'v_2 \in E(G)$ or there exists in $V^2$ a vertex $v_2''
> v_2$ with $v_1v_2'' \in E(G)$.

If the existent vertices are $v_1'$ and $v_2'$, then
$\{v_1',v_1,v_2',v_2\}$ form the pattern $R_2$ in $G[V^1,V^2]$
ordered according to $<$. If this is the case for $v_1''$ and
$v_2''$, then the pattern $R_1$ is formed by
$\{v_1,v_1'',v_2,v_2''\}$. If the existent vertices are $v_1'$ and
$v_1''$, then $\{v_1',v_1,v_1'',v_2\}$ form the pattern $R_4$.

\noindent \textbf{{Case 2:}} The cycle is $v_1 w_1 v_2$ such that
$v_1, w_1 \in V^1$ and $v_2 \in V^2$.

By definition of the digraph, we have $v_1 < w_1$, $v_1v_2$,
$w_1v_2 \not \in E(G)$, and, on the one hand, either there exists
in $V^2$ a vertex $v_2' < v_2$ with $w_1v_2' \in E(G)$ or there
exists in $V^1$ a vertex $w_1'' > w_1$ with $w_1''v_2 \in E(G)$
and, on the other hand, either there exists in $V^1$ a vertex
$v_1' < v_1$ with $v_1'v_2 \in E(G)$ or there exists in $V^2$ a
vertex $v_2''
> v_2$ with $v_1v_2'' \in E(G)$.

If the existent vertices are $v_1'$ and $v_2'$, then
$\{v_1',w_1,v_2',v_2\}$ form the pattern $R_2$ in $G[V^1,V^2]$
ordered according to $<$. If this is the case for $w_1''$ and
$v_2''$, then the pattern $R_1$ is formed by
$\{v_1,w_1'',v_2,v_2''\}$. In the case of $v_2'$ and $v_2''$, if
at least one of $v_1v_2', w_1v_2''$ is not an edge, then either
$R_1$ or $R_2$ is formed by $\{v_1,w_1,v_2',v_2''\}$. If $v_1v_2'$
and $w_1v_2''$ are edges, then $\{v_1,v_2',v_2,v_2''\}$ form
$R_4'$. In the case of $v_1'$ and $w_1''$, the vertices
$\{v_1',v_1,w_1'',v_2\}$ form $R_4$.

\noindent \textbf{{Case 3:}} The cycle is $v_1 w_1 v_2 w_2$ such
that $v_1, w_1 \in V^1$ and $v_2, w_2 \in V^2$.

By definition of the digraph, we have $v_1 < w_1$, $v_2 < w_2$,
$w_1v_2$, $v_1w_2 \not \in E(G)$, and, on the one hand, either
there exists in $V^2$ a vertex $v_2' < v_2$ with $w_1v_2' \in
E(G)$ or there exists in $V^1$ a vertex $w_1'' > w_1$ with
$w_1''v_2 \in E(G)$ and, on the other hand, either there exists in
$V^1$ a vertex $v_1' < v_1$ with $v_1'w_2 \in E(G)$ or there
exists in $V^2$ a vertex $w_2''
> w_2$ with $v_1w_2'' \in E(G)$.

Suppose the existent vertices are $v_1'$ and $v_2'$. If $w_1w_2
\not \in E(G)$, then $R_2$ is formed by $\{v_1',w_1,v_2',w_2\}$,
otherwise, $R_4$ is formed by $\{v_1',v_1,w_1,w_2\}$. Similarly,
suppose that the existent vertices are $w_1''$ and $w_2''$. If
$v_1v_2 \not \in E(G)$, then $R_1$ is formed by
$\{v_1,w_1'',v_2,w_2''\}$, otherwise, $R_4$ is formed by
$\{v_1,w_1,w_1'',v_2\}$. In the case of $v_2'$ and $w_2''$, if
$w_1w_2'' \not \in E(G)$, then $R_2$ is formed by
$\{v_1,w_1,v_2',w_2''\}$, otherwise, $R_4'$ is formed by
$\{w_1,v_2',v_2,w_2''\}$. The last case is symmetric.

To conclude, notice that, in a bipartite graph with no isolated
vertices, the patterns $R_4$ or $R_4'$ imply either $R_1$ or
$R_2$. Indeed, in the case of $R_4 = (\{v_1,v_2,v_3,v_1'\},
\{v_1v_1',v_3v_1'\}, \{v_2v_1'\})$, since $v_2$ is not an isolated
vertex, there is a vertex $v'$ with $v_2v' \in E(G)$, and either
$R_1$ or $R_2$ occurs, when $v' > v_1'$ or $v' < v_1'$,
respectively. The case of $R_4'$ is symmetric.
\end{proof}

While a characterization of $k$-thin and proper $k$-thin graphs by
forbidden induced subgraphs is open for $k \geq 2$, they may be
defined by means of forbidden patterns, due to
Corollary~\ref{cor:thin-comp-order}, i.e., we forbid the patterns
for an order $<$ that produce a clique of size $k+1$ in $G_{<}$
(resp. $\tilde{G}_{<}$). This approach leads to a high number of
forbidden patterns. % (see Figures~\ref{fig:pat-2-th} and~\ref{fig:pat-p-2-th} for the patterns corresponding to 2-thin and proper 2-thin graphs, respectively).

However, the model of 2-thin graphs as monotone \Lr-graphs leads
to the following forbidden pattern characterization for the class,
with only four symmetric patterns.

Recall that an \Lr-model is \emph{blocking} if for every two
non-intersecting \Lr's, either the vertical or the horizontal
prolongation of one of them intersects the other.

\begin{theorem}\label{thm:forb-pat-2-thin}
Let us define the patterns $Z_6 =$ $(\{1,2,3,4\},$ $\{13,24\},$
$\{23\})$, $Z_7 =$ $(\{1,2,3,4,5\},$ $\{13,35\},$ $\{23,34\})$,
$Z_8 =$ $(\{1,2,3,4,5,6\},$ $\{13,46\},$ $\{23,45\})$, and $Z_9 =$
$(\{1,2,3,4,5,6\},$ $\{14,34,36\},$ $\{24,35\})$ (see
Figure~\ref{fig:patterns}). Let $G$ be a graph. The following
statements are equivalent:
\begin{itemize}
\item[$(i)$] $G$ is a 2-thin graph. \item[$(ii)$] $G$ has a
blocking monotone \Lr-model. \item[$(iii)$] $G \in$
\textsc{Ord}$(\{Z_6,Z_7,Z_8,Z_9\})$.
\end{itemize}
\end{theorem}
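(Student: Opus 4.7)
The plan is to prove $(i)\Rightarrow(ii)\Rightarrow(iii)\Rightarrow(i)$. For $(i)\Rightarrow(ii)$ I would recycle the rectangle model $\mathcal{M}_1$ from Lemma~\ref{lem:model1}: starting from that blocking 2-diagonal model, keep from each rectangle only its top and right sides, turning it into an \Lxy-shape, and then translate each resulting shape horizontally or vertically (depending on which diagonal its upper-right corner sits on) so that every bend lands on the anti-diagonal $y=-x$. After a final reflection one obtains a monotone \Lr-model $\mathcal{M}_4$. One must then check that intersections are preserved (two \Lr's intersect exactly when the corresponding rectangles did) and that the blocking property of the rectangle model translates into the \Lr{} blocking property defined just before the theorem.

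For $(ii)\Rightarrow(iii)$, fix a blocking monotone \Lr-representation of $G$ and order the vertices by the $x$-coordinate of their bend. Since the representation is monotone \Lr, the characterization of monotone \Lr-graphs as \textsc{Ord}$(\{P_6\})$ already forbids $P_6$. To rule out $P_7$, $P_8$, and $P_9$, I would translate each pattern into a geometric configuration in the \Lr-plane: the prescribed non-edges force the \Lr's of the ``middle'' vertices to miss each other, the prescribed edges force the outer \Lr's to cross them, and then monotonicity together with the blocking condition force an additional crossing that contradicts one of the declared non-edges. The argument is uniform across the three patterns but each requires its own bookkeeping on the relative position of horizontal and vertical segments.

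The third implication $(iii)\Rightarrow(i)$ is the heart of the theorem, and I would argue via the conflict graph $G_{<}$ and Corollary~\ref{cor:thin-comp-order}: it suffices to exhibit an order $<$ for which $G_{<}$ is triangle-free, since $G_{<}$ is a perfect co-comparability graph and any of its $2$-colourings is a partition of $V(G)$ consistent with $<$. Suppose for contradiction that $u<v<w$ is a triangle of $G_{<}$ for an order avoiding $P_6,\ldots,P_9$. By the definition of $G_{<}$ there are witnesses $a\ge w$ with $au\in E(G)$, $av\notin E(G)$, then $b>w$ with $bu\in E(G)$, $bw\notin E(G)$, and $c>w$ with $cv\in E(G)$, $cw\notin E(G)$. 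A case analysis on whether $a=w$, on whether $b=c$, and on the relative order of $\{a,b,c\}$ should show that some subset of $\{u,v,w,a,b,c\}$ of size $4$, $5$, or $6$ realises one of $P_6,P_7,P_8,P_9$, which is the desired contradiction.

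The main obstacle is precisely this last case analysis. The number of sub-cases is large and, in several of them, the natural four-vertex pattern $P_6$ is destroyed by an unexpected extra edge, so one must reach for the wider patterns $P_7$, $P_8$, $P_9$ whose non-edges involve one of the intermediate witnesses $a$, $b$, $c$. Choosing the witnesses carefully — for example taking $b$ and $c$ to be minimal subject to their defining conditions, so as to make the needed non-edges hold — appears necessary, and determining which choice works in which sub-case is the delicate part that the full proof must spell out.
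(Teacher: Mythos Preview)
Your plan for $(i)\Rightarrow(ii)$ and $(ii)\Rightarrow(iii)$ matches the paper's proof: the model $\mathcal{M}_4$ is obtained from $\mathcal{M}_1$ exactly as you describe, and the paper also dispatches $P_7,P_8,P_9$ by observing that in any monotone \Lr-drawing of those patterns two specific \Lr's would violate the blocking condition.

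Your strategy for $(iii)\Rightarrow(i)$, however, has a genuine gap. You want to take an order $<$ avoiding $P_6,\ldots,P_9$ and show that \emph{this same} $G_{<}$ is triangle-free. That is false. Consider the ordered graph on $u<v<w<c<a$ with edge set $\{ua,uc,vc\}$ and all other pairs non-edges. One checks directly that no $4$-tuple realises $P_6$ (for $(u,v,c,a)$ the required edge $va$ is missing; for the remaining $4$-tuples the required edge in position $13$ or $24$ is missing), the unique $5$-tuple does not realise $P_7$ since $uw\notin E$, and $P_8,P_9$ need six vertices. Yet $uv,uw,vw\in E(G_{<})$: take $a$ as witness for $uv$ and $c$ as witness for both $uw$ and $vw$. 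So $G_{<}$ has a triangle and no $2$-partition is consistent with \emph{this} order, even though the graph (a path plus an isolated vertex) is trivially $2$-thin via some other order. Consequently the case analysis you outline cannot be completed, no matter how carefully the witnesses are chosen.

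The paper's argument for $(iii)\Rightarrow(i)$ avoids this by \emph{not} using $<$ directly. It takes the longest prefix $V^1=\{v_1,\ldots,v_{n_1}\}$ on which $<$ still avoids $P_1$, sets $V^2$ to be the remaining vertices, keeps $<$ on $V^1$ but \emph{reverses} it on $V^2$, and then appeals to Lemma~\ref{lem:D-thin} and Lemma~\ref{lem:D-pat}: the digraph $D(G,\{V^1,V^2\},<)$ is acyclic because $G[V^1,V^2]$ with these two orders avoids the bipartite patterns $R_2$ and $R_3$, which in turn follows from the original order avoiding $P_6$ and $P_9$. The role of $P_7$ and $P_8$ is to guarantee that the reversed order on $V^2$ also avoids $P_1$, so that each part is separately an interval order. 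Thus the consistent total order one finally obtains is a topological sort of $D$, generally different from $<$.
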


\begin{proof}
$(i) \Rightarrow (ii)$) Let $G$ be a 2-thin graph with partition
$V^1, V^2$ and a consistent ordering $<$. Consider the model
$\mathcal{M}_4(G)$. We have proved in Section~\ref{sec:vpg} that
it is an intersection model for $G$, given that $\mathcal{M}_1(G)$
is. We will see that it is blocking. Let us consider two
non-adjacent vertices $v_1$, $v_2$. If $v_1 \in V^1$ and $v_2 \in
V^2$, then either the horizontal prolongation of $v_1$ intersects
$v_2$ or the vertical prolongation of $v_2$ intersects $v_1$,
because the model $\mathcal{M}_1(G)$ is blocking
(Lemma~\ref{lem:model1}). If $v_1 < v_2$ and both belong to $V^1$
(resp. $V^2$), then the vertical (resp. horizontal) prolongation
of $v_2$ intersects $v_1$ (see
Figure~\ref{fig:M3-M4}). \\

\noindent $(ii) \Rightarrow (iii)$) Let $G$ be a graph admitting a
blocking monotone \Lr-model and consider the ordering of the
vertices according to the \Lr\ corners along the inverted
diagonal. It is known that the pattern $Z_6$ is not possible in an
\Lr-model for that vertex ordering~\cite{Chap-MPT}. Let us see
that the other patterns are not possible when the model is
blocking. Figure~\ref{fig:L-pat} shows schematic representations
of patterns $Z_7$, $Z_8$, and $Z_9$. The light parts are optional,
according to the undecided edges of the trigraph. In each of the
cases,
vertices labeled as $x$ and $y$ violate the blocking property. \\

\noindent $(iii) \Rightarrow (i)$) Let $G \in$
\textsc{Ord}$(\{Z_6,Z_7,Z_8,Z_9\})$ and let $v_1, \dots, v_n$ be
an ordering of the vertices avoiding the patterns $Z_6$, $Z_7$,
$Z_8$, and $Z_9$. If the order avoids $Z_1$, then $G$ is an
interval graph, in particular 2-thin. Otherwise, let $n_1$ be such
that $v_1, \dots, v_{n_1}$ avoids $Z_1$ but there exist $1 \leq i
< j < n_1+1$ such that $v_iv_{n_1+1} \in E(G)$ and $v_jv_{n_1+1}
\not \in E(G)$. Let $V^1 = \{v_1, \dots, v_{n_1}\}$ and $V^2 =
\{v_{n_1+1}, \dots, v_n\}$. Consider the ordering $<$ such that
$V^1$ is ordered increasingly according to the vertex indices and
$V^2$ is ordered decreasingly according to the vertex indices. The
graph $G[V^2]$ ordered by $<$ avoids $Z_1$, since otherwise either
$Z_7$ or $Z_8$ occurs in the original ordering. It remains to
prove that $D(G,\{V^1,V^2\},<)$ is acyclic. By
Lemma~\ref{lem:D-pat}, this is so if and only if $G[V^1,V^2]$ with
the sets ordered according to $<$ avoids the bipartite patterns
$R_2$ and $R_3$. But this holds because the original ordering
avoids $Z_6$ and $Z_9$, respectively.
\end{proof}

Notice that the blocking property is crucial, since every tree is
a monotone \Lr-graph~\cite{S-T-p-box} and trees may have
arbitrarily large thinness~\cite{B-D-thinness}.

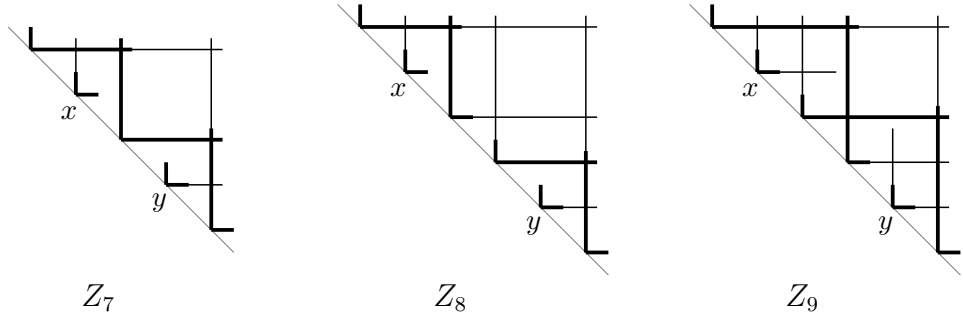
\begin{figure}[t]
\noindent\begin{minipage}{\textwidth}
\begin{minipage}[t]{\dimexpr0.33\textwidth-0.5\Colsep\relax}
\begin{center}
\begin{tikzpicture}[scale=0.3]

\draw[color=gray] (0,6) -- (10,-4);

\draw[line width=0.5mm] (1,5) -- (5.5,5); \draw[line width=0.5mm]
(1,6) -- (1,5); \draw[line width=0.2mm] (1,5) -- (9.5,5);

\draw[line width=0.5mm] (3,3) -- (4,3); \draw[line width=0.5mm]
(3,4) -- (3,3); \draw[line width=0.2mm] (3,5.5) -- (3,3);

\draw[line width=0.5mm] (5,1) -- (9.5,1); \draw[line width=0.5mm]
(5,5.5) -- (5,1);

\draw[line width=0.5mm] (7,-1) -- (8,-1); \draw[line width=0.5mm]
(7,0) -- (7,-1); \draw[line width=0.2mm] (7,-1) -- (9.5,-1);

\draw[line width=0.5mm] (9,-3) -- (10,-3); \draw[line width=0.5mm]
(9,1.5) -- (9,-3); \draw[line width=0.2mm] (9,-3) -- (9,5.5);

%\node() at (1,4.5) {$L_1$}; \node() at (3,2.5) {$L_2$}; \node() at (5,0.5) {$L_3$}; \node() at (7,-1.5) {$L_4$}; \node() at (9,-3.5) {$L_5$};

\node() at (2.7,2.2) {\small $x$}; \node() at (6.7,-1.8) {\small
$y$};

\node() at (4,-6) {$Z_7$};

\end{tikzpicture}
    \end{center}
\end{minipage}\hfill
\begin{minipage}[t]{\dimexpr0.33\textwidth-0.5\Colsep\relax}
    \begin{center}
\begin{tikzpicture}[scale=0.3]

\draw[color=gray] (0,6) -- (12,-6);

\draw[line width=0.5mm] (1,5) -- (5.5,5); \draw[line width=0.5mm]
(1,6) -- (1,5); \draw[line width=0.2mm] (1,5) -- (11.5,5);

\draw[line width=0.5mm] (3,3) -- (4,3); \draw[line width=0.5mm]
(3,4) -- (3,3); \draw[line width=0.2mm] (3,5.5) -- (3,3);

\draw[line width=0.5mm] (5,1) -- (6,1); \draw[line width=0.5mm]
(5,5.5) -- (5,1); \draw[line width=0.2mm] (5,1) -- (11.5,1);

\draw[line width=0.5mm] (7,-1) -- (11.5,-1); \draw[line
width=0.5mm] (7,0) -- (7,-1); \draw[line width=0.2mm] (7,5.5) --
(7,-1);

\draw[line width=0.5mm] (9,-3) -- (10,-3); \draw[line width=0.5mm]
(9,-2) -- (9,-3); \draw[line width=0.2mm] (9,-3) -- (11.5,-3);

\draw[line width=0.5mm] (11,-5) -- (12,-5); \draw[line
width=0.5mm] (11,-0.5) -- (11,-5); \draw[line width=0.2mm]
(11,5.5) -- (11,-5);

%\node() at (1,4.5) {$L_1$}; \node() at (3,2.5) {$L_2$}; \node() at
%(5,0.5) {$L_3$}; \node() at (7,-1.5) {$L_4$}; \node() at (9,-3.5)
%{$L_5$}; \node() at (11,-5.5) {$L_6$};

\node() at (2.7,2.2) {\small $x$}; \node() at (8.7,-3.8) {\small
$y$};

\node() at (5,-7) {$Z_8$};

\end{tikzpicture}
    \end{center}
\end{minipage}\hfill
\begin{minipage}[t]{\dimexpr0.33\textwidth-0.5\Colsep\relax}
    \begin{center}
\begin{tikzpicture}[scale=0.3]

\draw[color=gray] (0,6) -- (12,-6);

\draw[line width=0.5mm] (1,5) -- (7.5,5); \draw[line width=0.5mm]
(1,6) -- (1,5); \draw[line width=0.2mm] (1,5) -- (11.5,5);

\draw[line width=0.5mm] (3,3) -- (4,3); \draw[line width=0.5mm]
(3,4) -- (3,3); \draw[line width=0.2mm] (3,5.5) -- (3,3);
\draw[line width=0.2mm] (3,3) -- (6.5,3);

\draw[line width=0.5mm] (5,1) -- (11.5,1); \draw[line width=0.5mm]
(5,2) -- (5,1); \draw[line width=0.2mm] (5,5.5) -- (5,1);

\draw[line width=0.5mm] (7,-1) -- (8,-1); \draw[line width=0.5mm]
(7,5.5) -- (7,-1); \draw[line width=0.2mm] (7,-1) -- (11.5,-1);

\draw[line width=0.5mm] (9,-3) -- (10,-3); \draw[line width=0.5mm]
(9,-2) -- (9,-3); \draw[line width=0.2mm] (9,-3) -- (11.5,-3);
\draw[line width=0.2mm] (9,0.5) -- (9,-3);

\draw[line width=0.5mm] (11,-5) -- (12,-5); \draw[line
width=0.5mm] (11,1.5) -- (11,-5); \draw[line width=0.2mm] (11,5.5)
-- (11,-5);

%\node() at (1,4.5) {$L_1$}; \node() at (3,2.5) {$L_2$}; \node() at
%(5,0.5) {$L_3$}; \node() at (7,-1.5) {$L_4$}; \node() at (9,-3.5)
%{$L_5$}; \node() at (11,-5.5) {$L_6$};

\node() at (2.7,2.2) {\small $x$}; \node() at (8.7,-3.8) {\small
$y$};

\node() at (5,-7) {$Z_9$};

\end{tikzpicture}
    \end{center}
\end{minipage}%
\end{minipage}
\caption{Schematic \Lr-models of patterns $Z_7$, $Z_8$, and $Z_9$.
The light parts are optional, according to the undecided edges of
the trigraph. In each of the cases, vertices labeled as $x$ and
$y$ violate the blocking property.}\label{fig:L-pat}
\end{figure}

By combining results
from~\cite{Habib-patterns,H-H-circ2,H-M-R-orders}, we have the
following two characterization theorems. They show, among other
equivalences, that (proper) independent 2-thin graphs are
equivalent to (proper) interval bigraphs, respectively.

\begin{theorem}\label{thm:char-ind-2-thin}
Let $G$ be a graph. The following statements are equivalent:
\begin{itemize}
\item[$(i)$] $G$ is an independent 2-thin graph. \item[$(ii)$] $G$
is an interval bigraph. \item[$(iii)$] $G$ is bipartite and $G
\in$ \textsc{BicolOrd}$(\{Q_1,Q_2\})$. \item[$(iv)$] $G$ is
bipartite and $G \in$ \textsc{BiOrd}$(\{R_2,R_3\})$. \item[$(v)$]
$G \in$ \textsc{Ord}$(\{Z_5,Z_6,Z_9\})$.
\end{itemize}
\end{theorem}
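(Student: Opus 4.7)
The plan is to reduce the five equivalences to our Lemmas~\ref{lem:D-thin} and~\ref{lem:D-pat} together with the pattern characterizations of interval bigraphs and of bipartite graphs already available in the literature. Specifically, I will establish $(i)\Leftrightarrow(iv)$ from our lemmas, cite $(ii)\Leftrightarrow(iii)$ from~\cite{H-H-bigraphs}, translate between the two pattern formalisms for $(iii)\Leftrightarrow(iv)$ along the lines of~\cite{H-M-R-orders}, and close the cycle with a separate treatment of $(v)\Leftrightarrow(i)$.

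For $(i)\Rightarrow(iv)$, note that a partition of $V(G)$ into two independent sets is a bipartition, so $G$ is bipartite. Letting ${<}$ be the consistent total order, its restriction to each $V^i$ forms a partial order of which ${<}$ itself is a topological extension; by Lemma~\ref{lem:D-thin} the digraph $D(G,\{V^1,V^2\},{<})$ is acyclic, and Lemma~\ref{lem:D-pat} translates this acyclicity into avoidance of $R_2$ and $R_3$ in the induced bipartite ordering, establishing $(iv)$. The converse $(iv)\Rightarrow(i)$ reverses this: from a bipartite ordering avoiding $\{R_2,R_3\}$, Lemma~\ref{lem:D-pat} yields acyclicity of the associated digraph, Lemma~\ref{lem:D-thin} extracts a consistent total order, and the bipartition provides the partition into independent sets. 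The equivalence $(ii)\Leftrightarrow(iii)$ is the theorem of Hell and Huang~\cite{H-H-bigraphs}. For $(iii)\Leftrightarrow(iv)$, restricting a bicolored ordering avoiding $\{Q_1,Q_2\}$ to each color class yields a bipartite ordering avoiding $\{R_2,R_3\}$, and conversely two side-orderings can be merged into a compatible bicolored ordering by the argument of~\cite{H-M-R-orders}.

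The principal obstacle is $(v)\Leftrightarrow(i)$, because the ordering supplied by $(v)$ is a linear order on $V(G)$ without any declared partition. Avoidance of $P_5$ forces $G$ to be bipartite by Habib's characterization~\cite{Habib-patterns}, pinning down the bipartition $(V^1,V^2)$. For $(v)\Rightarrow(iv)$, the patterns $P_6$ and $P_9$ are the \emph{interleaved encodings} of the bipartite patterns $R_2$ and $R_3$ into a total order: by enumerating the interleavings of the four (resp.\ six) witness vertices of a putative $R_2$ (resp.\ $R_3$) within the full order, each case either exhibits one of $P_5,P_6,P_9$ directly on those vertices or, combined with a neighbor forced by the assumption that the configuration arises in a bipartite graph, produces an extended witness of $P_6$ or $P_9$. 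For $(i)\Rightarrow(v)$, I would start from a consistent ordering given by the definition of independent 2-thinness and, if necessary, reverse the internal order within one class (or locally permute vertices of the same class) so that the resulting total order avoids $P_5$ by the bipartite choice, $P_6$ and $P_9$ by the same interleaving correspondence in reverse. The delicate point is that the $(v)$-ordering need not coincide with any fixed rule such as class-grouping or left-endpoint ordering of an interval bigraph representation; the careful case analysis matching $P_6$ and $P_9$ back to $R_2$ and $R_3$ across all interleavings is the main technical bottleneck.
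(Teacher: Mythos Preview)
Your treatment of $(i)\Leftrightarrow(iv)$ via Lemmas~\ref{lem:D-thin} and~\ref{lem:D-pat} matches the paper exactly, and citing Hell--Huang for $(ii)\Leftrightarrow(iii)$ is correct. Two differences are worth noting.

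First, for $(iii)\Leftrightarrow(iv)$ you propose a direct translation between the bicolored and bipartite pattern formalisms. This can be made to work, but the paper takes a shorter route: $(i)\Leftrightarrow(iii)$ is immediate from the definition of independent $2$-thinness, since a bicolored order avoiding $Q_1,Q_2$ is literally a vertex order consistent with a partition into two independent sets (the colors), and vice versa. Your detour is unnecessary.

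Second, and more importantly, your plan for $(v)\Leftrightarrow(i)$ misses the key structural observation that trivializes the argument. You propose to enumerate all interleavings of the witness vertices of a putative $R_2$ or $R_3$ inside the total order, which would be a substantial case analysis. The paper avoids this entirely by observing that in a \emph{connected} graph, an order avoiding $P_5=(\{1,2,3\},\{12,23\},\emptyset)$ forces one bipartition class to lie entirely before the other. (If the classes were interleaved, any edge crossing an interleaving point, together with connectivity, would produce a $P_5$.) Once the classes form two consecutive blocks, the correspondence is direct: take $V^1$ in the given order and $V^2$ in the \emph{reversed} order; then an occurrence of $R_2$ (resp.\ $R_3$) in the bipartite ordering is exactly an occurrence of $P_6$ (resp.\ $P_9$) in the total order, and conversely. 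Your mention of ``reverse the internal order within one class'' hints at the right mechanism, but without the block-separation observation you have not explained why the interleaving analysis collapses, and your phrasing ``if necessary'' suggests you do not yet see that the construction for $(iv)\Rightarrow(v)$ is simply: concatenate $V^1$ (in order) with $V^2$ (reversed). The reduction to connected components is also needed and should be stated.
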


\begin{proof}
$(i) \Leftrightarrow (iii)$) It is straightforward from the definition of independent thinness. \\

\noindent
$(ii) \Leftrightarrow (iii)$) It is proved in~\cite{H-H-circ2}. \\

\noindent $(i) \Rightarrow (iv)$) Let $\{V^1,V^2\}$ be a partition
of $V(G)$ into independent sets and $<$ an ordering of $V(G)$ that
is consistent with the partition $\{V^1,V^2\}$. By
Lemma~\ref{lem:D-thin}, $D(G,\{V^1,V^2\},<)$ is acyclic.
By Lemma~\ref{lem:D-pat}, $<$ avoids $R_2$ and $R_3$, so $G \in$ \textsc{BiOrd}$(\{R_2,R_3\})$. \\

\noindent $(iv) \Rightarrow (i)$) Let $\{V^1,V^2\}$ be a
bipartition of $V(G)$ and let $<$ be an order of $V^1$ and of
$V^2$ that avoids $R_2$ and $R_3$. By Lemma~\ref{lem:D-pat},
$D(G,\{V^1,V^2\},<)$ is acyclic. By Lemma~\ref{lem:D-thin}, there
is an ordering of $V(G)$ that is consistent with the partition
$\{V^1,V^2\}$, so $G$ is independent 2-thin.

\noindent $(iv) \Rightarrow (v)$) Let $\{V^1,V^2\}$ be a
bipartition of $G$ and $<$ an ordering of $V^1$ and of $V^2$ that
avoids $R_2$ and $R_3$. Consider the order of $V(G)$ such that
every vertex of $V^1$ precedes every vertex of $V^2$, $V^1$ is
ordered according to $<$ and $V^2$ is ordered according to the
reverse of $<$. This order avoids $Z_5$ because every edge has an
endpoint in $V^1$ and the other in $V^2$. It also avoids $Z_6$ and
$Z_9$, because otherwise, by the way of defining the ordering of
$V(G)$ and by the edges in the patterns, the first two (resp.
three) vertices of $Z_6$ (resp. $Z_9$) belong to $V^1$, and the
last two (resp. three) to $V^2$. Thus, with the order $<$ of $V^1$
and of $V^2$ the pattern $R_2$ (resp. $R_3$) occurs, which is
a contradiction.\\

\noindent $(v) \Rightarrow (i)$) Since a graph is independent
2-thin if and only if each of its connected components is (see,
for example,~\cite{BGOSS-thin-oper}), we may assume $G$ is
connected and non-trivial. Let $\{V^1,V^2\}$ be the bipartition of
$V(G)$ and let $<$ be an order of $V^1 \cup V^2$ that avoids
$Z_5$, $Z_6$, and $Z_9$. Since the graph is connected and the
order avoids $Z_5$, either every vertex of $V^1$ precedes every
vertex of $V^2$, or every vertex of $V^2$ precedes every vertex of
$V^1$. We may assume the first case. Consider $V^1$ ordered
according to $<$ and $V^2$ ordered according to the reverse of
$<$. We will call this partial order $<'$. Since $<$ avoids $Z_6$
and $Z_9$, $G$ ordered according to $<'$ avoids $R_2$ and $R_3$.
By Lemma~\ref{lem:D-pat}, $D(G,\{V^1,V^2\},<')$ is acyclic. By
Lemma~\ref{lem:D-thin}, there is an ordering of $V(G)$ that is
consistent with the partition $\{V^1,V^2\}$, so $G$ is independent
2-thin.
\end{proof}

Even cycles of length at least 6 are bipartite and 2-thin but not
interval bigraphs~\cite{Muller-dig}, so not independent 2-thin
graphs.

Since interval bigraphs can be recognized in polynomial
time~\cite{Muller-dig,Raf-rec-int-big}, we have the following.

\begin{corollary}\label{cor:indep-2-thin-rec}
Independent 2-thin graphs can be recognized in polynomial time.
\end{corollary}

\begin{theorem}\label{thm:char-prop-ind-2-thin}
Let $G$ be a graph. The following statements are equivalent:
\begin{itemize}
\item[$(i)$] $G$ is a proper independent 2-thin graph.
\item[$(ii)$] $G$ is a proper interval bigraph. \item[$(iii)$] $G$
is a bipartite permutation graph. \item[$(iv)$] $G$ is bipartite
and $G \in$ \textsc{BicolOrd}$(\{Q_1,Q_2,Q_3,Q_4\})$. \item[$(v)$]
$G$ is bipartite and $G \in$ \textsc{BiOrd}$(\{R_1,R_2\})$.
\item[$(vi)$] $G \in$ \textsc{Ord}$(\{Z_3,Z_4\})$.
\end{itemize}
\end{theorem}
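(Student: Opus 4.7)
The plan is to assemble the six equivalences by combining known results with two direct arguments, the second one parallel to the corresponding part of the proof of Theorem~\ref{thm:char-ind-2-thin}. The equivalences $(ii)\Leftrightarrow(iii)$, $(ii)\Leftrightarrow(v)$, and $(iii)\Leftrightarrow(vi)$ are already in the literature: proper interval bigraphs coincide with bipartite permutation graphs by~\cite{H-H-circ2}, with \textsc{BiOrd}$(\{R_1,R_2\})$ by~\cite{H-M-R-orders}, and bipartite permutation graphs coincide with \textsc{Ord}$(\{P_3,P_4\})$ by~\cite{Habib-patterns}. Hence $(ii),(iii),(v),(vi)$ are mutually equivalent by citation.

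Next, $(i)\Leftrightarrow(iv)$ is immediate from the definition of proper independent $2$-thinness. Using the two independent classes $V^1,V^2$ as the two colors of the bicolored patterns, every edge is bichromatic (so $G$ is bipartite) and the consistency requirement of the forward ordering with the partition forbids exactly the bicolored realizations $Q_1$ and $Q_2$ of $P_1$ (two vertices in a common class preceding a vertex of the other class that is adjacent to the first but not the second), while the consistency requirement of the reverse ordering forbids exactly the bicolored realizations $Q_3$ and $Q_4$ of $P_2$. Monochromatic realizations cannot occur because each class is independent, and bichromatic realizations of $P_1$ or $P_2$ with the two same-class vertices not being the first two in the order impose no consistency constraint.

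To close the chain I would establish $(i)\Leftrightarrow(v)$ via Lemmas~\ref{lem:D-pthin} and~\ref{lem:D-pat}. For $(i)\Rightarrow(v)$, given a partition $\{V^1,V^2\}$ into independent sets and a strongly consistent ordering $<$, the restriction of $<$ to each $V^j$ is vacuously strongly consistent (there are no intra-class edges), so by Lemma~\ref{lem:D-pthin} the digraph $\tilde{D}(G,\{V^1,V^2\},<)$ is acyclic, and Lemma~\ref{lem:D-pat} then forces $G[V^1,V^2]=G$ ordered by $<$ to avoid the bipartite patterns $R_1$ and $R_2$. For $(v)\Rightarrow(i)$, starting from a bipartition $(V^1,V^2)$ and orderings of the two parts that jointly avoid $R_1$ and $R_2$, I view them together as a partial order $<$ of $V(G)$ that is total and vacuously strongly consistent on each part; Lemma~\ref{lem:D-pat} then yields acyclicity of $\tilde{D}(G,\{V^1,V^2\},<)$, and Lemma~\ref{lem:D-pthin} produces a total ordering strongly consistent with $\{V^1,V^2\}$, witnessing $(i)$.

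The main technical subtlety is the ``no isolated vertices'' hypothesis in the stronger form of Lemma~\ref{lem:D-pat}: without it, the patterns $R_4$ and $R_4'$ would also need to be forbidden. I would dispatch this by reducing to the case where $G[V^1,V^2]$ has no isolated vertices, observing that isolated vertices may be freely deleted from and reinserted into $G$ without affecting either proper independent $2$-thinness (any triple involving an isolated vertex trivially satisfies every consistency condition, since the required edge never exists) or membership in \textsc{BiOrd}$(\{R_1,R_2\})$ (place them at the end of their side of the bipartition, where they create no bipartite pattern because all such patterns require at least one incident edge).
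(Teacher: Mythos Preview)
Your proposal is correct and follows the paper's proof almost exactly: the same citations for $(ii)\Leftrightarrow(iii)\Leftrightarrow(v)\Leftrightarrow(vi)$, the same direct reading of the definition for $(i)\Leftrightarrow(iv)$, and the same use of Lemmas~\ref{lem:D-pthin} and~\ref{lem:D-pat} for $(i)\Leftrightarrow(v)$. The only difference is in discharging the no-isolated-vertices hypothesis of Lemma~\ref{lem:D-pat}: the paper reduces to connected non-trivial components (citing~\cite{BGOSS-thin-oper-arxiv}), whereas you delete isolated vertices and reinsert them afterwards.

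One small imprecision in your reinsertion step: the parenthetical ``any triple involving an isolated vertex trivially satisfies every consistency condition'' is not true for arbitrary placement---if the isolated vertex is the middle vertex $v_s$ of a triple $v_r<v_s<v_t$ with $v_r$ in the same class and $v_tv_r\in E(G)$, consistency would demand $v_tv_s\in E(G)$, which fails. You should specify (as you do for the \textsc{BiOrd} side) that each isolated vertex is appended at the end of the global order, so it can only appear as $v_t$; then both the forward and the reverse consistency conditions are vacuous because the hypothesis edge is absent.
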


\begin{proof}
$(i) \Leftrightarrow (iv)$) It is straightforward from the definition of proper independent thinness. \\

\noindent
$(ii) \Leftrightarrow (iii)$) It is proved in~\cite{H-H-circ2}. \\

\noindent $(ii) \Leftrightarrow (v)$) It is proved in~\cite{H-M-R-orders}.\\

\noindent
$(iii) \Leftrightarrow (vi)$) It is proved in~\cite{Habib-patterns}. \\

\noindent $(i) \Rightarrow (v)$) Let $\{V^1,V^2\}$ be a partition
of $V(G)$ into independent sets and $<$ an ordering of $V(G)$ that
is strongly consistent with the partition $\{V^1,V^2\}$. By
Lemma~\ref{lem:D-pthin}, $\tilde{D}(G,\{V^1,V^2\},<)$ is acyclic.
By Lemma~\ref{lem:D-pat}, $<$ avoids $R_1$ and $R_2$, so $G \in$ \textsc{BiOrd}$(\{R_1,R_2\})$. \\

\noindent $(v) \Rightarrow (i)$) Since a graph is proper
independent 2-thin if and only if each of its connected components
is (see, for example,~\cite{BGOSS-thin-oper}), we may assume $G$
is connected and non-trivial. Let $\{V^1,V^2\}$ be the bipartition
of $V(G)$ and let $<$ be an order of $V^1$ and of $V^2$ that
avoids $R_1$ and $R_2$. By Lemma~\ref{lem:D-pat},
$\tilde{D}(G,\{V^1,V^2\},<)$ is acyclic. By
Lemma~\ref{lem:D-pthin}, there is an ordering of $V(G)$ that is
strongly consistent with the partition $\{V^1,V^2\}$, so $G$ is
proper independent 2-thin.
\end{proof}

The bipartite claw (the subdivision of $K_{1,3}$) is bipartite and
proper 2-thin but not bipartite permutation~\cite{Koehler-thesis},
so not proper independent 2-thin.

Since bipartite permutation graphs can be recognized in linear
time~\cite{H-H-bigraphs,S-B-S-bip-perm}, we have the following.

\begin{corollary}\label{cor:indep-2-pthin-rec}
Proper independent 2-thin graphs can be recognized in linear time.
\end{corollary}

Theorems~\ref{thm:char-ind-2-thin}
and~\ref{thm:char-prop-ind-2-thin} show that, for a (proper)
$k$-thin graph with a partition $V^1,\dots,V^k$ consistent with
some ordering, not only $G[V^i]$ is a (proper) interval graph for
every $1 \leq i \leq k$, but also $G[V^i,V^j]$ is a (proper)
interval bigraph for every $1 \leq i,j \leq k$.

%Escribir bien el corolario de caracterización de 2-thin y prop 2-thin que le conté a Martín, con patrones de biord, etc

\section{Thinness and other width parameters}\label{sec:width}

Given a graph $G$, the \emph{pathwidth} $\pw(G)$ (resp.
\emph{bandwidth} $\bw(G)$) may be defined as one less than the
maximum clique size in an interval (resp. \emph{proper} interval)
supergraph of $G$, chosen to minimize its clique
size~\cite{K-S-bw}. It was implicitly proved
in~\cite{M-O-R-C-thinness} that
$$\indthin(G) \leq \pw(G) +1.$$ We will reproduce in Theorem~\ref{thm:ind-bw} the
proof, emphasizing the independence of the classes defined. A
characterization in~\cite{K-S-bw} of the bandwidth as a
\emph{proper pathwidth} allows to mimic the proof
in~\cite{M-O-R-C-thinness} and prove that $$\indpthin(G) \leq
\bw(G) +1.$$ This bound can be further improved for proper
thinness. We use a third equivalent definition of bandwidth,
namely $$\bw(G) = \min_f \max\{\,|f(v_{i})-f(v_{j})|:v_{i}v_{j}\in
E\,\}$$  for $f: V(G) \to \mathbb{Z}$ an injective labeling, and
Corollary~\ref{cor:thin-comp-order} to prove the following.

\begin{theorem}\label{thm:bw}
%Let $G$ be a graph with at least one edge. Then $\pthin(G) \leq \bw(G)$.
Let $G$ be a graph. Then $\pthin(G) \leq \max\{1,\bw(G)\}$.
\end{theorem}

\begin{proof}
Suppose on the contrary that $\bw(G) \geq 1$, i.e., $G$ has at
least one edge, and $\pthin(G) > \bw(G)$. By
Corollary~\ref{cor:thin-comp-order}, for every vertex order $<$ of
$G$, there is a clique of size $\bw(G)+1$ in $\tilde{G}_<$. Let
$f$ be a labeling of $V(G)$ realizing the bandwidth and $<$ be the
order induced by $f$. Suppose $v_1 < v_2 < \dots < v_b < v_{b+1}$
is a clique of $\tilde{G}_<$, where $b = \bw(G)$. As $v_1v_{b+1}
\in E(\tilde{G}_<)$, there exists either $v_0$ such that $v_0 <
v_1$, $v_0v_{b+1} \in E(G)$, and $v_0v_1\not \in E(G)$, or
$v_{b+2}$ such that $v_{b+2}
> v_{b+1}$, $v_1v_{b+2} \in E(G)$, and $v_{b+1}v_{b+2}\not
\in E(G)$. In the first case, $|f(v_{b+1}) - f(v_{0})| =
f(v_{b+1}) - f(v_{0}) \geq b+1$ and, in the second case,
$|f(v_{b+2}) - f(v_{1})| = f(v_{b+2}) - f(v_{1}) \geq b+2-1 =
b+1$. In either case, it is a contradiction with the fact that $f$
realizes the bandwidth $b$. So, $\pthin(G) \leq \bw(G)$.
\end{proof}

This bound can be arbitrarily bad, for example, for the complete
bipartite graphs $K_{n,n}$, that are proper 2-thin and have
unbounded bandwidth. However, it is tight (up to a constant
factor) for
grids~\cite{BBOSSS-thin-coloring-arxiv,C-M-O-thinness-man,Chv-grid}.

As a consequence of Theorem~\ref{thm:bw}, we have the following.

\begin{corollary}
Let $G$ be a connected graph. Then $\pthin(G) \leq |V(G)| -
\diam(G)$, where $\diam(G)$ denotes the diameter of $G$.
\end{corollary}

\begin{proof} It holds easily for graphs with only one vertex. For
graphs with at least two vertices, it holds since for a connected
graph, $\bw(G) \leq |V(G)| - \diam(G)$~\cite{C-C-D-G-bw}.
\end{proof}

A \emph{path decomposition}~\cite{R-S-minors1-pw} of a graph $G =
(V,E)$ is a sequence of subsets of vertices $(X_1,X_2, \dots,
X_r)$ such that

\begin{itemize}
\item[(1.)] $X_1 \cup \dots \cup X_r = V$.

\item[(2.)] For each edge $vw \in E$, there exists $i \in \{1,
\dots, r\}$, such that both $v$ and $w$ belong to $X_i$.

\item[(3.)] For each $v \in V$ there exist $s(v), e(v) \in \{1,
\dots, r\}$, such that $s(v) \leq e(v)$ and $v \in X_j$ if and
only if $j \in \{s(v),s(v)+1, \dots, e(v)\}$.
\end{itemize}

The width of a path decomposition $(X_1,X_2, \dots, X_r)$ is
defined as $\max_i |X_i| - 1$. The \emph{pathwidth} of a graph $G$
is the minimum possible width over all possible path
decompositions of $G$.

A \emph{proper path decomposition}~\cite{K-S-bw} of a graph $G =
(V,E)$ is a path decomposition that additionally satisfies

\begin{itemize}
\item[(4.)] For every $u, v \in V$, $\{s(u),s(u)+1, \dots, e(u)\}
\not \subset \{s(v),s(v)+1, \dots, e(v)\}$.
\end{itemize}

The \emph{proper pathwidth} of a graph $G$ is the minimum possible
width over all possible proper path decompositions of $G$. Kaplan
and Shamir~\cite{K-S-bw} proved that the proper pathwidth of a
graph equals its bandwidth.

\begin{theorem}\label{thm:ind-bw} For a graph $G$,
$\indthin(G) \leq \pw(G)+1$ and $\indpthin(G) \leq \bw(G)+1$.
\end{theorem}

\begin{proof} (slight modification of the one in~\cite{M-O-R-C-thinness}) Consider an optimal
(proper) path decomposition $(X_1,X_2, \dots, X_r)$ of width $q$.
Let $k = q + 1$ be the cardinality of the biggest set. We
demonstrate that the graph is (proper) independent $k$-thin. We
first describe an ordering and then give a description of how we
can assign the vertices to $k$ classes, in order to get a
partition into $k$ independent sets which is (strongly) consistent
with the ordering.

We order the vertices $v$ according to $s(v)$, breaking ties
arbitrarily. Notice that, in the proper case, by~(4.), if
$s(v)=s(w)$ then also $e(v)=e(w)$, and if $s(v) < s(w)$ then also
$e(v) < e(w)$.

We assign the vertices in each $X_i$ to pairwise distinct classes,
for $1 \leq i \leq r$. We can do it for $i=1$, since $|X_1| \leq
k$. On each step, for $i > 1$, we assign the vertices of $X_{i}
\setminus X_{i-1}$ to pairwise distinct classes that are not used
by the vertices in $X_{i} \cap X_{i-1}$. This can be done because
$|X_i| \leq k$.

No vertex is assigned to more than one class because of
condition~(3.). The classes are independent sets because of
condition~(2.).

Let $v \in V(G)$, with $s(v)=i$. Then all neighbors $u$ of $v$
smaller than $v$ are also present in $X_i$, i.e., $e(u) \geq i$.
Suppose $u < z < v$, $uv \in E(G)$, $u, z$ in the same class. Note
that if $z$ and $u$ are in the same class, there is no subset
$X_k$ such that $z, u$ are both in $X_k$. Thus $u < z$ tells us
that $s(u) < s(z)$. Also, $z < v$ tells us that $s(z) \leq s(v)$.
But by the claim above, $e(u) \geq s(v)$, since $u$ is a neighbor
of $v$ smaller than $v$. Thus, by~(3.), $u \in X_{s(z)}$, a
contradiction with the fact that no set contains two vertices of
the same class. Thus, the partition and the ordering of $V(G)$ we
obtained are consistent. It follows that $\indthin(G) \leq \pw(G)
+ 1$.

For a (proper) path decomposition, the reverse of the order
defined is a decreasing order by $e(v)$. So the same argument
proves that, in that case, the partition and the ordering of
$V(G)$ we obtained are strongly consistent. It follows that
$\indpthin(G) \leq \bw(G) + 1$.
\end{proof}

These bounds are tight, respectively, for interval and proper
interval graphs, where both the (proper) independent thinness and
the (proper) pathwidth plus one equal the clique number of the
graph, but can be arbitrarily bad, for example, for complete
bipartite graphs, that are proper independent 2-thin but have
unbounded pathwidth.

A recent result shows that the proper independent thinness is also
bounded above by a function of the pathwidth.

\begin{theorem}\cite{B-B-M-P-convex-wads}\label{thm:ind-p-pw} For a graph $G$,
$\indpthin(G) \leq 2^{\pw(G)}(\pw(G)+1)$.
\end{theorem}

\section*{Acknowledgements}

We are deeply grateful to the anonymous referees, who helped us to
significantly improve the presentation of our work. This work was
partially supported by CONICET (PIP 11220200100084CO), ANPCyT
(PICT-2021-I-A-00755), UBACyT (20020170100495BA and
20020160100095BA), and MATHAMSUD Regional Program MATH190013.

%\bibliographystyle{plain-abbr}
%\bibliography{bnm-j,bnm}

\end{document}